\newcommand\independent{\protect\mathpalette{\protect\independenT}{\perp}}
\def\independenT#1#2{\mathrel{\rlap{$#1#2$}\mkern2mu{#1#2}}}
\setlist[enumerate]{leftmargin=.5in}
\setlist[itemize]{leftmargin=.5in}
\newcommand{\bs}[1]{\boldsymbol{#1}}
\definecolor{brickred}{rgb}{0.42, 0.22, 0.10}
\newtheorem{Th}{Theorem}[section]
\newtheorem{Lemma}[Th]{Lemma}
\newtheorem{Cor}[Th]{Corollary}
\newtheorem{Def}[Th]{Definition}
\newtheorem{Rem}[Th]{Remark}
\newtheorem{?}[Th]{Problem}
\newtheorem{Ass}[Th]{Assumption}
\def\R{{\mathbb R}}
\def\N{{\mathbb N}}
\def\E{{\mathbb E}}
\def\R{{\mathbb R}}
\def\P{{\mathbb P}}
\def\aetc{{\text {aetcd}}}
\def\di{{\text{dir}}}
\def\e{{\varepsilon}}
\def\iid{\text{iid}}
\def\ex{{\text{epr}}}
\def\li{{\text{Lip}}}
\def\ext{{\text{ept}}}
\def\opt{{\text{opt}}}
\def\supp{{\text{supp}}}
\DeclareMathOperator*{\argmin}{arg\,min}
\def\bt{{\widehat{\beta}}}
\def\sg{{\widehat{\sigma}}}
\begin{document}
\title{Budget-limited distribution learning in multifidelity problems}
\date{}

\author[1,2]{\small Yiming Xu\thanks{Mailing address: 155 1400 E, Salt Lake City, UT 84112}}
\author[1,2]{\small Akil Narayan}
\affil[1]{\small Department of Mathematics, University of Utah, Salt Lake City, UT, USA\\
       \texttt{yxu@math.utah.edu}}
\affil[2]{\small Scientific Computing and Imaging Institute, University of Utah, Salt Lake City, UT, USA\\
       \texttt{akil@sci.utah.edu}}
\renewcommand\Authands{and}
  
\maketitle

\begin{abstract}
Multifidelity methods are widely used for estimating quantities of interest (QoI) in computational science by employing numerical simulations of differing costs and accuracies.
Many methods approximate numerical-valued statistics that represent only limited information, e.g., scalar statistics, about the QoI. 
Further quantification of uncertainty, e.g., for risk assessment, failure probabilities, or confidence intervals, requires estimation of the full distributions.
In this paper, we generalize the ideas in [Xu et al., \emph{SIAM J. Sci. Comput.} 44.1 (2022),
A150–A175] to develop a multifidelity method that approximates the full distribution of scalar-valued QoI. The main advantage of our approach compared to alternative methods is that we require no particular relationships among the high and lower-fidelity models (e.g. model hierarchy), and we do not assume any knowledge of model statistics including correlations and other cross-model statistics before the procedure starts. 
Under suitable assumptions in the framework above, we achieve provable 1-Wasserstein metric convergence of an algorithmically constructed distributional emulator via an exploration-exploitation strategy.
We also prove that crucial policy actions taken by our algorithm are budget-asymptotically optimal.
Numerical experiments are provided to support our theoretical analysis.
\end{abstract}

\begin{keyword}
multifidelity, Wasserstein distance, sequential decision-making, empirical measure, linear regression 
\end{keyword}

\section{Introduction}

Estimation of output QoI from complex and large-scale simulations is an important task in many areas of computational science.  
A concrete example is a forward uncertainty quantification setup, where the QoI is an output of a physical system subject to modeled randomness/uncertainty, and the goal is to identify the typical behavior of the QoI by computing its expectation. 
A universal solution for this specific task is through Monte Carlo (MC) simulation \cite{Hammersley64}, which in practice requires many repeated evaluations of an accurate forward model, and can be computationally infeasible for expensive models. 

A modern collection of approaches that address this computational challenge is the suite of multifidelity methods \cite{Peherstorfer_2018_survey}. 
Instead of operating on a single (high-fidelity) model alone, multifidelity methods combine several models of different accuracies and costs to accelerate computation. Low-fidelity models are less accurate but also inexpensive simulations, for example, resulting from numerical solvers of parametric partial differential equations (PDEs) using coarse discretizations.
More generally, the lower fidelity models used in multifidelity methods arise from simplification or reduction of the high-fidelity model and thus are cheaper but less accurate. 
However, they often contain information that, if utilized properly, can contribute to characterizing QoI.

A prototypical example of a multifidelity method is the multilevel approach \cite{giles2008multilevel,Peherstorfer_2016,Gorodetsky_2020}, which approximates the expectation of a scalar-valued QoI given by the high-fidelity model. 
Leveraging a telescoping sum using hierarchical models, multilevel estimators make use of cross-model correlations to attain a smaller variance compared to a single-model MC estimator.
Recent work has introduced a more general perspective for many existing methods within the multilevel framework, and provided a way to realize the optimal variance reduction among all linear unbiased estimators \cite{Schaden_2020, schaden2020asymptotic}. 
Multilevel estimators are considered universal in the sense that they rely only on the covariance information of models, which is used as an input for the estimator construction.   
A similar approach has recently been developed in \cite{xu2021bandit} that assumes only a linear model assumption but no a priori knowledge of covariance statistics. 

Since multifidelity methods have been so successfully applied to the parametric estimation of QoI, it is natural to ask if it is possible to extend the same technique to also characterize their distributions or equivalent statistics such as characteristic functions. 
This question has been studied in recent works \cite{Giles_2015, giles2017adaptive, Lu_2016, Krumscheid_2018}.
The major application scenario under consideration is hierarchical models where special relationships between pairs of models of different fidelities are leveraged. 
For more general non-hierarchical multifidelity setups, paradigms for efficiently learning distributions of QoI are absent to our knowledge, which is a gap that this paper seeks to fill. 

Learning a full distribution is possible when independent and identically distributed (i.i.d.) samples are available.
Given enough samples, universal approaches using empirical cumulative distribution functions (CDFs) can be employed to estimate the true distribution. 
However, the general nonparametric nature of this approach is balanced by the slow convergence rates, which considerably limits the direct usage in applications where sampling is costly. In this paper, we employ such non-parametric estimators but ameliorate the cost using multifidelity strategies. Alternative distributional models in statistics are parametric, limiting the space of expressible distributions but enabling the use of classical tools such as the maximum likelihood estimation. However, realistic models are often so complex that it is difficult to identify an appropriately expressive parametric family, so we focus on a non-parametric strategy.

\subsection{Contributions of this paper}\label{ssec:contributions}

Our approach is based on the Explore-Then-Commit (ETC) algorithm in bandit learning \cite{lattimore2020bandit, bubeck2012regret}, 
which partitions computational procedures into an \textit{exploration} phase, where models are sampled to learn information about their relationships, and an \textit{exploitation} phase, where the learned information guides the development and execution of a strategy that builds a predictor. Using such exploration/exploitation ideas to design adaptive algorithms in multifidelity estimation is not completely new; see \cite{peherstorfer2019multifidelity} for a procedure that estimates the mean, and \cite{farcas2020context} for methods that build reduced models in the multifidelity context. These methods and others like them make substantially stronger assumptions than we do; in particular assumptions regarding hierarchical relationships between models and certain \textit{a priori} knowledge of correlations and behavior of model costs\footnote{For example, requiring that models with higher correlation relative to the high-fidelity model should also incur higher cost is one such model cost behavior.} is common and ubiquitous. Our approach and setup are more relevant to \cite{xu2021bandit}, where no model relationships are assumed (in particular, no hierarchy need to be provided), no knowledge of correlations is provided, and no particular behavior of model costs is required.
In the initial exploration phase, we learn about interactions between the high and lower fidelity models and construct a linear regression emulator for the high-fidelity output using a selection of inexpensive low-fidelity models. The exploration phase terminates after an adaptively-identified budget investment and is followed by the exploitation phase that expends the remaining budget to construct an empirical estimator for the regression emulator that approximates the full distribution of the high-fidelity model output. 

Our approach, therefore, leverages models of different fidelities and costs as well as various statistical procedures to produce an efficient estimator for the unknown distribution of the high-fidelity QoI, which would be difficult and costly to estimate directly.
Our procedure does not require any hierarchy or relationships between models and does not require \textit{a priori} knowledge of any model or cross-model statistics. 

From the statistical point of view, our approach could be viewed as ``semi-parametric'' (though this terminology should not be conflated with notions of semi-parametric regression analyses \cite{Robinson_1988}). 
That is, while we do not prescribe a target family of parametric distributions for the high-fidelity model $Y$ and we learn a non-parametric distribution for a regression residual discrepancy term, we do express the distribution of $Y$ parametrically with respect to the (unknown) distributions of low-fidelity models $X_i$, $i \in \{1, \ldots, n\}$.

In summary, our contributions in this article are twofold:

\begin{itemize}[itemsep=0pt]
  \item We introduce an adaptive sequential decision-making algorithm, ``AETC-d'', Algorithm \ref{alg:aETC}, that produces an estimator for the full distribution of a high-fidelity output given a prescribed budget. This estimator is built from low-fidelity models that have no prescribed hierarchy or relationships and the algorithm proceeds without initial knowledge of any statistics. (Cf. the approaches in \cite{Giles_2015, giles2017adaptive, Lu_2016, Krumscheid_2018,peherstorfer2019multifidelity} that require more restrictive assumptions.)
  \item We prove almost sure convergence of the algorithm-produced distributional estimator in the mean 1-Wasserstein metric for a large budget. (See Theorem \ref{optimality} and Corollary \ref{cort}.) We also show almost sure optimality guarantees for particular exploration decisions made by the adaptive algorithm. (Cf. the algorithm in \cite{xu2021bandit} which is less efficient in exploration.)
\end{itemize}

We provide numerical examples that establish the efficacy of our approach. The strength of our theoretical guarantees (full distributional convergence with initial ignorance about model relationships or statistics) comes at the cost of certain technical assumptions. These are concretely described in Section \ref{ssec:assumptions}, with the most stringent one being Assumption \ref{ass:eS-cond}: that the conditional expectation of the high-fidelity model output on the low-fidelity model outputs can be written as a linear function of the latter. However, even in cases when this assumption is violated, we present a concrete strategy that empirically ameliorates this model misspecification; see section \ref{sc}.

We emphasize again that our methodology does \textit{not} require any particular knowledge of the models, i.e., hierarchical/nested structure or specific coupling assumptions, to guarantee convergence. The implementation requires only the identification of a trusted high-fidelity model, the ability to query the models themselves, and the cost of sampling each model relative to the cost of sampling the high-fidelity model.
Our general framework is similar to \cite{xu2021bandit}, but the learning objectives (and hence also the ultimate algorithm and resulting theory) are quite different.

The rest of the paper is organized as follows. 
In Section \ref{ps}, we set up the budget-limited distribution learning problem, and technically describe our assumptions and main results.
In Section \ref{sec:dis}, we briefly review results concerning the convergence of empirical measures under Wasserstein metrics and provide a few technical results for later use. 
In Section \ref{LR:analysis}, we propose an exploration-exploitation strategy for distribution learning and derive an asymptotically informative upper bound for the mean $1$-Wasserstein error of the estimator. 
We then utilize this upper bound in Section \ref{sec:alg} to devise an efficient adaptive algorithm, AETC-d, and establish a trajectory-wise optimality result for it. 
In Section \ref{sec:num}, we provide a detailed numerical study of the AETC-d algorithm, investigating consistency, model misspecification, and optimality of exploration rates. 
In Section \ref{concl}, we conclude by summarizing the main results of the paper.

\section{Problem setup}\label{ps}
\subsection{Notation}
Let $Y, X_1, \ldots, X_n\in\R$ be scalar-valued random outputs associated with the high-fidelity model and $n$ low-fidelity surrogates, respectively.  
Let $c_0$ and $c_i$, $i\in [n]: = \{1, \cdots, n\}$, be the respective cost of sampling $Y$ and $X_i$.  
The costs are assumed known and deterministic. 
No additional assumptions about the accuracy or costs of $X_i$ relative to those of $X_{i+1}$ are assumed in the following discussion. In particular, the index $i$ does not represent an ordering based on cost, accuracy, or hierarchy. 

Many recent advances in multifidelity methods center around the efficient estimation of $\E[Y]$ \cite{giles2008multilevel, Peherstorfer_2016, Gorodetsky_2020, Schaden_2020, schaden2020asymptotic, xu2021bandit}. 
Under appropriate correlation and cost conditions and for a fixed budget, the estimators from these methods are significantly more accurate than the classical MC estimator using i.i.d. samples of $Y$.
For some applications, however, obtaining only precise estimates for parameter means is not sufficient.
For instance, when estimating robust statistics such as the median, when building confidence intervals to quantify estimation uncertainty, or characterizing failure events and estimation of their corresponding probabilities, the full distribution of a QoI containing the complete information of modeled randomness is often required.
Letting $F_Y(y) = \P(Y\leq y)$ be the CDF of $Y$,
we wish to find an efficient estimate for $F_Y$ instead of only certain functionals of it. 
To make use of the low-fidelity models to this end, assumptions on cross-model correlations are not enough for this purpose. 
As opposed to imposing strong hierarchical assumptions on the models,
we introduce an alternative parametric assumption on the relationship between $X_i, i\in [n]$ and $Y$ that allows efficient estimation of $F_Y(y)$ through the $X_i$'s.

\subsection{Linear regression}\label{lr}
A simple yet useful assumption relating $X_i$ and $Y$ is through linear regression. 
For any $S\subseteq  [n]$, we assume that
\begin{align}
&Y = X_S^\top \beta_S + \e_S&X_S = \left(1, (X_i)_{i\in [S]}\right)^\top ,\label{1}
\end{align}
where $X_S$ includes the intercept as the first component, and $\e_S$ is a centered random variable with variance $\sigma_S^2$ (i.e. $\e_S\sim (0, \sigma_S^2))$ that is \emph{independent} of $X_S$.
Note that a similar linear model decomposition as \eqref{1} requiring $\e_S$ only be uncorrelated with $X_S$ always holds provided $Y$ and $X_S$ have bounded second moments, but independence requires more assumptions that we will
describe further in Section \ref{ssec:assumptions}.

\color{black}

The model \eqref{1} provides a way to simulate $Y$ using only samples of $X_S$ and a noise generator for $\e_S$.
Under a fixed budget, sampling from $X_S$ and $\e_S$ may have a lower cost than sampling $Y$ alone under appropriate cost assumptions, potentially leading to an estimator for $F_Y$ with better accuracy. This is the core idea we propose and investigate.

\color{black}

\subsection{Assumptions}\label{ssec:assumptions}

This section codifies some particular assumptions that we make. None of these assumptions are required to employ our main algorithm, but our theoretical convergence guarantees do require all these assumptions. We begin by stating two assumptions that together fully characterize \eqref{1}:

\begin{Ass}\label{ass:eS-cond}
  For every $S \subseteq [n]$, the $X_S$-conditional expectation of $Y$, $\E[Y|X_S]$, is a linear function of $X_S$.
\end{Ass}

\begin{Ass}\label{ass:eS-independent}
  For every $S \subseteq [n]$, $\e_S$ is independent of $X_S$.
\end{Ass}

Although Assumption \ref{ass:eS-cond} initially appears strong, it is not overly restrictive since nonlinear functions of $X_S$ can be added as additional regressors without changing the analysis. 
For example, linear regression of $Y$ on the covariates $X_S$ could be ``extended'' to a more general regression on the covariates $X_S$ as well as quadratic covariates $\mathrm{vec}(X_S  X_S^\top)$.
Since $\E[Y|X_S]$ is a measurable function of $X_S$, it is possible to include a sufficient number of nonlinear terms of $X_S$ to achieve a good approximation of $\E[Y|X_S]$ under appropriate regularity assumptions. 
Nevertheless, identifying an optimal set of such regressors is often problem-dependent, and is not in the scope of this paper. 

Assumption \ref{ass:eS-independent} is more difficult to inspect in practice.
In the context of PDE simulations, if we take $Y$ and $X_S$ to be a collection of functions of discrete solutions associated with, e.g., spatial refinement of a mesh, then $\e_S$ is unlikely to be independent of $X_S$. 
However, the magnitude of $\e_S$ is often relatively small thus exerting little impact on the practical usage of the developed procedure in Section \ref{sec:alg}.  
In general, one can adopt a less restrictive relationship between $\e_S$ and $X_S$ (e.g. $\e_S$ and $X_S$ are uncorrelated) but that would make both simulating $Y$ and the analysis intractable since correlation gives information only about second moments and not the full distribution.
We provide a careful empirical examination of both Assumptions \ref{ass:eS-cond} and \ref{ass:eS-independent} in practice in the numerical results section. 

The remaining assumptions we make are relatively mild, holding for a large class of problems.
\begin{Ass}\label{a1}
  The uncentered second-moment matrix $\Lambda \coloneqq \E [X X^\top]$ is invertible.
\end{Ass}
Assumption \ref{a1} is equivalent to the rather mild requirements that the second moments of $X$ exist, and that there is an event set of positive probability over which the components of $X$ are linearly independent.
Pairing this with the Cauchy interlacing theorem for eigenvalues results in the actual technical statement that we utilize:
\begin{align*}
  \textrm{Assumption \ref{a1}} \Longleftrightarrow \textrm{For every } S\subseteq  [n],\; \Lambda_S: =\E[X_SX_S^\top] \textrm{ is invertible}.
\end{align*}

Our final two assumptions are more technical but are easily satisfied in practice, which we will motivate after presenting the assumptions. First, a random vector $Z\in\R^k$ is a \emph{jointly sub-exponential} random vector, if $\sup_{\|a\|_2=1}\|a^\top Z\|_{\psi_1}<\infty$, where $\|\cdot \|_{\psi_1}$ is the $1$-Orlicz norm \cite{Vershynin_2018}. Equivalently, $Z$ is jointly sub-exponential if there exists $C>0$ such that
\begin{align}
  \sup_{\|a\|_2=1}\P\left(|a^\top Z|>z\right)&\leq 2\exp(-z/C)&\forall z&\geq 0.\label{sub-exp}
\end{align}
Our fourth assumption can now be articulated.
\begin{Ass}\label{a2}
$X = (X_1, \cdots, X_n)$ is jointly sub-exponential in the sense of \eqref{sub-exp}. Moreover, denoting by $F_a(x)$ the CDF of $a^\top X$ for $a\in\R^n$, we assume that $C_\li:=\sup_{\|a\|_2=1}\|F_a\|_\li<\infty$. 
\end{Ass}
We will motivate the sub-exponential assumption later in this section. The $a$-uniform Lipschitz assumption precludes, for example, cases when $F_X$, the multivariate distribution function of $X$, has jump discontinuities in any direction $a$. If $F_X$ is Lipschitz, then this implies the weaker Lipschitz condition above. Our final assumption is as follows:
\begin{Ass}\label{a3}
There is a universal constant $C$ such that, for any $S\subseteq  [n]$, the noise satisfies $|\e_S| \leq C$ almost surely.
\end{Ass}
One specialized situation under which the sub-exponential part in Assumption \ref{a2} and Assumption \ref{a3} are satisfied is when $Y$ and all the components of $X$ are uniformly bounded by a constant almost surely. In this case, setting $\beta_S$ in \eqref{1} as any finite coefficients (e.g. those corresponding to the best linear unbiased estimator) ensures that $\e_S \coloneqq Y - X_S^\top \beta_S$ is also bounded uniformly by a constant, hence satisfying Assumption \ref{a3}. 
This boundedness condition immediately implies that $X$ is jointly sub-exponential, hence satisfying part of Assumption \ref{a2}.

The specialized situation above, i.e., the conditions that $X$ and $Y$ are bounded by a constant almost surely, is quite natural in the setting of well-posed parametric PDEs. For simplicity, we describe the argument for $Y$ only. Let $\mathcal{S} : \mathcal{K} \rightarrow V$ be the solution operator of a parametric PDE, mapping values on some compact subset $\mathcal{K}$ of a Banach space to another Banach space $V$. In a prototypical elliptic PDE setup, $\mathcal{K}$ is a set of diffusion coefficient functions (say uniformly bounded above and below), and $V$ is the Sobolev space $H^1_0(D)$ for some spatial domain $D$. We define the QoI $Y : \mathcal{K} \rightarrow \R$ as $Y(k) = \mathcal{L}(\mathcal{S}(k))$, $k \in \mathcal{K}$, where $\mathcal{L} \in V^\ast$ is an element in the dual space of $V$. The randomness in $Y$ arises from placing a probability measure on $\mathcal{K}$. A common assumption in these settings is the uniform boundedness assumption \cite[Equation (2.20)]{cohen_approximation_2015} for elliptic problems stating that there is a constant $C'$ such that $\|\mathcal{S}(k)\|_V \leq C'$ for every $k \in \mathcal{K}$. Pairing this with the fact that $\mathcal{L}$ is bounded on $V$ implies the desired almost sure uniform bound $|Y| \leq \|\mathcal{L}\|_{V^\ast} C' \eqqcolon C$. Similar arguments hold in the case when $Y$ is computed from a well-posed discretization of $\mathcal{S}$ and $\mathcal{L}$.

In summary, Assumptions \ref{ass:eS-cond} and \ref{ass:eS-independent} are a crucial necessity for our theoretical results, but we show in practice in the numerical results section how violation of either can be ameliorated. The remaining Assumptions \ref{a1}, \ref{a2}, and \ref{a3} hold in many reasonable practical settings as described above.

\subsection{Algorithm overview}\label{ssec:algorithm}
In this section, we summarize the high-level algorithmic idea of this paper. For clarity of presentation, first suppose that \eqref{1} holds and $(\beta_S, \e_S)$ are known \emph{a priori}. Then for some fixed $S\subseteq  [n]$, one may expend a given and fixed total budget $B > 0$ to sample $X_S$ and build an empirical CDF estimator for $F_Y$:
\begin{align}
\frac{1}{N}\sum_{i\in [N]}\bm{1}_{Z_i\leq y}\label{lr:oracle},
\end{align}
where $Z_i$ are i.i.d. random variables sampled according to $X^\top _S\beta_S +\e_S$. That is, one first samples $X_S$ to compute $X^\top _S\beta_S$, and then augments it with an independent noise $\e_S$. Of course, the number of affordable samples $N$ under this model is,
\begin{align}
&N = \left\lfloor\frac{B}{c_\ext(S)}\right\rfloor& c_\ext(S) = \sum_{i\in S}c_i.\label{lr:s:oracle}
\end{align}
As opposed to the direct construction of the empirical CDF from i.i.d. samples of $Y$, the emulator \eqref{lr:oracle} will admit a much larger sampling rate under a fixed budget if $c_\ext(S)\ll c_0$, which can substantially accelerate convergence. Of course, our presentation above elides the real practical challenges of this approach: (i) oracle knowledge of $\beta_S$ is unavailable, (ii) the distribution of $\e_S$ is unknown, and (iii) the ``optimal'' model subset $S$ is unknown. The algorithm we develop overcomes these challenges and accomplishes the construction described above via two phases:

\begin{itemize}[itemsep=0pt,itemindent=1.5cm]
  \item[(Exploration)] A portion of the budget $B$ is expended to explore $X$ and $Y$ in order to learn about the regression coefficients $\beta_S$ and noise $\e_S$, and also to analyze cost-benefit tradeoffs for each model $S$. The budget expended in this phase is adaptively determined.
  \item[(Exploitation)] The remaining budget is expended over a computed optimal model $S^\ast$ identified at the end of Exploration. The distributional estimator for $Y$ is computed as the empirical CDF of $X_{S^\ast}^\top \beta_{S^\ast} + \e_{S^\ast}$ built over the exploitation samples.  We simulate $\e_{S^\ast}$ by bootstrapping the empirical residuals $Y - X_{S^\ast}^\top \beta_{S^\ast}$ from exploration data; see \eqref{boots0} and \eqref{boots} for a concrete description.
\end{itemize}

In philosophy, some of the challenges described above (ignorance of $S$, $\beta_S$) are addressed by the bandit-learning approach proposed in \cite{xu2021bandit}. Indeed, our exploration-exploitation metamodel here is essentially identical to \cite{xu2021bandit}. However, one main advance of this paper is the development of a new loss function to guarantee convergence in a probability metric, whereas direct usage of the algorithm in \cite{xu2021bandit} guarantees only convergence of an estimator for the single-statistic mean.

\subsection{Main results}
We state our main results in more technical terms compared to the descriptions in Section \ref{ssec:contributions}:
\begin{itemize}[itemsep=0pt,itemindent=1.5cm]
  \item[Algorithm \ref{alg:aETC}:] We develop an exploration-exploitation algorithm based on ideas from bandit learning that produces an estimator for the full distribution of $Y$. The algorithm requires as input only the ability to query the models $(X,Y)$, knowledge of the model costs $c_i$, $i \in \{0\}\cup[n]$, and an available computational budget $B$.
  \item[Theorem \ref{optimality}:] Using Algorithm \ref{alg:aETC}, we show that almost surely for a large budget $B$, the computationally chosen model $S^\ast$ coincides with the optimal model that would be identified with access to oracle knowledge. We show similar asymptotic optimality of the resources expended during the exploration stage.
  \item[Corollary \ref{cort}:] We show that Algorithm \ref{alg:aETC} produces a computational emulator for the distribution of $Y$ that budget-asymptotically converges almost surely to the true distribution in the 1-Wasserstein metric.
\end{itemize}
Finally, our numerical results showcase how Algorithm \ref{alg:aETC} outperforms the only competitor we are aware of in this general non-hierarchical multifidelity context: the empirical CDF estimator for $Y$ by expending the full budget $B$ on samples of $Y$.

\section{Distribution metrics and related results}\label{sec:dis}

Before directly addressing the multifidelity problem, this section provides some necessary discussion regarding distribution metrics, empirical measures, and error bounds.

\subsection{Distance between distributions}
The goal of distribution learning is to approximate a probability measure on $\R$. 
In this section, we discuss metrics to measure the discrepancy between $F_Y$ and an estimated distribution. 
For two Borel probability measures $\mu, \nu$ on $\R$, numerous metrics are available to measure discrepancy \cite{Gibbs_2002}, such as the Kolmogorov distance, Wasserstein distances, Kullback--Leibler (KL) divergence, etc. 
In this article, we will focus on Wasserstein metrics, which are defined below.

\begin{Def}[$p$-th Wasserstein distance]
Let $1\leq p< \infty$. The $p$-th Wasserstein distance between two Borel probability measures $\mu, \nu$ on $\R$ is defined as 
\begin{align*}
&W_p(\mu, \nu) = \inf_{\pi}\E_\pi\left[|x-y|^p\right]^{1/p},
\end{align*}
where infimum is taken over all Borel probability measures $\pi$ on $\R^2$ with marginals satisfying $\pi_x \equiv\mu, \pi_y \equiv\nu$. 
\end{Def}

Intuitively, $W_p^p(\mu,\nu)$ corresponds to the minimal amount of work needed to transform $\mu$ to $\nu$, with the cost function given by the $p$-th power of the moving distance. Hence it is frequently labeled as the optimal transport distance. 
A comprehensive study of the subject can be found in \cite{Villani_2003}.  
In the following discussion we are mostly concerned with $p = 1$.

Wasserstein distances are hard to compute in general. But when the metric space is the real line equipped with the Borel algebra, explicit formulas exist using inverse CDFs \cite{cambanis1976inequalities}:
\begin{align*}
W^p_p(\mu, \nu) = \int_0^1\left|F_\mu^{-1}(t)-F_\nu^{-1}(t)\right|^pdt,
\end{align*}  
where $F_\mu^{-1}(t) := \inf\{x\in\R: F_\mu(x)\geq t\}$ is the inverse CDF of $\mu$. 
When $p=1$, $W_1(\mu, \nu)$ is the $L^1(\R)$-norm of $F_\mu(t)-F_\nu(t)$:
\begin{align}
W_1(\mu, \nu) = \int_0^1\left|F_\mu^{-1}(t)-F_\nu^{-1}(t)\right|dt = \int_\R\left|F_\mu(t)-F_\nu(t)\right|dt.\label{W1-rep}
\end{align}  
 
A classical result in optimal transport is the Kantorovich--Rubinstein duality \cite{Villani_2003}, which provides an alternative characterization for $W_1(\mu,\nu)$ using Lipschitz test functions: 
\begin{align}
&W_1(\mu,\nu) = \sup_{\|f\|_\li\leq 1}\left |\int f d\mu - \int f d\nu\right|,\label{dual}
\end{align}
where $\|\cdot \|_\li$ is the Lipschitz constant. 
As a consequence, $W_1(\mu, \nu)$ provides an upper bound for the difference between all linear functionals of $\mu$ and $\nu$ with uniformly bounded Lipschitz coefficients. 

Moreover, assuming $F_\mu$ has a uniformly bounded density $f_\mu$, one can bound the Kolmogorov distance between $\mu$ and $\nu$ by $W_1(\mu, \nu)$ using \eqref{dual} \cite{chatterjee}:
\begin{align}
d_{K}(\mu, \nu) : = \sup_{x\in\R}\left |F_\mu(x)-F_\nu(x)\right |\leq 2\sqrt{\|f_\mu\|_\infty W_1(\mu, \nu)}.\label{dis-Kol}
\end{align}
In the situations of this article, at least one of the measures under comparison has a uniformly bounded density under Assumption \ref{a2}. 
Thus, in what follows we will work with the Wasserstein metrics, i.e., the $W_1$-metric.

\subsection{Convergence of empirical measures}

Our analysis in Section \ref{LR:analysis} relies on sharp convergence rates of one-dimensional empirical measures under the mean $W_p$ metric. 
Given i.i.d. samples of $\mu: Z_1, \ldots, Z_N$, the associated empirical measure is
\begin{align}
\mu_N = \frac{1}{N}\sum_{i\in [N]}\delta_{Z_i},\label{emp}
\end{align}
where $\delta_x$ denotes the Dirac mass at $x$. 
The question is to quantify the average convergence rate for $\E[W^p_p(\mu_N,\mu)]^{1/p}$ at fixed $N$. 
A comprehensive analysis quantifying $\E[W^p_p(\mu_N,\mu)]^{1/p}$ for fixed $N$ can be found in \cite{Bobkov_2019}. 
Here we only collect relevant results to be used later. 

\begin{Lemma}[\cite{Bobkov_2019}]\label{BM}
Suppose $\mu$ is a Borel probability measure on $\R$ with finite $(2+\delta)$-th moment for some $\delta>0$, i.e., $\int |x|^{2+\delta}d\mu(x)<\infty$. Let $\mu_N$ be the empirical measure defined in \eqref{emp}. Then for every $N\geq 1$,
\begin{align}
\frac{J_0(\mu)}{\sqrt{2}}\frac{1}{\sqrt{N}}\leq\E[W_1(\mu_N,\mu)]\leq\frac{J_1(\mu)}{\sqrt{N}},\label{upc}
\end{align} 
where 
\begin{align}
&J_0(\mu): = \int_\R F_{\mu}(x)(1-F_{\mu}(x)) dx&J_1(\mu): = \int_\R \sqrt{F_{\mu}(x)(1-F_{\mu}(x))}dx.\label{J}
\end{align}
\end{Lemma}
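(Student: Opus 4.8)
The plan is to reduce everything to a pointwise-in-$x$ analysis of the empirical CDF via the one-dimensional representation \eqref{W1-rep}. Writing $F_N := F_{\mu_N}$ for the empirical CDF of the samples $Z_1,\dots,Z_N$ and $F := F_\mu$, the identity \eqref{W1-rep} gives $W_1(\mu_N,\mu) = \int_\R |F_N(x)-F(x)|\,dx$. Since the integrand is nonnegative, Tonelli's theorem lets me exchange expectation and integration, so that
$$\E[W_1(\mu_N,\mu)] = \int_\R \E\big|F_N(x)-F(x)\big|\,dx.$$
For each fixed $x$, $N F_N(x) = \sum_{i\in[N]} \mathbf 1_{Z_i\le x}$ is $\mathrm{Binomial}(N,p)$ with $p := F(x)$, so $W(x) := F_N(x)-F(x) = \tfrac1N\sum_{i\in[N]} \xi_i$ is an average of i.i.d. centered bounded variables $\xi_i = \mathbf 1_{Z_i\le x}-p$ with $\E[\xi_i^2] = p(1-p)$. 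The whole problem thus becomes: bound $\E|W(x)|$ above and below by multiples of $\tfrac1{\sqrt N}$ times the correct power of $p(1-p)$, and then integrate in $x$.

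The upper bound is the easy direction. By Jensen's inequality, $\E|W(x)| \le (\E[W(x)^2])^{1/2} = \sqrt{p(1-p)/N}$; integrating in $x$ yields exactly $J_1(\mu)/\sqrt N$. Here the hypothesis that $\mu$ has a finite $(2+\delta)$-th moment is precisely what makes $J_1(\mu)$ finite: a Markov-type tail bound $1-F(x)\lesssim x^{-(2+\delta)}$ forces $\sqrt{F(1-F)}$ to be integrable, and since $F(1-F)\le 1$ on $\R$ one also gets $J_0(\mu)\le J_1(\mu)<\infty$, so both bounds are meaningful.

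The lower bound is the crux. The naive estimate $\E|W|\ge \E[W^2]/\|W\|_\infty$ fails, because $\|W\|_\infty\le 1$ is far too crude and produces the wrong rate $1/N$; the point is that $W(x)$ concentrates at scale $1/\sqrt N$, and capturing this requires a fourth-moment input. I would instead use the Lyapunov/H\"older interpolation $\E[W^2] \le (\E|W|)^{2/3}(\E[W^4])^{1/3}$, equivalently $\E|W|\ge (\E[W^2])^{3/2}/(\E[W^4])^{1/2}$. A direct expansion of the fourth moment of a sum of i.i.d. centered variables gives $\E[W^4] = \tfrac1{N^3}\,p(1-p)\big(1+3(N-2)p(1-p)\big)$, whence $\E|W(x)| \ge p(1-p)/\sqrt{1+3(N-2)p(1-p)}$. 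Using $p(1-p)\le \tfrac14$ to bound the bracket by $2N$ for every $N\ge1$ gives the clean pointwise estimate $\E|W(x)| \ge p(1-p)/\sqrt{2N}$, and integrating in $x$ produces $J_0(\mu)/\sqrt{2N}$, as claimed.

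The only genuinely delicate step is this lower bound: one must resist settling for the variance alone and instead bring in the fourth moment through the interpolation inequality, then carry out the (routine but bookkeeping-heavy) fourth-moment computation for the Bernoulli increments and apply the elementary bound $p(1-p)\le 1/4$ to extract the sharp constant $1/\sqrt2$. Everything else—Tonelli, the binomial identification, and the Jensen upper bound—is standard.
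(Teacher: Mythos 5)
Your proposal is correct, and there is nothing in the paper to compare it against line by line: the paper states Lemma \ref{BM} as an imported result, citing \cite{Bobkov_2019}, and gives no proof of its own. Your argument is a valid self-contained substitute and follows the same strategy as the cited reference: reduce via \eqref{W1-rep} and Tonelli to the pointwise quantity $\E\lvert F_N(x)-F(x)\rvert$, get the upper bound from Cauchy--Schwarz and the binomial variance (with the $(2+\delta)$-moment hypothesis ensuring $J_1(\mu)<\infty$), and get the lower bound from a moment-comparison inequality; your H\"older interpolation $\E[W^2]\le (\E\lvert W\rvert)^{2/3}(\E[W^4])^{1/3}$ together with the exact fourth moment $\E[W^4]=N^{-3}v\bigl(1+3(N-2)v\bigr)$, $v=p(1-p)$, is exactly the right tool, and your bookkeeping checks out. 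One small remark: your own estimate is slightly stronger than what you claim, since for $v\le 1/4$ one has $1+3(N-2)v\le N$ for every $N\ge 1$ (not merely $\le 2N$), so your argument actually yields $\E[W_1(\mu_N,\mu)]\ge J_0(\mu)/\sqrt{N}$, improving the stated constant $1/\sqrt{2}$.
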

The moment assumption on $\mu$ ensures that $J_1(\mu)<\infty$ so that the upper bound in \eqref{upc} is nonvacuous. 
We prove next that \eqref{upc} is tight when $\mu$ is fast-decaying and has a bounded density, i.e., $J_0(\mu)$ and $J_1(\mu)$ are of a similar order. 

\begin{Lemma}\label{mylemo}
If there exist $z\in\R$ and constants $C>1$ and $r\geq 4$ such that
\begin{align}
\min\{F_\mu(x+z), 1-F_\mu(x+z)\}&\leq C|x|^{-r}& x\neq 0\label{241}\\
 |F_\mu'(x)|&\leq \sqrt{C} \label{242}
\end{align}
then $J_1(\mu)\leq 21CJ_0(\mu)$. 
\end{Lemma}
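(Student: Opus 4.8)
The plan is to sandwich $J_1$ between two multiples of $\sqrt{C}$ by bounding $J_1$ from above and $J_0$ from below, and then to combine. Since both functionals in \eqref{J} are translation invariant (the substitution $x\mapsto x-z$ leaves every integral of the form $\int_\R h(F_\mu(\cdot))\,dx$ unchanged), I would first replace $\mu$ by its translate so that the tail bound is centered at the origin; that is, I assume without loss of generality that $z=0$, so that \eqref{241} reads $\min\{F_\mu(x),1-F_\mu(x)\}\le C|x|^{-r}$. Note also that \eqref{242} forces $F_\mu$ to be Lipschitz, hence continuous with a density $F_\mu'\le\sqrt{C}$.

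For the lower bound on $J_0$ I would exploit the density cap \eqref{242}. Writing $\phi(u)=u(1-u)$ and using that $F_\mu(Z)$ is uniform on $[0,1]$ when $Z\sim\mu$, the pushforward identity gives $\int_0^1\phi(u)\,du=\int_\R\phi(F_\mu(x))\,F_\mu'(x)\,dx$. Bounding $F_\mu'\le\sqrt{C}$ pointwise and recognizing $\phi(F_\mu)=F_\mu(1-F_\mu)$ as the integrand of $J_0$ yields $\tfrac16=\int_0^1\phi\le\sqrt{C}\,J_0$, i.e. $J_0\ge\frac{1}{6\sqrt{C}}$.

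For the upper bound on $J_1$, whose integrand is $\sqrt{F_\mu(1-F_\mu)}$, I would use two elementary pointwise bounds: $F_\mu(1-F_\mu)\le\tfrac14$ everywhere, and $F_\mu(1-F_\mu)\le\min\{F_\mu,1-F_\mu\}\le C|x|^{-r}$ by \eqref{241}. Hence $J_1\le\int_\R\sqrt{\min\{\tfrac14,\,C|x|^{-r}\}}\,dx$, which I evaluate by splitting at the crossover $|x|=(4C)^{1/r}$: the central part contributes $\tfrac12\cdot 2(4C)^{1/r}=(4C)^{1/r}$ and the two tails contribute $2\sqrt{C}\int_{(4C)^{1/r}}^\infty x^{-r/2}\,dx$. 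A short computation collapses this to $J_1\le\frac{r}{r-2}(4C)^{1/r}$. Invoking $r\ge 4$ (so that $\frac{r}{r-2}\le 2$ and $(4C)^{1/r}\le(4C)^{1/4}$) gives $J_1\le 2\sqrt{2}\,C^{1/4}$. Combining with $J_0\ge\frac{1}{6\sqrt{C}}$ produces $J_1\le 12\sqrt{2}\,C^{3/4}J_0\le 21\,C\,J_0$, where the last step uses $C>1$ (so $C^{3/4}\le C$) together with $12\sqrt{2}<21$.

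The one place requiring genuine care is the tail integral in the bound for $J_1$: taking the square root halves the decay rate, so $\sqrt{F_\mu(1-F_\mu)}$ decays only like $|x|^{-r/2}$, and its integral converges precisely because $r>2$. This is exactly where the hypothesis $r\ge 4$ earns its keep, simultaneously guaranteeing convergence and pinning down the constant $\frac{r}{r-2}\le 2$ that propagates into the final number. The density-based lower bound on $J_0$ is the other indispensable ingredient: it is what prevents $J_0$ from being arbitrarily small relative to $J_1$, and without the cap \eqref{242} no inequality of this form could hold.
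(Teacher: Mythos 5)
Your proof is correct, and while it shares the paper's overall skeleton --- upper-bound $J_1$ by a power of $C$, lower-bound $J_0$ by a multiple of $1/\sqrt{C}$, then divide --- both halves are executed by genuinely different arguments. For the lower bound on $J_0$, the paper restricts the integral to the level set $\{\delta\le F_\mu\le 1-\delta\}$, uses \eqref{242} to show this set has Lebesgue measure at least $(1-2\delta)/\sqrt{C}$, and optimizes over $\delta$ to get $J_0>0.096/\sqrt{C}$; your change-of-variables identity $\tfrac16=\int_0^1 u(1-u)\,du=\int_\R F_\mu(1-F_\mu)\,F_\mu'\,dx\le \sqrt{C}\,J_0$ is cleaner and sharper, giving $J_0\ge 1/(6\sqrt{C})$. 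For the upper bound on $J_1$, the paper pulls $\sqrt{C}$ out of the minimum and splits at $|x|=1$, obtaining $2\sqrt{C}\left(\tfrac12+\tfrac{2}{r-2}\right)$, which it then bounds by $2\sqrt{C}$ --- a step that actually requires $r\ge 6$ (at $r=4$ the expression equals $3\sqrt{C}$, and combined with $0.096/\sqrt{C}$ this would only yield a ratio near $31C$); your split at the exact crossover $(4C)^{1/r}$ keeps the bound $\tfrac{r}{r-2}(4C)^{1/r}\le 2\sqrt{2}\,C^{1/4}$ valid for every $r\ge 4$. The net effect is that your argument proves the stronger estimate $J_1\le 12\sqrt{2}\,C^{3/4}J_0$, which implies the stated $21\,C\,J_0$ with room to spare, and in passing it repairs the paper's own constant-tracking in the regime $r\in[4,6)$.
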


\begin{proof}
Since $J_0$ and $J_1$ are invariant under shifts, without loss of generality we assume $z=0$.
We first upper bound $J_1(F_\mu)$ as follows:
\begin{align*}
J_1(F_\mu) &= \int_\R \sqrt{F_\mu(x)(1-F_\mu(x))}dx\stackrel{\eqref{241}}{\leq} \int_\R \sqrt{C}\min\left\{\frac{1}{2},|x|^{-r/2}\right\} dx\leq 2\sqrt{C}\left(\frac{1}{2}+\frac{2}{r-2}\right)\leq 2\sqrt{C}.
\end{align*}
On the other hand, for any $0<\delta<1$, 
\begin{align*}
J_0(F_\mu) = \int_\R F_\mu(x)(1-F_\mu(x))dx&\geq\int_{\delta\leq F_\mu(x)\leq 1-\delta} F_\mu(x)(1-F_\mu(x))dx\\
&\geq\int_{\delta\leq F_\mu(x)\leq 1-\delta} \delta(1-\delta)dx\stackrel{\eqref{242}}{\geq}\frac{\delta(1-\delta)(1-2\delta)}{\sqrt{C}}.
\end{align*}
Taking $\delta = \frac{1}{2}(1-\frac{1}{\sqrt{3}})$ yields $J_0(F_\mu)>0.096/\sqrt{C}$.
This combined with the upper bound on $J_1(F_\mu)$ proves the desired result. 
\end{proof}
Similar nonasymptotic results for $\E[W^p_p(\mu_N,\mu)]^{1/p}$ when $p>1$ can be found in \cite[Corollary 5.10]{Bobkov_2019}. 
In those cases, the sufficient and necessary condition to achieve the optimal convergence rate has a more subtle dependence on the moments of a distribution  \cite[Corollary 6.14]{Bobkov_2019}.  

In later sections, we need to compare $J_1$ of sums of independent random variables.
The following notion is useful for the analysis:
\begin{Def}\label{maxfun}
Let $\mathcal F(\R^d)$ be the space of uniformly bounded functions on $\R^d$. 
For every $r\geq 0$, the $r$-local maximum operator $M_r: \mathcal F(\R^d)\to \mathcal F(\R^d)$ is defined as follows: For any $f\in \mathcal F(\R^d)$, 
\begin{align*}
&M_r[f](x) = \sup_{z\in B(z, r)}f(z)&B(x,r) = \{z: \|x-z\|_2\leq r\}.
\end{align*}
\end{Def}

\begin{Lemma}\label{cute}
Let $R_1 = Z+U$ and $R_2 = Z + V$, where $Z$ and $U/V$ are independent random variables with bounded $(2+\delta)$-th moments for some $\delta>0$.
Suppose $\supp(Z)\subseteq [-r, r]\subset\R$ for some $r>0$. 
For a random variable $X$, denote by $F_X(x)$ the CDF of $X$. 
Then,
\begin{align*}
&|J_1(F_{R_1})-J_1(F_{R_2})|\leq\int_\R M_r[|F_U-F_V|]^{1/2}(x) dx, 
\end{align*}
where $M_r$ is the $r$-local maximum operator in Definition \ref{maxfun}. 
\end{Lemma}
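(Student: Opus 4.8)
The plan is to reduce the statement to a pointwise estimate on the integrand defining $J_1$ and then exploit the convolution structure of $R_1,R_2$. Write $g(p)=\sqrt{p(1-p)}$ for $p\in[0,1]$, so that $J_1(F_{R_i})=\int_\R g(F_{R_i}(x))\,dx$. Since $Z$ is bounded and $U,V$ have finite $(2+\delta)$-th moments, Minkowski's inequality gives finite $(2+\delta)$-th moments for $R_1,R_2$; by the moment condition in Lemma \ref{BM} (cf.\ the remark following it) both $J_1(F_{R_i})$ are finite, so their difference is a single, well-defined integral and
\begin{align*}
|J_1(F_{R_1})-J_1(F_{R_2})|\leq\int_\R\left|g(F_{R_1}(x))-g(F_{R_2}(x))\right|\,dx.
\end{align*}

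The first key step is the $\tfrac12$-Hölder bound $|g(p)-g(q)|\leq\sqrt{|p-q|}$ for all $p,q\in[0,1]$. I would obtain this from the elementary inequality $|\sqrt{a}-\sqrt{b}|\leq\sqrt{|a-b|}$ (valid for $a,b\geq 0$) applied to $a=p(1-p)$ and $b=q(1-q)$, combined with the factorization $p(1-p)-q(1-q)=(p-q)(1-p-q)$ and the bound $|1-p-q|\leq 1$ on $[0,1]^2$. Applying this with $p=F_{R_1}(x)$, $q=F_{R_2}(x)$ bounds the integrand above by $\sqrt{|F_{R_1}(x)-F_{R_2}(x)|}$.

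The second key step controls $|F_{R_1}(x)-F_{R_2}(x)|$ by the local maximum operator. Using only the independence of $Z$ from $U$ (resp.\ $V$) — not of $U$ from $V$ — conditioning on $Z$ yields the mixture representations $F_{R_1}(x)=\E_Z[F_U(x-Z)]$ and $F_{R_2}(x)=\E_Z[F_V(x-Z)]$, whence $|F_{R_1}(x)-F_{R_2}(x)|\leq\E_Z[|F_U(x-Z)-F_V(x-Z)|]$. Because $\supp(Z)\subseteq[-r,r]$, every realization $z$ of $Z$ places the argument $x-z$ in $B(x,r)=[x-r,x+r]$, so $|F_U(x-z)-F_V(x-z)|\leq M_r[|F_U-F_V|](x)$; integrating against the law of $Z$ (a probability measure) leaves the pointwise bound $|F_{R_1}(x)-F_{R_2}(x)|\leq M_r[|F_U-F_V|](x)$. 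Taking square roots (monotone) and integrating in $x$ then gives the claim — crucially, the square root is applied only \emph{after} the CDF difference has been bounded, so no Jensen-type concavity argument is needed.

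The argument is largely mechanical once these two estimates are in place; the one genuine subtlety is the $\tfrac12$-Hölder regularity of $g$, which is exactly what forces the square root, and hence the operator $M_r[\cdot]^{1/2}$, to appear on the right-hand side. A minor technical point I would note separately is the measurability of $M_r[|F_U-F_V|]$: since $F_U,F_V$ are bounded and c\`adl\`ag, their difference has at most countably many discontinuities and the closed-interval supremum defining $M_r$ is Borel measurable, so the final integral is well defined (possibly $+\infty$, in which case the inequality is vacuous).
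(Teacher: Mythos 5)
Your proposal is correct and follows essentially the same route as the paper's proof: the $\tfrac12$-H\"older bound on $p\mapsto\sqrt{p(1-p)}$ via $|\sqrt{a}-\sqrt{b}|\leq\sqrt{|a-b|}$ and the factorization $p(1-p)-q(1-q)=(p-q)(1-p-q)$, followed by the mixture representation $F_{R_i}(x)=\E_Z[F_{U/V}(x-Z)]$ and the support condition on $Z$ to invoke the operator $M_r$. Your added remarks on well-definedness, measurability, and the fact that $U$ and $V$ need not be independent of each other are harmless refinements (indeed, writing the mixture step as an inequality rather than the paper's equality is slightly cleaner), but the substance is identical.
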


\begin{proof}
By definition,  
\begin{align}
\left|J_1(F_{R_1})-J_1(F_{R_2})\right|&\leq\int_\R\left|\sqrt{F_{R_1}(x)(1-F_{R_1}(x))}-\sqrt{F_{R_2}(x)(1-F_{R_2}(x))}\right| dx\nonumber\\
&\stackrel{}{\leq}\int_\R\sqrt{\left |F_{R_1}(x)(1-F_{R_1}(x))-F_{R_2}(x)(1-F_{R_2}(x))\right|} dx\nonumber\\
& =  \int_\R\sqrt{|F_{R_1}(x)-F_{R_2}(x)||1-(F_{R_1}(x)+F_{R_2}(x))|} dx\nonumber\\
&\stackrel{}{\leq}\int_\R\sqrt{|F_{R_1}(x)-F_{R_2}(x)|} dx.\label{dongge}
\end{align}
Since $Z$ and $U/V$ are independent, 
\begin{align}
F_{R_1}(x) &= \int_\R F_{U}(x-z)dF_X(z) = \int_\R F_{X}(x-z)dF_U(z)\label{convo1}\\
F_{R_2}(x) &= \int_\R F_{V}(x-z)dF_X(z) = \int_\R F_{X}(x-z)dF_V(z)\label{convo2}.
\end{align}
Substituting \eqref{convo1} and \eqref{convo2} into \eqref{dongge} yields the desired result:
\begin{align*}
\int_\R\sqrt{|F_{R_1}(x)-F_{R_2}(x)|} dx&=\ \int_\R\sqrt{\int_\R |F_{U}(x-z)-F_{V}(x-z)| dF_X(z)} dx\\
&\leq\int_\R M_r[|F_{U}-F_{V}|]^{1/2}(x) dx.
\end{align*}

\end{proof}

\subsection{A CDF estimate for marginals of a random vector}

We have seen in \eqref{W1-rep} that if two one-dimensional random variables have ``similar'' CDFs, then they are close in the $W_1$-metric. 
In this section, we prove a type of converse of this statement.
In particular, we show that two close marginals of a jointly sub-exponential vector have similar CDFs. 
  
We recall that $X \in \R^k$ is a jointly sub-exponential random variable if it satisfies \eqref{sub-exp}.
A useful property of sub-exponential random variables is that their maximum grows less than any polynomial rate:

\begin{Lemma}\label{expmax}
Let $\{U_n\}_{n\in\N}$ be a sequence of sub-exponential random variables with uniformly bounded sub-exponential norm, i.e., there exists a constant $C>0$ such that $\P(|U_n|>x)\leq 2\exp(-x/C)$ for all $n$. 
Denote by $U_n^* = \max_{i\in [n]}|U_i|$. 
Then, $U_n^* \lesssim\mathcal O(\log n)$ a.s., where the implicit constant is realization-dependent. 
\end{Lemma}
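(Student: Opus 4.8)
The plan is to prove the almost-sure bound $U_n^* = O(\log n)$ via a Borel--Cantelli argument. The key observation is that the sub-exponential tail bound $\P(|U_n|>x)\leq 2\exp(-x/C)$ decays fast enough that, at the threshold $x = \alpha \log n$ for a suitably large constant $\alpha$, the resulting tail probabilities are summable.

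First I would fix a constant $\alpha > C$ and set the threshold $t_n = \alpha \log n$. Applying the tail bound directly gives
\begin{align*}
\P\left(|U_n| > \alpha \log n\right) \leq 2\exp\left(-\frac{\alpha \log n}{C}\right) = 2 n^{-\alpha/C}.
\end{align*}
Since $\alpha/C > 1$, the series $\sum_{n} 2n^{-\alpha/C}$ converges. By the Borel--Cantelli lemma, the event $\{|U_n| > \alpha \log n\}$ occurs for only finitely many $n$ almost surely. Hence there exists a (realization-dependent) index $N$ such that $|U_n| \leq \alpha \log n$ for all $n \geq N$.

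The remaining step is to convert the pointwise bound on $|U_n|$ into a bound on the running maximum $U_n^* = \max_{i\in[n]}|U_i|$. For $n \geq N$, I would split the maximum as $U_n^* = \max\{U_{N-1}^*,\ \max_{N \leq i \leq n}|U_i|\}$. The first term $U_{N-1}^*$ is an almost-surely finite constant (depending on the realization). For the second term, using $|U_i| \leq \alpha \log i \leq \alpha \log n$ for each $N \leq i \leq n$ gives $\max_{N \leq i \leq n}|U_i| \leq \alpha \log n$. Combining, for all sufficiently large $n$ we obtain $U_n^* \leq \max\{U_{N-1}^*, \alpha \log n\}$, which is $O(\log n)$ with a realization-dependent implicit constant, as claimed.

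I do not anticipate a serious obstacle here, since this is a standard Borel--Cantelli argument; the only mild subtlety is bookkeeping the realization-dependence of the constant. The maximum over a \emph{fixed} finite prefix $\{U_1, \dots, U_{N-1}\}$ is trivially a finite random variable, and because $\log n \to \infty$ it is eventually dominated by $\alpha \log n$, so the prefix contributes only to the constant and not to the growth rate. One should note that the bound is purely about the growth rate and makes no independence assumption on the $U_n$ (none is needed, since Borel--Cantelli's first lemma requires only summability of marginal tail probabilities), which is consistent with how the lemma is stated.
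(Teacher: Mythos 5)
Your proof is correct and follows essentially the same route as the paper's: a sub-exponential tail bound evaluated at a logarithmic threshold, followed by the first Borel--Cantelli lemma. The only (immaterial) difference is that the paper first applies a union bound to estimate $\P(U_n^*\geq t_n)\leq 2n\exp(-t_n/C)$ and invokes Borel--Cantelli on the maximum events with $t_n=3C\log n$, whereas you invoke Borel--Cantelli on the individual events $\{|U_n|>\alpha\log n\}$ (which lets you take any $\alpha>C$ rather than $3C$) and then recover the bound on the running maximum deterministically from the monotonicity of $\log n$ together with the finite, realization-dependent prefix.
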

\begin{proof}
For every $n$ and $t_n>0$, it follows from a union bound that 
\begin{align*}
\P\left(U_n^*\geq t_n\right)\leq\sum_{i\in [n]}\P\left(|U_i|\geq t\right)\leq 2n\exp\left(-\frac{t_n}{C}\right). 
\end{align*}
Setting $t_n = 3C\log n$ and applying the Borel--Cantelli lemma finishes the proof. 
\end{proof}

Now consider two marginals $Z_1, Z_2$ of $X$: 
\begin{align}
&Z_1 = a_1^\top X&Z_2 = a^\top _2X,\label{Z12}
\end{align}
where $a_1, a_2\in\R^k$.  
Denote by $F_1(x)$ and $F_2(x)$ the CDFs of $Z_1$ and $Z_2$, respectively. 
If $\|a_1-a_2\|_2$ is small (which implies that $W_1(F_1, F_2)$ is small by Cauchy-Schwarz inequality), 
then $F_1(x)\approx F_2(x)$ for every fixed $x\in\R$. 
Since \eqref{sub-exp} ensures that the CDFs of both $Z_1$ and $Z_2$ quickly decay to zero as $|x|\to\infty$, the distance between $F_1(x)$ and $F_2(x)$ can be bounded in a uniform sense.

\begin{Lemma}\label{kouniao}
Let $X\in\R^k$ be a sub-exponential random vector satisfying \eqref{sub-exp} for some constant $C=C_1>0$, and $Z_1, Z_2$ be the marginals defined in \eqref{Z12}. 
Denote by
\begin{align*}
&\delta: = \|a_1-a_2\|_2 & C_2 :=\sqrt{\|a_1\|_2^2+\|a_2\|_2^2}<\infty. 
\end{align*}
Suppose the CDFs of $Z_1$ and $Z_2$, $F_1(x)$ and $F_2(x)$, are globally Lipschitz continuous; i.e., 
\begin{align*}
\max\{\|F_1\|_\li, \|F_2\|_\li\}\leq C_3
\end{align*}
for some $C_3<\infty$.   
Then, for any $p\in (0,\infty)$, there exists a constant $K>0$ that depends on $C_1, C_2, C_3$ and $p$ such that the following bounds hold true:
\begin{align}
\|M_r[|F_1-F_2|]\|^p_{L^p(\R)}\leq \begin{cases} 
      K\left(1+2r\right)\left(\delta + \delta^{5p/6}\right)\log\left(\frac{1}{\delta}\right) & 0\leq \delta\leq 1/2 \\
      \\
      8\left(r + \frac{2^{p}C_1C_2}{p}\right)\delta & \delta\geq1/2 
         \end{cases} \label{use}
\end{align}
where $M_r$ is the $r$-local maximum operator in Definition \ref{maxfun}. 
\end{Lemma}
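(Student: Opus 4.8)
The plan is to pin down two pointwise estimates on $g(x) := |F_1(x) - F_2(x)|$ --- a bound that is uniform in $x$ and of order $\delta\log(1/\delta)$, together with an exponentially decaying tail bound --- and then to integrate these against one another, exploiting that the local-maximum operator $M_r$ preserves the uniform bound and merely shifts the tail bound outward by $r$. The first and crucial step is the uniform bound. Writing $D := (a_1 - a_2)^\intercal X = Z_1 - Z_2$, note that for any $t > 0$ the event $\{Z_1 \le x,\ Z_2 > x + t\}$ forces $D < -t$, so splitting on whether $|D| \le t$ gives $F_1(x) \le F_2(x+t) + \P(|D| > t)$. The Lipschitz hypothesis bounds $F_2(x+t) \le F_2(x) + C_3 t$, while applying \eqref{sub-exp} to the unit vector $(a_1-a_2)/\delta$ gives $\P(|D| > t) \le 2\exp(-t/(C_1\delta))$. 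Combining this with the symmetric inequality yields $g(x) \le C_3 t + 2\exp(-t/(C_1\delta))$ for every $x$ and $t$; choosing $t \asymp C_1\delta\log(1/\delta)$ produces a constant $A(\delta) \lesssim \delta\log(1/\delta)$ with $\|g\|_\infty \le A(\delta)$.

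The tail bound is elementary: since $\|a_i\|_2 \le C_2$, \eqref{sub-exp} gives $\P(Z_i > x), \P(Z_i \le -x) \le 2\exp(-x/(C_1 C_2))$ for $x \ge 0$, and using $|F_1 - F_2| \le \max(1 - F_1, 1 - F_2)$ on the right tail together with the analogous bound on the left yields $g(x) \le T(x) := 2\exp(-|x|/(C_1 C_2))$. Consequently $M_r[g] \le A(\delta)$ and $M_r[g] \le M_r[T] = 2\exp(-(|x| - r)_+/(C_1 C_2))$, so the problem collapses to integrating $\min(A(\delta), M_r[T])^p$.

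For $\delta \le 1/2$ I would split $\R$ at the level set $|x| = L := r + C_1 C_2\log(2/A(\delta))$ where $M_r[T]$ drops to $A(\delta)$: on $\{|x| \le L\}$ the constant bound contributes $2L\,A(\delta)^p$, and on $\{|x| > L\}$ integrating the exponential contributes a further $\tfrac{2 C_1 C_2}{p} A(\delta)^p$, so $\int M_r[g]^p \lesssim A(\delta)^p\,(r + \log(1/\delta) + 1/p)$. The stated form is then a deliberate coarsening: the prefactor $r + \log(1/\delta)$ collapses into $(1 + 2r)\log(1/\delta)$ (with $1/p$ and the remaining constants absorbed into $K$), while $A(\delta)^p \lesssim (\delta\log(1/\delta))^p \le \delta^{5p/6}$, the last inequality holding because $\delta\log(1/\delta) \le \delta^{5/6}$ for all $\delta \le 1/2$; the companion linear term $\delta$ is supplied by the transport estimate $\int g\,dx = W_1(F_1,F_2) \le \E|D| \le \delta\,\E\|X\|_2$ (Cauchy--Schwarz plus a finite sub-exponential first moment). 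The regime $\delta \ge 1/2$ is the easy one: the uniform bound is vacuous, so I replace $A(\delta)$ by $g \le 1$, split at $|x| = r + C_1 C_2\log 2$ where $M_r[T] = 1$, integrate to get $\int M_r[g]^p \le 2r + 2 C_1 C_2 + \tfrac{2 C_1 C_2}{p}$, and multiply by $2\delta \ge 1$, which after absorbing constants is dominated by $8(r + 2^p C_1 C_2/p)\delta$.

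The main obstacle is the first step: extracting an estimate on $g$ that is uniform in $x$ even though $g$ itself decays at infinity, since a naive pointwise comparison of $F_1$ and $F_2$ cannot see this decay. The device of conditioning on $\{|D| \le t\}$ --- trading a Lipschitz increment $C_3 t$ against a sub-exponential tail $2\exp(-t/(C_1\delta))$ and optimizing $t$ --- is exactly what delivers the sharp order $\delta\log(1/\delta)$. Everything downstream is bookkeeping, the only real care being to track the $p$-dependence (hidden in $K$) and the logarithmic factors so that they collapse cleanly into the two-term expression $\delta + \delta^{5p/6}$.
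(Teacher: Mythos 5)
Your proposal is correct and takes essentially the same route as the paper's proof: the same conditioning-on-$\{|Z_1-Z_2|\le t\}$ device combined with the Lipschitz hypothesis and the sub-exponential bound \eqref{sub-exp} applied to $(a_1-a_2)/\delta$ (with $t\asymp C_1\delta\log(1/\delta)$) gives the uniform bound, the same marginal tail estimate controls large $|x|$, and the same truncated integration closes both regimes $\delta\le 1/2$ and $\delta\ge 1/2$. The only cosmetic deviations are that your transport-estimate remark supplying the linear $\delta$ term is unnecessary (since $\delta+\delta^{5p/6}\ge\delta^{5p/6}$, the bound without that term is stronger, and it in any case controls $\int|F_1-F_2|\,dx$ rather than $\int M_r[|F_1-F_2|]^p\,dx$), and that your inequality $\delta\log(1/\delta)\le\delta^{5/6}$ for $\delta\le 1/2$ --- which the paper asserts as well --- is false as literally stated (e.g.\ $\delta=1/4$) but holds up to the universal constant $6/e$, harmlessly absorbed into $K$.
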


\begin{proof}
Without loss of generality, we assume $F_1(x)\geq F_2(x)$ for all $x\in\R$, otherwise, one can divide into two regimes and discuss them separately.   

We first give a bound for $M_r[|F_1-F_2|](x)$ when $|x|\geq 2r$:
\begin{align}
M_r[|F_1-F_2|](x) = \sup_{z\in B(x, r)}(F_1(z)-F_2(z))&\leq \sup_{z\in B(x, r)}\min\{F_1(z), 1-F_2(z)\}\nonumber\\
&\leq 2\sup_{z\in B(x, r)}\exp\left(-\frac{|z|}{C_1C_2}\right)\nonumber\\
&\leq 2\exp\left(-\frac{|x|}{2C_1C_2}\right).\label{reg1}
\end{align}

When $\delta\geq 1/2$, since $M_r[|F_1-F_2|](x)\leq 1$, we have 
\begin{align*}
\|M_r[|F_1-F_2|]\|^p_{L^p(\R)}&\leq\int_{|x|<2r}1 ~dx + \int_{|x|\geq 2r}2^p\exp\left(-\frac{p|x|}{2C_1C_2}\right)dx\\
&\leq 4r + \frac{2^{p+2}C_1C_2}{p}\\
&\leq 8\left(r + \frac{2^{p}C_1C_2}{p}\right)\delta.
\end{align*}

To obtain the bound when $\delta\leq 1/2$, note that for $t>0$, $\{Z_1\leq x\}\subseteq\{Z_2\leq x+t\}\cup\{Z_2-Z_1>t\}$,
which implies
\begin{align}
M_r[|F_1-F_2|](x)&\leq \sup_{z\in B(x, r)}\left(F_2(z+t)-F_2(z)+\P\left(Z_2-Z_1>t\right)\right)\leq C_3 t + 2\exp\left(-\frac{t}{C_1\delta}\right),\label{reg2}
\end{align}
where the first inequality follows from applying \eqref{sub-exp} to the normalized sub-exponential random variable $(Z_1-Z_2)/\delta$.
Setting $t = C_1\delta\log\left(1/\delta\right)$ yields
\begin{align*}
M_r[|F_1-F_2|](x)\leq C_1C_3\delta\log\left(\frac{1}{\delta}\right) + 2\delta\leq (2+C_1C_3)\delta^{5/6},
\end{align*}
where we used the facts $\log (1/x)\leq x^{-1/6}$ and $x<x^{5/6}$ when $x\leq 1/2$. 
Combining estimates \eqref{reg1} and \eqref{reg2}, we can bound $\|M_r[|F_1-F_2|]\|^p_{L^p(\R)}$ as follows:
Fixing
\begin{align*}
T = \max\left\{2r, \frac{2C_1C_2\log(1/\delta)}{p}\right\},
\end{align*}
we compute 
\begin{align}
\|M_r[|F_1-F_2|]\|^p_{L^p(\R)}&=\ \int_{|x|<T} (2+C_1C_3)^p\delta^{5p/6}dx + \int_{|x|\geq T}2^p\exp\left(-\frac{p|x|}{2C_1C_2}\right)dx\nonumber \\
&\leq 2(2+C_1C_3)^p\delta^{5p/6}\left(2r + \frac{2C_1C_2\log(1/\delta)}{p}\right) + \frac{2^{p+2}C_1C_2\delta}{p}\nonumber\\
&\lesssim \left(1+2r\right)\left(\delta + \delta^{5p/6}\right)\log\left(\frac{1}{\delta}\right).\label{sun1}
\end{align} 
\end{proof}

\section{Empirical CDF estimator under linear regression}\label{LR:analysis}

In this section, we develop a rigorous framework for estimation of $F_Y(y)$ under \eqref{1}. Our framework follows the overall idea presented in Section \ref{ssec:algorithm}.
 
\subsection{Exploration and exploitation}\label{ex-ex}
In this section, we introduce an exploration-exploitation strategy following the ideas in \cite{xu2021bandit}. 
In \eqref{1}, since neither the best parametric model $S$ nor the corresponding coefficients $\beta_S$ is known, it is necessary to expend some effort (budget) to decide on $S$ before committing to \eqref{lr:oracle}.  
We split a given total budget $B > 0$ into two parts, one for \emph{exploration} and the other for \emph{exploitation}. 
In the exploration stage, we collect $m$ independent joint samples of all models $(Y,X)$ to estimate $\beta_S$ and $F_{\e_S}$ (i.e. the CDF of $\e_S$) for every $S\subseteq  [n]$, based on which we decide the best model for exploitation. 
In the exploitation stage, we follow the decision made in the exploration phase and use \eqref{lr:oracle} to construct an estimator for $F_Y$ by plugging in the estimated coefficients. 
Denote 
\begin{align*}
&\textsf{(exploration samples)}\ \ \ \ \ X_{\ex,\ell} := (1, X_{1,\ell}, \cdots, X_{n,\ell}, Y_\ell)^\top &\ell\in [m]\\
&\textsf{(design matrix for $S$)}\ \ \ \ \ Z_S := \begin{bmatrix}
1,(X_{i,1})_{i\in S}\\
\vdots\\
1,(X_{i,m})_{i\in S}
\end{bmatrix}\in\R^{m\times (s+1)}\\
&\textsf{(exploration responses)}\ \ \ \ \ Y_{\ex}  := (Y_1, \cdots, Y_m)^\top ,
\end{align*}
where $\ell$ is the sampling index, and $N_S = \left\lfloor\right (B-c_\ex m)/c_\ext(S)\rfloor$ is the number of affordable samples for exploiting $S$, where $c_\ex = \sum_{i=0}^nc_i$ is the cost for an exploration sample.  
For $m>|S|+1$, $\beta_S$ can be estimated using standard least-squares.  
Assuming that $Z_S$ has full column rank, the estimator for $\beta_S$ is given by
\begin{align}
\bt_S = Z_S^\dagger Y_\ex = \left(Z_S^\top Z_S\right)^{-1}Z_S^\top Y_\ex,\label{dadad}
\end{align}  
The full distribution of $\e_S$ can be estimated using the following plug-in empirical CDF estimator based on exploration residuals:
\begin{align}
\widehat{F}_{\e_S}(y) = \frac{1}{m}\sum_{\ell\in [m]}\bm{1}_{Y_{\ell}-X^\top_{S,\ell}\bt_S\leq y}.\label{boots0}
\end{align}
The resulting empirical estimator from model $S$ is
\begin{align}
\widehat{F}_{Y, S}(y): = \frac{1}{N_S}\sum_{j\in [N_S]}\bm{1}_{A_j\leq y},\label{lr:est}
\end{align}
where $A_j$ are i.i.d. random variables sampled according to
\begin{align}
Y': = X_{S}^\top \bt_S + \widehat{\e}_S,\label{1263}
\end{align}
that is, one first samples $X_S$ to compute $ X_{S}^\top \bt_S$ then adds it with an independently sampled noise from 
\begin{align}
\widehat{\e}_S\sim \widehat{F}_{\e_S}(y).\label{boots}
\end{align}
Note that samples of $A_j$ use estimated statistics and hence do not rely on oracle information.

To measure the average quality of \eqref{lr:est} as an estimator for $F_Y$, we use the following mean $1$-Wasserstein distance as the \emph{loss function}:
\begin{align}
L_S(m): = \E\left[W_1(\widehat{F}_{Y, S}, F_Y)|Z_S\right],\label{W_11}
\end{align}
where the expectation averages out the \emph{randomness in exploitation} as well as the \emph{noise in exploration}. Note that the $L_S(m)$ defined in \eqref{W_11} is still random due to the remaining randomness in $Z_S$, and one could alternatively define it by averaging all the randomness, but this would lead to the term $\E[(Z^\top _SZ_S)^{-1}]$, which is difficult to analyze. 
Thus we will pursue \eqref{W_11} in this article and consider the case when $B\to\infty$.

Obtaining exact asymptotically equivalent expressions for \eqref{W_11} is difficult.
To find computably informative substitutes, we compute sharp estimates for an upper bound of \eqref{W_11} in the next section.

\subsection{Upper bounds}\label{upde}
We cannot directly compute $L_S(m)$ in \eqref{W_11} in an algorithmic setting; our goal in this section is thus to derive a computable upper bound. We start by writing \eqref{W_11} in two parts using the triangle inequality:
\begin{align}
L_S(m)\leq \E\left[W_1(F_{Y}, F_{Y'})|Z_S\right] + \E\left[W_1(\widehat{F}_{Y, S}, F_{Y'})|Z_S\right],\label{W_1tri}
\end{align}
where $F_{Y'}$ is the CDF of $Y'$; see \eqref{1263} for the definition of $Y'$. 

We provide some intuition for the bound \eqref{W_1tri}. 
The first term measures the mean $W_1$ distance between $X_S^\top \beta_S+\e_S$ and $X_S^\top \bt_S+\widehat{\e}_S$, which depends on the accuracy of $\bt_S$ and $\widehat{\e}_S$, hence on the exploration rate $m$.  The second term measures the mean convergence rate of empirical measures, which depends on the exploitation rate $N_S$.  
A good exploration-exploitation strategy (i.e. determination of $m$) will balance these quantities.
We will now produce a computable asymptotic upper bound for \eqref{W_1tri} that can be used to find such an $m$.  

\begin{Lemma}\label{lemma1}
Under Assumptions \ref{ass:eS-cond}-\ref{a3}, and given any $\delta>0$, it holds almost surely that for sufficiently large $m$ and every $B>c_\ex m$, 
\begin{subequations}
\begin{align}
\E\left[W_1(F_{Y}, F_{Y'})|Z_S\right]&\leq \frac{(2+\delta)\sqrt{s+1}\sigma_S+J_1(F_{\e_S})}{\sqrt{m}},\label{bdd1}\\
\E\left[W_1(\widehat{F}_{Y, S}, F_{Y'})|Z_S\right] & \leq (1+\delta)\frac{J_1(F_Y)}{\sqrt{N_S}}.\label{bdd2}
\end{align}
\end{subequations}
\end{Lemma}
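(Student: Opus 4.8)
The plan is to bound the two conditional expectations \eqref{bdd1} and \eqref{bdd2} separately, following the split in \eqref{W_1tri}. The first bound is entirely elementary once the relevant laws are coupled through a shared draw of $X_S$; the second reduces to Lemma \ref{BM} plus a continuity statement for $J_1$ under perturbation of the emulator, which is where essentially all the difficulty lies.

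For \eqref{bdd1}, I would insert the intermediate $Y'':=X_S^\intercal\bt_S+\e_S$ (estimated coefficients, true noise), write $W_1(F_Y,F_{Y'})\le W_1(F_Y,F_{Y''})+W_1(F_{Y''},F_{Y'})$, and set $\Delta:=\bt_S-\beta_S$. The coefficient term is handled by the coupling that uses the same fresh $X_S$ for both laws: conditionally on exploration, $W_1(F_Y,F_{Y''})\le\E_{X_S}[|X_S^\intercal\Delta|]\le\sqrt{\Delta^\intercal\Lambda_S\Delta}$ by Cauchy–Schwarz, and taking $\E[\,\cdot\mid Z_S]$ with the OLS identity $\E[\Delta\Delta^\intercal\mid Z_S]=\sigma_S^2(Z_S^\intercal Z_S)^{-1}$ (valid under Assumptions \ref{ass:eS-cond}–\ref{ass:eS-independent}) gives $\sigma_S\sqrt{\mathrm{tr}(\Lambda_S(Z_S^\intercal Z_S)^{-1})}$, which the strong law $\tfrac1m Z_S^\intercal Z_S\to\Lambda_S$ drives to $\sigma_S\sqrt{(s+1)/m}$. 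For the noise term, independence of the common summand $X_S^\intercal\bt_S$ from both $\e_S$ and $\widehat{\e}_S$ gives the convolution contraction $W_1(F_{Y''},F_{Y'})\le W_1(F_{\e_S},\widehat F_{\e_S})$; inserting the empirical measure $\widetilde F_{\e_S}$ of the \emph{true} exploration noise splits this into $W_1(F_{\e_S},\widetilde F_{\e_S})$, bounded in mean by $J_1(F_{\e_S})/\sqrt m$ via Lemma \ref{BM} (the noise is independent of $Z_S$), and $W_1(\widetilde F_{\e_S},\widehat F_{\e_S})$, which the atom-to-atom coupling bounds by $\sqrt{\Delta^\intercal(\tfrac1m Z_S^\intercal Z_S)\Delta}$, i.e.\ by $\sigma_S\sqrt{(s+1)/m}$ in conditional mean. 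The two $\sigma_S\sqrt{(s+1)/m}$ contributions combine into the factor $(2+\delta)$ for large $m$, while the empirical-noise term supplies the additive $J_1(F_{\e_S})$, completing \eqref{bdd1}.

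For \eqref{bdd2}, conditioning on the entire exploration history makes $F_{Y'}$ deterministic and $\widehat F_{Y,S}$ the empirical measure of $N_S$ i.i.d.\ draws from it, so Lemma \ref{BM} yields $\E[W_1(\widehat F_{Y,S},F_{Y'})\mid\text{exploration}]\le J_1(F_{Y'})/\sqrt{N_S}$ ($N_S$ being deterministic). Towering down to $\sigma(Z_S)$, it then suffices to show $J_1(F_{Y'})\le(1+\delta)J_1(F_Y)$ almost surely for large $m$, which I would derive from $J_1(F_{Y'})\to J_1(F_Y)$ by changing the emulator to the truth in two steps and applying Lemma \ref{cute} at each. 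The coefficient step compares $X_S^\intercal\beta_S+\e_S$ with $X_S^\intercal\bt_S+\e_S$, whose common summand is the noise $\e_S$, bounded by Assumption \ref{a3}; Lemma \ref{cute} then bounds the $J_1$ gap by $\int_\R M_{C}[|F_{\beta_S^\intercal X_S}-F_{\bt_S^\intercal X_S}|]^{1/2}\,dx$, and Lemma \ref{kouniao} with $p=\tfrac12$ (whose Lipschitz and sub-exponential hypotheses are exactly Assumption \ref{a2}) sends this to $0$ as $\|\Delta\|_2\to0$.

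The main obstacle is the remaining noise step, comparing $X_S^\intercal\bt_S+\e_S$ with $X_S^\intercal\bt_S+\widehat{\e}_S$: here the common summand is the linear predictor $X_S^\intercal\bt_S$, which Lemma \ref{cute} requires to have bounded support. When $X$ is bounded (the setting emphasized after Assumption \ref{a3}) this is immediate, and one applies Lemma \ref{cute} with $r=\sup|X_S^\intercal\bt_S|$, reducing matters to $\int M_r[|F_{\e_S}-\widehat F_{\e_S}|]^{1/2}\,dx\to0$, which follows from the uniform convergence $\widehat F_{\e_S}\to F_{\e_S}$ (Glivenko–Cantelli for the true noise, with the CDF-Lipschitz control of Assumption \ref{a2} absorbing the $\mathcal{O}(\|\Delta\|_2)$ perturbation of the residual atoms). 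For merely sub-exponential $X$ one must truncate: by Lemma \ref{expmax} the exploration predictors lie in $[-r_m,r_m]$ with $r_m=\mathcal{O}(\log m)$ almost surely, and since the factor $1+2r$ in Lemma \ref{kouniao} grows only logarithmically while a quantitative Glivenko–Cantelli (DKW) rate gives $\sup|\widehat F_{\e_S}-F_{\e_S}|=\mathcal{O}(\sqrt{\log m/m})$, the product still vanishes. Beyond this boundedness issue, the only remaining care is bookkeeping the ``almost surely, for $m$ large'' quantifiers: all the convergences above ($\tfrac1m Z_S^\intercal Z_S\to\Lambda_S$, $\Delta\to0$, $\widehat F_{\e_S}\to F_{\e_S}$) hold on a single probability-one event, on which the stated slacks $\delta$ are eventually met.
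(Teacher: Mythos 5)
Your treatment of \eqref{bdd1} is correct and essentially identical to the paper's: the same three contributions (coefficient error via the OLS identity $\E[\Delta\Delta^\intercal\mid Z_S]=\sigma_S^2(Z_S^\intercal Z_S)^{-1}$ plus Jensen, empirical true-noise error via Lemma \ref{BM}, and residual-atom perturbation via the atom-to-atom coupling), with the same constants. The skeleton of \eqref{bdd2} (Lemma \ref{BM} conditionally on the exploration data, then a continuity statement for $J_1$) is also the paper's. The genuine gap is in your ``noise step.'' There you apply Lemma \ref{cute} with common summand $Z=X_S^\intercal\bt_S$ and differing parts $\e_S$ versus $\widehat{\e}_S$, reducing to $\int_\R M_r[|F_{\e_S}-\widehat{F}_{\e_S}|]^{1/2}\,dx\to 0$. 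Two things go wrong. First, the $X_S$ entering the law $F_{Y'}$ is a \emph{fresh exploitation draw}, whose support is unbounded under Assumption \ref{a2}; Lemma \ref{expmax} controls the maximum of the $m$ \emph{exploration} samples (the paper uses it only to bound the atom displacements $|\widehat{\tau}_\ell-\widetilde{\tau}_\ell|\leq K_m^*\|\bt_S-\beta_S\|_2$), so it cannot justify truncating the support of $X_S^\intercal\bt_S$. Second, and fatally, your reduction needs locally uniform convergence of $\widehat{F}_{\e_S}$ to $F_{\e_S}$, and your appeal to ``the CDF-Lipschitz control of Assumption \ref{a2}'' is misplaced: that assumption makes the CDFs of linear marginals $a^\intercal X$ Lipschitz, not the noise CDF $F_{\e_S}$. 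Nothing in Assumptions \ref{ass:eS-cond}--\ref{a3} rules out atomic noise (e.g., $\e_S=\pm 1$ with probability $1/2$, independent of $X_S$). For such noise the residual atoms, displaced by $\mathcal{O}(K_m^*\|\bt_S-\beta_S\|_2)$, keep $\sup_y|\widehat{F}_{\e_S}(y)-F_{\e_S}(y)|$ bounded away from zero, so $M_r[|F_{\e_S}-\widehat{F}_{\e_S}|]$ remains of order one on an interval of length comparable to $r$, and your integral does not vanish.

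The paper's proof is organized precisely to avoid this obstruction: in its noise-comparison steps (a) and (b), the \emph{smooth} factor $F_{X_S^\intercal\beta_S}$ (Lipschitz by Assumption \ref{a2}, sub-exponential tails) is kept as the convolution kernel, so the difference of the convolved CDFs is bounded either by $C_\li W_1(\e_S,\widetilde{\e}_S)$ via Kantorovich--Rubinstein duality, or by $C_\li K_m^*\|\bt_S-\beta_S\|_2$ via atom perturbation --- neither of which requires any regularity of $F_{\e_S}$ --- while the Lemma \ref{cute}/Lemma \ref{kouniao} route is reserved for the coefficient comparison (step (c)), where the common summand is the bounded noise, exactly as in your step (i). A secondary but real issue is your ``towering down'': you prove pathwise $J_1(F_{Y'})\to J_1(F_Y)$ almost surely and conclude a bound on $\E[J_1(F_{Y'})\mid Z_S]$, but pathwise a.s.\ convergence does not transfer to the conditional expectation without a conditional dominated-convergence or uniform-integrability argument (note also that the conditioning field $\sigma(Z_S)$ changes with $m$). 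The paper sidesteps this by bounding the conditional expectations directly by $Z_S$-measurable quantities ($K_m^*$ and conditional moments of $\|\bt_S-\beta_S\|_2$ obtained from the OLS identity) and letting those converge almost surely.
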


The proof of Lemma \ref{lemma1} is quite technical and can be found in Appendix \ref{4.1l}. 
An immediate consequence of Lemma \ref{lemma1} is that $L_S$ can be estimated by a more computable quantity that serves as an asymptotic upper bound. This fact will be used for algorithm design in Section \ref{sec:alg}.
\begin{Th}\label{Thm:bdd}
Assume $B>c_\ex m$. 
Under Assumptions \ref{ass:eS-cond}-\ref{a3}, with probability $1$,  
\begin{align}\label{eq:GS-def}
&\limsup_{B,m\uparrow\infty}\frac{L_S(m)}{G_S(m)}\leq 1&G_S(m) := \sqrt{\frac{k_1(S)}{m}}+\sqrt{\frac{k_2(S)}{B-c_\ex m}},
\end{align}
where 
\begin{align*}
&k_1(S) = \left(2\sqrt{s+1}\sigma_S+J_1(F_{\e_S})\right)^2& k_2(S) = c_{\ext}(S)J^2_1(F_Y).
\end{align*}
\end{Th}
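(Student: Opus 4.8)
The plan is to treat Theorem \ref{Thm:bdd} as an essentially elementary consequence of Lemma \ref{lemma1}: the triangle-inequality decomposition \eqref{W_1tri} together with the two bounds \eqref{bdd1}--\eqref{bdd2} already controls $L_S(m)$ by a sum of two explicit terms, so all that remains is to compare this sum against $G_S(m)$ in the joint limit. Throughout I would fix $\delta>0$ and work on the probability-one event $\Omega_\delta$ on which Lemma \ref{lemma1} holds for all large $m$ and all $B>c_\ex m$; the $\delta$-free statement \eqref{eq:GS-def} is recovered at the very end by intersecting these events over a sequence $\delta\downarrow 0$.

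First I would massage the exploration term \eqref{bdd1}. Writing $\sqrt{k_1(S)}=2\sqrt{s+1}\,\sigma_S+J_1(F_{\e_S})$, the numerator of \eqref{bdd1} equals $\sqrt{k_1(S)}+\delta\sqrt{s+1}\,\sigma_S$; since $\sqrt{s+1}\,\sigma_S\le \tfrac12\sqrt{k_1(S)}$ by the definition of $k_1(S)$, this gives
\[
\E[W_1(F_Y,F_{Y'})\mid Z_S]\le \Bigl(1+\tfrac{\delta}{2}\Bigr)\sqrt{\tfrac{k_1(S)}{m}},
\]
i.e. the exploration term is at most $(1+\delta/2)$ times the first summand of $G_S(m)$. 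Next I would handle the exploitation term \eqref{bdd2}, where the only genuine work is the floor in $N_S=\lfloor (B-c_\ex m)/c_\ext(S)\rfloor$. Setting $t:=B-c_\ex m$ and using $N_S>t/c_\ext(S)-1$, one gets $1/\sqrt{N_S}\le \sqrt{c_\ext(S)/t}\,(1-c_\ext(S)/t)^{-1/2}$, so that
\[
\E[W_1(\widehat F_{Y,S},F_{Y'})\mid Z_S]\le (1+\delta)\,\eta(t)\,\sqrt{\tfrac{k_2(S)}{t}},\qquad \eta(t):=\bigl(1-c_\ext(S)/t\bigr)^{-1/2},
\]
where $\eta(t)\to 1$ as $t\to\infty$ and $\sqrt{k_2(S)/t}$ is exactly the second summand of $G_S(m)$.

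Finally I would combine the two estimates. With $g_1:=\sqrt{k_1(S)/m}$ and $g_2:=\sqrt{k_2(S)/t}$, so that $G_S(m)=g_1+g_2$, the two displays yield $L_S(m)\le (1+\delta/2)g_1+(1+\delta)\eta(t)g_2$. Since $(a g_1+b g_2)/(g_1+g_2)\le\max\{a,b\}$ for nonnegative weights, it follows that
\[
\frac{L_S(m)}{G_S(m)}\le \max\Bigl\{1+\tfrac{\delta}{2},\,(1+\delta)\,\eta(t)\Bigr\}.
\]
Letting $B,m\uparrow\infty$ with $t=B-c_\ex m\to\infty$ forces $\eta(t)\to1$, whence $\limsup L_S(m)/G_S(m)\le 1+\delta$ on $\Omega_\delta$. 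Taking $\Omega:=\bigcap_{j\ge1}\Omega_{1/j}$, still of probability one, and sending $j\to\infty$ then delivers the claimed bound of $1$.

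The step I expect to require the most care is the precise meaning of the double limit. The normalization $G_S(m)$ is the correct one only when the exploitation budget $t=B-c_\ex m$ (equivalently $N_S$) also diverges: otherwise the integer gap in $N_S$ can make $1/\sqrt{N_S}$ exceed $\sqrt{c_\ext(S)/t}$ by a bounded but nontrivial factor, and the ratio need not stay below $1$. I would therefore state explicitly that the $\limsup$ is taken over $m\to\infty$ \emph{and} $N_S\to\infty$, which is the only regime in which \eqref{eq:GS-def} is informative and the one actually realized by the adaptive algorithm of Section \ref{sec:alg}. The remaining bookkeeping — that each Lemma \ref{lemma1} event depends on $\delta$, resolved by the countable intersection above — is routine, as is the verification that $G_S(m)>0$ so the ratio is well defined.
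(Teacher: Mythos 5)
Your proposal is correct and follows essentially the same route as the paper: the paper's proof is precisely the combination of the triangle-inequality decomposition \eqref{W_1tri} with the two bounds of Lemma \ref{lemma1}, followed by sending $\delta\downarrow 0$, and you have simply made the elementary algebra explicit (absorbing the $(2+\delta)$ factor into $\sqrt{k_1(S)}$, controlling the floor in $N_S$ via the factor $\eta(t)$, and taking a countable intersection over $\delta = 1/j$ to get a single probability-one event). Your closing caveat that the normalization by $G_S(m)$ is only meaningful when $B - c_\ex m \to\infty$ (so that $N_S\to\infty$ and $\eta(t)\to 1$) is a legitimate refinement of an integer-rounding point the paper glosses over, but it does not change the substance of the argument.
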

\begin{proof}[Proof of Theorem \ref{Thm:bdd}]
Combining \eqref{W_1tri} and Lemma \ref{lemma1} yields that for any $\delta>0$,
\begin{align*}
&\limsup_{B,m\uparrow\infty}\frac{L_S(m)}{G_S(m)}\leq 1+\delta&a.s.
\end{align*}
The proof is finished by sending $\delta\downarrow 0$.  
\end{proof}

\begin{Rem}
Theorem \ref{Thm:bdd} does not require $B$ and $m$ diverge at a certain rate as long as $B>c_\ex m$ is satisfied to ensure $G_S(m)$ is well-defined. 
This will be made clear from the proof of Lemma \ref{lemma1}, where we shall see that consistency of the parameters only depends on the exploration stage. 
\end{Rem}

\section{Algorithm}\label{sec:alg}
In this section, we first use the asymptotic upper bound $G_S$ from \eqref{eq:GS-def} to analyze the optimal exploration rate for each exploitation choice $S\subseteq  [n]$ in an exploration-exploitation policy, which allows us to find a deterministic strategy that explores and exploits optimally. 
Then we propose an adaptive procedure that, along each trajectory, resembles the best exploration-exploitation policy. 
Integer rounding effects defining $N_S$ are subsequently ignored to simplify analysis. 

\subsection{Optimal exploration based on $G_S$}\label{sec:alg1}
The quantity $G_S(m)$ defined in Theorem \ref{Thm:bdd} provides a computable criterion to evaluate the model $S$ as a simulator for $Y$.  
For fixed $S$, $G_S(m)$ is a strictly convex function of $m$ in the domain, attaining its minimum at 
\begin{align}
m^* (S) = \frac{B}{c_\ex+\left(\frac{c^2_\ex k_2(S)}{k_1(S)}\right)^{1/3}},\label{optm}
\end{align}
with optimum value
\begin{align}
G^*_S: = G_S(m^* (S)) = \frac{\left[(c_\ex k_1(S))^{1/3}+k_2(S)^{1/3}\right]^{3/2}}{\sqrt{B}}\propto \left[(c_\ex k_1(S))^{1/3}+k_2(S)^{1/3}\right]^{3/2}.\label{optval}
\end{align}
\eqref{optval} is the (asymptotic) minimum loss of exploiting model $S$ with optimal exploration rate \eqref{optm}.  
The model with the smallest value $G_S^*$ is considered the optimal model under the ``upper bound criterion'', which is our terminology for using $G_S$ as a criterion.
We assume the optimal model is unique and denoted by $S_\opt$, i.e., 
\begin{align}
S_\opt = \argmin_{S\subseteq  [n]}\left[(c_\ex k_1(S))^{1/3}+k_2(S)^{1/3}\right]^{3/2}.\label{optmod}
\end{align} 
We call the policy that spends $m^* (S_\opt)$ rounds on exploration and then selects model $S_\opt$ for exploitation a \emph{perfect exploration-exploitation policy}. This is similar to the perfect uniform exploration policies in \cite{xu2021bandit} where exact asymptotics of the loss function is used for estimation of first-order statistics of QoI.  

To investigate the performance of a perfect exploration-exploitation policy, we compare it to the empirical CDF estimator for $Y$ using samples of $Y$ only. 
Let $\widehat{F}_{Y, \di}(B)$ denote the empirical CDF of $Y$ with the whole budget devoted to sampling $Y$. 
The number of admissible samples for $\widehat{F}_{Y, \di}(B)$ is $B/c_0$, which, combined with the lower bound in \eqref{upc}, implies the following lower bound for the mean $W_1$ distance between $\widehat{F}_{Y, \di}(B)$ and $F_Y$: For every $B>c_0$, 
\begin{align}
\E\left[W_1\left(\widehat{F}_{Y, \di}(B), F_{Y}\right)\right]\geq \sqrt{\frac{c_0J^2_0(F_{Y})}{2B}}.\label{lbb}
\end{align}
Under the same budget, the average $W_1$ distance between $F_Y$ and the estimator given by the perfect exploration-exploitation policy is asymptotically bounded by $G^*_{S_\opt}$, as guaranteed by Theorem \ref{Thm:bdd}.
This motivates us to introduce the ratio between the lower bound in \eqref{lbb} and $G^*_{S_\opt}$ as a measure for the efficiency of perfect exploration-exploitation policies relative to $\widehat{F}_{Y, \di}(B)$: 
\begin{align}
  \frac{\sqrt{\frac{c_0J^2_0(F_{Y})}{2B}}}{G^*_{S_\opt}} &= \sqrt{\frac{c_0J^2_0(F_{Y})}{2\left[(c_\ex k_1(S_\opt))^{1/3}+k_2(S_\opt)^{1/3}\right]^{3}}}\stackrel{\text{(Jensen)}}{\geq}  \sqrt{\frac{c_0J_0^2(F_Y)}{8(c_\ex k_1(S_\opt)+k_2(S_\opt))}}\label{ratio9}\\
& \geq \frac{1}{\sqrt{8\left(\kappa_0^2 + \kappa_1^2\right)}} \geq \frac{1}{4\max\{\kappa_0, \kappa_1\}}\nonumber,
\end{align}
where
\begin{align*}
&\kappa_0 =\frac{2\sigma_{S_\opt}\sqrt{n+1}+J_1(F_{\e_{S_\opt}})}{J_0(F_Y)}&\kappa_1 =\sqrt{\frac{c_\ext(S_\opt)}{c_0}}\frac{J_1(F_Y)}{J_0(F_Y)}.
\end{align*}
The efficiency ratio \eqref{ratio9} is large if $\max\{\kappa_0, \kappa_1\}$ is small. 
Lemma \ref{mylemo} tells us that, under appropriate assumptions on $Y$, $J_1(F_Y)/J_0(F_Y)$ is of constant order. 
In this case, $\max\{\kappa_0, \kappa_1\}\ll 1$ if $\sigma^2_\opt\ll 1$ and $\sqrt{c_\ext(S_\opt)/c_0}\ll 1$, which corresponds to the scenario where the exploration is efficient and exploitation sampling rate is large.
Under such circumstances, the perfect exploration-exploitation policy is expected to demonstrate a superior performance over the empirical CDF estimator based only on the samples of $Y$.   

\subsection{An adaptive algorithm}\label{sec:alg}
Finding perfect exploration-exploitation policies requires evaluation of \eqref{optm} and \eqref{optval}, which uses oracle information of model statistics such as $\sigma_S^2$, $J_1(F_{\e_S})$ and $J_1(F_Y)$. 
These statistics are not available in practice but can be approximated in an online fashion using exploration data.
At each step $t\geq n+2$ in exploration, we define
\begin{align}
\sg_S^2(t) &= \frac{1}{t-|S|-1}\left\|Y_\ex(t) - Z_S(t)\bt_S(t)\right\|^2_2&\widehat{J}_{Y, 1}(t) = J_1(F_{Y_\ex(t)}(y))\nonumber\\
\widehat{J}_{\e_S, 1}(t) &= J_1(F_{\widehat{\e}_S(t)}(y)),\label{estp}
\end{align}
where the parameter $t$ indicates that estimates/data are based on the first $t$ rounds of exploration, and $F_{Y_\ex(t)}(y)$ is the empirical CDF of $Y$ based on the exploration samples $Y_\ex(t)$.  
Under Assumptions \ref{ass:eS-cond}-\ref{a3}, the consistency of $\sg_S^2(t)$ is well known in the linear model literature; see \cite{Lai_1982} for instance. 
The following lemma shows that both $\widehat{J}_{Y, 1}(t)$ and $\widehat{J}_{\e_S, 1}(t)$ are also consistent. 
\begin{Lemma}\label{mylemo1}
Under Assumptions \ref{ass:eS-cond}-\ref{a3}, $\widehat{J}_{Y, 1}(t)\to J_1(F_Y)$ and $\widehat{J}_{\e_S, 1}(t)\to J_1(F_{\e_S})$ as $t\to\infty$.
\end{Lemma}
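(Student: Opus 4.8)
The plan is to reduce both convergences to the single elementary inequality
\[
|J_1(F)-J_1(G)|\le\int_\R\sqrt{|F(x)-G(x)|}\,dx ,
\]
which is exactly the estimate established in the chain ending at \eqref{dongge} and holds for any two CDFs $F,G$. Thus it suffices to show that $\int_\R\sqrt{|F_{Y_\ex(t)}(x)-F_Y(x)|}\,dx\to0$ and $\int_\R\sqrt{|F_{\widehat{\e}_S(t)}(x)-F_{\e_S}(x)|}\,dx\to0$ almost surely. In each case I split $\R$ into a fixed window $[-T,T]$ and its complement: on the window the Glivenko--Cantelli theorem gives $\sup_x|F_{Y_\ex(t)}-F_Y|\to0$ a.s., so that $\int_{-T}^T\sqrt{|F_{Y_\ex(t)}-F_Y|}\,dx\le 2T\,\|F_{Y_\ex(t)}-F_Y\|_\infty^{1/2}\to0$, and all the real work is a tail estimate that is uniform in $t$.

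For $\widehat{J}_{Y,1}(t)$ the samples $Y_1,\dots,Y_t$ are i.i.d.\ copies of $Y$, which is sub-exponential by Assumptions \ref{a2} and \ref{a3}. Using $|F_{Y_\ex(t)}-F_Y|\le(1-F_{Y_\ex(t)})+(1-F_Y)$ on the right tail (and the symmetric bound $|F_{Y_\ex(t)}-F_Y|\le F_{Y_\ex(t)}+F_Y$ on the left), the tail contribution is controlled by $\int_{|x|>T}\sqrt{1-F_{Y_\ex(t)}}\,dx$ plus the deterministic tail of $J_1(F_Y)$, which is finite. To bound the empirical tail uniformly in $t$ I insert an exponential weight: for a rate $\lambda_0$ small enough that $\E[e^{\lambda_0|Y|}]<\infty$, Cauchy--Schwarz gives
\[
\int_T^\infty\sqrt{1-F_{Y_\ex(t)}}\,dx\le\Big(\int_T^\infty e^{-\lambda_0 x}\,dx\Big)^{1/2}\Big(\int_0^\infty e^{\lambda_0 x}\big(1-F_{Y_\ex(t)}(x)\big)\,dx\Big)^{1/2},
\]
and the second factor equals $\lambda_0^{-1}\tfrac1t\sum_\ell(e^{\lambda_0 Y_\ell^+}-1)$ with $Y_\ell^+=\max(Y_\ell,0)$, which by the strong law of large numbers converges a.s.\ to a finite constant and is therefore a.s.\ bounded in $t$. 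Hence the empirical tail is $\le K e^{-\lambda_0 T/2}$ uniformly in $t$, tending to $0$ as $T\to\infty$; combining window and tail estimates yields $\widehat{J}_{Y,1}(t)\to J_1(F_Y)$ a.s.

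For $\widehat{J}_{\e_S,1}(t)$ I compare $F_{\widehat{\e}_S(t)}$ to $F_{\e_S}$ through the empirical CDF $F_{\widetilde{\e}_S(t)}$ of the \emph{true} residuals $\widetilde{\tau}_\ell=Y_\ell-X_{S,\ell}^\intercal\beta_S$, splitting $\int\sqrt{|F_{\widehat{\e}_S(t)}-F_{\e_S}|}$ into $\int\sqrt{|F_{\widehat{\e}_S(t)}-F_{\widetilde{\e}_S(t)}|}+\int\sqrt{|F_{\widetilde{\e}_S(t)}-F_{\e_S}|}$ via $\sqrt{a+b}\le\sqrt a+\sqrt b$. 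Since the $\widetilde{\tau}_\ell$ are i.i.d.\ copies of $\e_S$ with $|\e_S|\le C$ a.s.\ (Assumption \ref{a3}), both $F_{\widetilde{\e}_S(t)}$ and $F_{\e_S}$ are supported in $[-C,C]$, so the second integral is $\le 2C\,\|F_{\widetilde{\e}_S(t)}-F_{\e_S}\|_\infty^{1/2}\to0$ by Glivenko--Cantelli, with no tail analysis needed. For the first integral I use $|\widehat{\tau}_\ell-\widetilde{\tau}_\ell|=|X_{S,\ell}^\intercal(\bt_S-\beta_S)|\le K_t^*\delta_t$ (cf.\ \eqref{lkng}), where $K_t^*=\max_{\ell\in[t]}\|X_{S,\ell}\|_2$ and $\delta_t=\|\bt_S(t)-\beta_S\|_2$, which yields the sandwich $F_{\widetilde{\e}_S(t)}(x-\eta_t)\le F_{\widehat{\e}_S(t)}(x)\le F_{\widetilde{\e}_S(t)}(x+\eta_t)$ with $\eta_t:=K_t^*\delta_t$. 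Hence $|F_{\widehat{\e}_S(t)}-F_{\widetilde{\e}_S(t)}|(x)\le g_t(x):=F_{\widetilde{\e}_S(t)}(x+\eta_t)-F_{\widetilde{\e}_S(t)}(x-\eta_t)$, a nonnegative function supported in $[-C-\eta_t,C+\eta_t]$ with $\int_\R g_t\,dx=2\eta_t$, so Cauchy--Schwarz gives $\int\sqrt{g_t}\,dx\le\sqrt{(2C+2\eta_t)\,2\eta_t}$. Finally $\eta_t\to0$ a.s., since $K_t^*\lesssim\log t$ a.s.\ by Lemma \ref{expmax} and $\delta_t=O(\sqrt{\log\log t/t})$ a.s.\ by the strong consistency of least squares \cite{Lai_1982} (the rate following from the law of the iterated logarithm for the i.i.d.\ score $\sum_\ell X_{S,\ell}\e_{S,\ell}$), which beats $1/\log t$.

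I expect the main obstacle to be the uniform-in-$t$ tail control in the first convergence: because $F_{Y_\ex(t)}$ fluctuates more heavily than $F_Y$ in the tails, passing the limit through the integral over all of $\R$ requires the weighted Cauchy--Schwarz/SLLN estimate above rather than a naive dominated-convergence argument. In the second convergence the delicate point is instead the non-i.i.d.\ nature of the estimated residuals: one must guarantee $\eta_t=K_t^*\delta_t\to0$, i.e.\ that the $O(\log t)$ growth of the design maximum is dominated by the decay of $\|\bt_S(t)-\beta_S\|_2$, which is precisely where the quantitative strong-consistency rate — not mere consistency — of the regression estimator is needed.
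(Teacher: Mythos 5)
Your proof is correct, but it follows a genuinely different route from the paper's. For $\widehat{J}_{Y,1}(t)$, the paper uses a \emph{growing} window $I_t'=I_t\cup[-\log t,\log t]$ (whose size is controlled by Lemma \ref{expmax}), the Dvoretzky--Kiefer--Wolfowitz inequality plus Borel--Cantelli to get the a.s. rate $d_K\lesssim\sqrt{\log t/t}$ on the window, and dominated convergence on the complement, where the empirical CDF is identically $0$ or $1$; this yields an explicit a.s. convergence rate of order $\log t\,(\log t/t)^{1/4}$ as a byproduct. You instead fix the window, use only the qualitative Glivenko--Cantelli theorem there, and make the tail estimate uniform in $t$ by an exponentially weighted Cauchy--Schwarz bound whose random factor $\lambda_0^{-1}t^{-1}\sum_\ell(e^{\lambda_0 Y_\ell^+}-1)$ is a.s. bounded by the SLLN — more elementary (no concentration inequality), but rate-free. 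For $\widehat{J}_{\e_S,1}(t)$ the contrast is larger: the paper merely asserts the result follows "similarly together with the second part of the proof of Lemma \ref{lemma1}" (which runs through conditional-expectation bounds, the local maximum operator, and Lemma \ref{kouniao}), whereas you give a self-contained argument: the coupling $|\widehat\tau_\ell-\widetilde\tau_\ell|\le K_t^*\delta_t$ (the same device as \eqref{lkng}) yields a CDF sandwich, whose gap $g_t$ has total integral $2\eta_t$ and compact support, so Cauchy--Schwarz gives $\int\sqrt{g_t}\le\sqrt{(2C+2\eta_t)\,2\eta_t}$, and the bounded support of $\e_S$ (Assumption \ref{a3}) removes any tail analysis for the Glivenko--Cantelli piece. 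The one place where you correctly identify that more than the paper states explicitly is needed is the requirement $K_t^*\delta_t\to 0$ a.s.: plain strong consistency of $\bt_S$ does not beat the $\log t$ growth of $K_t^*$, and your appeal to the LIL rate $\delta_t=O(\sqrt{\log\log t/t})$ (via $(t^{-1}Z_S^\intercal Z_S)^{-1}\to\Lambda_S^{-1}$ a.s. and the LIL for the i.i.d. score $\sum_\ell X_{S,\ell}\e_{S,\ell}$, which has finite variance by Assumptions \ref{a2}--\ref{a3}) closes this gap cleanly; the paper's own treatment of the analogous issue in Lemma \ref{lemma1}(b) works with conditional moments of $\delta$ given $Z_S$ rather than a.s. rates, so your argument is a legitimate — arguably more transparent — alternative at this step.
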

\begin{proof}
We only provide the proof for $\widehat{J}_{Y, 1}(t)\to J_1(F_Y)$; the consistency of $\widehat{J}_{\e_S, 1}(t)$ can be shown similarly together with the second part of the proof of Lemma \ref{lemma1}.  

Let $I_t$ be the support of the $t$ exploration samples of $Y$ that are used to produce $F_{Y_\ex(t)}(y)$. 
Assumptions \ref{a2}-\ref{a3} ensure that $Y$ is sub-exponential. 
By Lemma \ref{expmax}, 
\begin{align}
&|I_t|\lesssim\log t &a.s., \label{maxexp}
\end{align}
where the implicit constant is realization-dependent.  

On the other hand, according to the Dvoretzky-Kiefer-Wolfowitz inequality \cite{massart1990tight}, for every $t\in\N$, with probability at least $1-2t^{-2}$, $d_K(F_Y(y), F_{Y_\ex(t)})\leq\sqrt{\log t/t}$, where $d_K$ is the Kolmogorov distance defined in \eqref{dis-Kol}.
This combined with a Borel--Cantelli argument yields that
\begin{align}
&d_K(F_Y(y), F_{Y_\ex(t)})\lesssim\sqrt{\frac{\log t}{t}} &a.s. \label{DKW}
\end{align}

We now consider a trajectory along which both \eqref{maxexp} and \eqref{DKW} hold. 
Define $I_t' = I_t\cup [-\log t, \log t]$. 
In this case, for all large $t$, 
\begin{align}
\left|\widehat{J}_{Y,1}(t)- J_1(F_Y)\right|&\stackrel{\eqref{dongge}}{\leq}\int_{\R}\sqrt{|F_Y(y)-F_{Y_\ex(t)}(y)|}dy\nonumber\\
&\leq\int_{I_t'}\sqrt{|F_Y(y)-F_{Y_\ex(t)}(y)|}dy + \int_{(I'_t)^\complement}\sqrt{|F_Y(y)-F_{Y_\ex(t)}(y)|}dy\nonumber\\
&\lesssim \log t\cdot d_K(F_Y(y), F_{Y_\ex(t)}) +  \int_{I^\complement_t}\sqrt{|F_Y(y)-F_{Y_\ex(t)}(y)|}dy.\label{tb}
\end{align}
As $t\to\infty$, the first term in \eqref{tb} diminishes to $0$ because of \eqref{maxexp} and \eqref{DKW}, and the second term diminishes to $0$ as a result of dominated convergence. 
\end{proof}

Plugging \eqref{estp} into $G_S$ and \eqref{optm} allows us to estimate the optimal loss for each model $S$ at exploration step $t$:
\begin{align}
\rho_S = \widehat{G}_S(\widehat{m}^* (S)\vee t; t),\label{rho}
\end{align}
where 
\begin{align}
\widehat{k}_1(S; t) &= \left(2\sqrt{s+2}\sg_S(t)+\widehat{J}_{\e_S, 1}(t)\right)^2, &\widehat{k}_2(S; t) &= c_{\ext}(S)\widehat{J}^2_{Y,1}(t)\label{kk1}\\
\widehat{G}_S(m; t) &= \sqrt{\frac{\widehat{k}_1(S; t)}{m}}+\sqrt{\frac{\widehat{k}_2(S; t)}{B-c_\ex m}}&\widehat{m}^* (S; t) &= \frac{B}{c_\ex+\left(\frac{c^2_\ex \widehat{k}_2(S; t)}{\widehat{k}_1(S; t)}\right)^{1/3}}\label{kk2}.
\end{align}
The second argument in $\widehat{G}_S(\cdot\ ; \ \cdot)$ denotes the number of samples used to estimate the parameters in \eqref{kk1} and \eqref{kk2}, and the first argument denotes the function variable. 
\eqref{rho} can be used to decide which model is optimal to exploit at time $t$, and the corresponding estimated optimal stopping time $\widehat{m}^* $ will indicate if more exploration is needed.  
The details are given in Algorithm \ref{alg:aETC}, the adaptive Explore-Then-Commit algorithm for multifidelity distribution learning (AETC-d).

\begin{algorithm}[H]
\hspace*{\algorithmicindent} \textbf{Input}: samplers for high-fidelity model $Y$ and $n$ low-fidelity models $X_i$; $B$: total budget; $c_i$: cost parameters (i.e. the exploration cost $c_\ex = \sum_{i=0}^nc_i$ and the exploitation cost $c_S = \sum_{i\in S}c_i$ for $S\subseteq [n]$)\\
    \hspace*{\algorithmicindent} \textbf{Output}: An estimate for $F_Y$
 \begin{algorithmic}[1]
\STATE compute the maximum exploration round $M = \lfloor B/c_{\ex}\rfloor$
\IF{$M\leq n+2$}{
\STATE report ``budget is too small"
}
\ELSE{
\STATE collect $t = (n+2)$ samples for exploration
 \WHILE{$n+2\leq  t\leq M$}
 \FOR{$S\subseteq  [n]$}{
 \STATE{compute $\widehat{k}_1(S; t)$ and $\widehat{k}_2(S; t)$ as \eqref{kk1} using $t$ exploration samples}
 \STATE{compute $\widehat{m}^* (S; t)$ and $\rho_S$ as in \eqref{kk2} and \eqref{rho}}
 }
 \ENDFOR
\STATE {find the estimated optimal model $S^* = \argmin_{S\subseteq  [n]}\rho_S$}
\IF{$\widehat{m}^*(S^*; t)>2t$}
        \STATE take $t$ new exploration samples, and set $t \gets 2t$
    \ELSIF{$t<\widehat{m}^*(S^*; t)\leq 2t$}
    \STATE take $\lceil(\widehat{m}^*(S^*; t)-t)/2\rceil$ new exploration samples, and set $t \gets \lceil\frac{t + \widehat{m}^*(S^*; t)}{2}\rceil$
    \ELSE
        \STATE {exhaust the remaining budget $(B - c_\ex t)$ to draw $N_{S^*}:=\lfloor (B-c_\ex t)/c_{S^*}\rfloor$ i.i.d. joint samples of $X_{S^*}$, $\{X_{\ext, S^*, j}\}_{j\in [N_{S^*}]}$}
        \STATE {transform $X_{\ext, S^*, j}$ to the surrogate emulator samples $A_j:= X_{\ext, S^*, j}^\top\bt_{S^*} +\widehat{\e}_{S^*, j}$, where $X_{\ext, S^*, j}$ includes the intercept term $1$ in its first component, $\bt_{S^*}$ is the estimated least-squares coefficients based on $t$ exploration samples as in \eqref{dadad}, and $\widehat{\e}_{S^*, j}$ are i.i.d. random variables (independent of $X_{\ext, S^*, j}$) generated from the empirical noise simulator $\widehat{F}_{\e_{S^*}}$ that is learned using $t$ exploration samples as in \eqref{boots0}}
        \STATE {compute the exploitation estimator for $F_Y$ as in \eqref{lr:est}: $\widehat{F}_{Y, S^*}(y) = \frac{1}{N_{S^*}}\sum_{j\in [N_{S^*}]}\mathbf 1_{A_j\leq y}$, and set $t \gets M+1$}
     \ENDIF
 \ENDWHILE
 }
 \ENDIF
 \end{algorithmic}
\caption{AETC algorithm for multifidelity distribution learning (AETC-d)} 
\label{alg:aETC}
\end{algorithm}

When $M = \lfloor B/c_{\ex}\rfloor$ is too small, Algorithm \ref{alg:aETC} terminates immediately due to a deficient budget for initial sampling in exploration. This regime is of lesser interest since the full correlation information between the high- and low-fidelity models cannot even be estimated.
Note that compared to the AETC algorithm in \cite{xu2021bandit}, Algorithm \ref{alg:aETC} does not use regularization and takes a more aggressive exploration to accelerate computation.
(The regularization parameters could be added to stabilize the algorithm when the budget is small, but this would result in choosing additional input parameters.)
Our main theoretical results, asymptotic analysis of Algorithm \ref{alg:aETC}, are given in the next section.

\subsection{Asymptotic performance of the AETC-d: optimality and consistency}

\begin{Th}\label{optimality}
Let $m(B)$ and $S(B)$ denote the exploration rate and selected model in Algorithm \ref{alg:aETC}, respectively. 
  Under Assumptions \ref{ass:eS-cond}-\ref{a3}, with probability $1$,  
\begin{align}
\lim_{B\uparrow\infty}\frac{m(B)}{m^*(S_\opt)} &= 1 &\lim_{B\uparrow\infty}S(B) = S_\opt\label{main1},
\end{align}
where $S_\opt$ is defined in \eqref{optmod}. 
\end{Th}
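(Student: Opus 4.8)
The plan is to treat the adaptive rule in Algorithm \ref{alg:aETC} as a strongly consistent, data-driven surrogate for the perfect exploration-exploitation policy of Section \ref{sec:alg1}, and to show that all quantities driving the rule converge to their oracle counterparts. First I would establish that, along any trajectory on which the exploration count $t$ diverges, the plug-in coefficients converge almost surely: the cited consistency of $\sg_S^2(t)$ together with Lemma \ref{mylemo1} gives $\widehat{k}_1(S;t)\to k_1(S)$ and $\widehat{k}_2(S;t)\to k_2(S)$ for each of the finitely many $S\subseteq[n]$. Writing $\widehat{m}^*(S;t)=B\,\widehat{\gamma}(S;t)$ and $m^*(S)=B\,\gamma(S)$ with $\gamma(S)=(c_\ex+(c_\ex^2 k_2(S)/k_1(S))^{1/3})^{-1}$, this yields $\widehat{\gamma}(S;t)\to\gamma(S)$ and hence $\rho_S=\widehat{G}_S(\widehat{m}^*(S)\vee t;t)\to G_S^*$ as $t\to\infty$, where $G_S^*$ is the optimal loss \eqref{optval}.

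The second step is to show $m(B)\to\infty$ almost surely as $B\uparrow\infty$, so that the consistency above actually takes effect at the stopping time. The termination condition of Algorithm \ref{alg:aETC} is $\widehat{m}^*(S^*;t)\le t$, i.e. $B\,\widehat{\gamma}(S^*;t)\le t$. For any fixed threshold $t_0$, the finitely many values $\widehat{\gamma}(S;t)$ with $t\le t_0$ are strictly positive a.s., so their minimum $c_*>0$; thus for $B>t_0/c_*$ the stopping condition cannot hold for any $t\le t_0$, forcing exploration to continue beyond $t_0$. Since $t_0$ is arbitrary, $m(B)\to\infty$. Combining this with the first step gives model-selection consistency: on the almost-sure event where the estimates converge, $m(B)\to\infty$ ensures the estimates at the stopping time are accurate, and since $S_\opt$ is the \emph{unique} minimizer in \eqref{optmod} while $\rho_S\to G_S^*$ for each $S$, we obtain $S^*(t)=\argmin_S\rho_S=S_\opt$ for all large $t$, hence $\lim_{B}S(B)=S_\opt$.

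For the exploration-rate optimality I would argue that the stopping rule drives $t$ to the fixed point of $t\mapsto\widehat{m}^*(S_\opt;t)=B\,\widehat{\gamma}(S_\opt;t)$. The doubling branch advances $t$ while $t<\widehat{m}^*/2$, and the fine-adjustment (midpoint) branch shrinks the gap $|\widehat{m}^*(S^*;t)-t|$ geometrically, so that at termination $m(B)$ and $\widehat{m}^*(S^*;m(B))$ agree up to lower-order discretization terms. Because $S^*=S_\opt$ eventually and $\widehat{\gamma}(S_\opt;m(B))\to\gamma(S_\opt)$ as $m(B)\to\infty$, the common factor $B$ cancels and $m(B)/m^*(S_\opt)=\widehat{\gamma}(S_\opt;m(B))/\gamma(S_\opt)\cdot(1+o(1))\to 1$.

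The main obstacle is the self-referential nature of the stopping time: $m(B)$ is defined implicitly through estimates built from exactly $m(B)$ samples, so one cannot first fix $t$ and then send $B\to\infty$. The delicate part is therefore the simultaneous control in the last step — that the adaptive $t$ neither halts before the estimates stabilize (handled by the divergence argument) nor overshoots $m^*(S_\opt)$ by more than a $(1+o(1))$ factor — which requires sandwiching $m(B)$ between consecutive midpoint iterates while accounting for the fact that the target $\widehat{m}^*(S^*;t)$ is itself moving as $t$ updates. Finally, the integer-rounding effects in the update and in $N_S$, ignored elsewhere, must be checked to contribute only lower-order error.
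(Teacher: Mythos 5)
Your first two steps are sound and essentially coincide with the paper's own argument: strong consistency of $\widehat{k}_1(S;t)$, $\widehat{k}_2(S;t)$ (hence of $\widehat{m}^*$ and $\widehat{G}_S$) for each of the finitely many $S$, and almost-sure divergence $m(B)\uparrow\infty$ from positivity of the finitely many plug-in quantities at any fixed exploration count. The gap is in how you pass from consistency to the two limits in \eqref{main1}. You first assert selection consistency (``$\rho_S\to G_S^*$, hence $S^*=S_\opt$ for all large $t$'') and only afterwards control the exploration rate; but these two claims cannot be decoupled in that order. The quantity $\rho_{S_\opt}=\widehat{G}_{S_\opt}(\widehat{m}^*(S_\opt;t)\vee t;t)$ is close to the optimal value $G^*_{S_\opt}$ only when the current rate $t$ has \emph{not} overshot $\widehat{m}^*(S_\opt;t)\approx m^*(S_\opt)$. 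If at some earlier iteration a suboptimal model $S'$ with $m^*(S')\gg m^*(S_\opt)$ were selected, the doubling/midpoint updates (which push $t$ up toward $\widehat{m}^*(S';t)\approx m^*(S')$) could drive $t$ far beyond $m^*(S_\opt)$, making $\rho_{S_\opt}=\widehat{G}_{S_\opt}(t;t)$ large and potentially keeping $S_\opt$ from ever being the minimizer. So selection consistency requires rate control, and rate control requires selection consistency; this circularity is exactly what you flag as ``the delicate part'' in your closing paragraph, but flagging it is not resolving it. (A smaller issue: as written, ``$\rho_S\to G_S^*$'' is not meaningful, since both sides tend to $0$ as $B\uparrow\infty$; the comparison must be made after factoring out $B^{-1/2}$, as the paper does by normalizing $B=1$ in \eqref{112}.)

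The missing idea is the paper's induction anchored at a ``reset'' iteration. Because the update rule guarantees $t_{q+1}\leq 2t_q$ (see \eqref{myq}), along any trajectory on which the estimates have stabilized by time $T(\omega)$ there exists a loop iteration $q'$ with $T(\omega)\leq t_{q'}<2T(\omega)$. Taking $B$ so large that $\min_{S}m^*(S)>4T(\omega)$, at this iteration one has $t_{q'}\leq\widehat{m}^*(S;t_{q'})$ \emph{simultaneously for every} $S$ (by \eqref{113}), so each $\rho_S$ is evaluated at its estimated minimizer, and the uniform separation \eqref{112} forces $S_\opt$ to be selected. One then shows inductively that this persists: whenever $S_\opt$ is selected, the next rate satisfies $t\leq\widehat{m}^*(S_\opt;\cdot)\leq(1+\delta)m^*(S_\opt)$ (the invariant \eqref{myku}), which keeps $\rho_{S_\opt}$ near $G^*_{S_\opt}$ and hence keeps $S_\opt$ selected, until termination, at which point $t\geq\widehat{m}^*(S_\opt;t)\geq(1-\delta)m^*(S_\opt)$. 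Both limits in \eqref{main1} then follow by sending $\delta\downarrow 0$. Without this simultaneous induction (or an equivalent device breaking the circularity), your last two steps do not go through.
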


\begin{proof}
Without loss of generality we assume $M = \lfloor B/c_{\ex}\rfloor > n+2$. 
The exploration rate $t$ grows nonlinearly with respect to the index that counts the iterations of the WHILE loop in step 6 of Algorithm \ref{alg:aETC}.
For convenience, we let $q$ denote the loop iteration index, and $t_q$ the corresponding exploration rate, i.e. $t_1 = n+2$. 
Let $q(B)$ be the total iteration steps in Algorithm \ref{alg:aETC}, which is random. 
It follows from the definition that $t_{q(B)} = m(B)$ and 
\begin{align}
&n+2\leq t_q\leq t_{q+1}\leq 2t_q& 1\leq q< q(B).\label{myq}
\end{align} 

Since the desired results in Theorem \ref{optimality} only concern decisions made during exploration, the probability space we will be working on is the product space of all exploration samples. 
Under Assumptions \ref{a1}-\ref{a3} and using Lemma \ref{mylemo1}, $\widehat{k}_1(S; t)>0$ a.s. for all $t\geq n+2$ ($\widehat{k}_1(S; t)=0$ implies that the residuals in the exploration are all the same), and both estimators $\widehat{k}_1(S)$ and $\widehat{k}_2(S)$ are consistent for $k_1(S; t)$ and $k_2(S; t)$, respectively, for all $S\subseteq  [n]$:  
\begin{subequations}\label{eq:myheart}
\begin{align}
\widehat{k}_1(S; t) &= \left(2\sqrt{s+2}\sg_S(t)+\widehat{J}_{\e_S, 1}(t)\right)^2\xrightarrow{t\to\infty} k_1(S) = \left(2\sqrt{s+2}\sigma_S+J_1(F_{\e_S})\right)^2\label{myheart1}\\
\widehat{k}_2(S; t) & = c_{\ext}(S)\widehat{J}^2_{Y,1}(t)\xrightarrow{t\to\infty} k_2(S) = c_{\ext}(S)J^2_1(F_Y).\label{myheart2}
\end{align}
\end{subequations}
Thus, 
\begin{align}
&\underline{\widehat{k}}_1(S):=\inf_{t\geq n+2}\widehat{k}_1(S; t)>0&\forall S\subseteq [n]\ \ \ \ \ a.s.,\label{huaern}
\end{align}
where $\underline{\widehat{k}}_1(S)$ is realization-dependent. 

We claim that, with probability $1$, the adaptive exploration rate $m(B)$ diverges, i.e. $m(B)\to\infty$ as $B\to\infty$. 
The proof of this fact is similar to \cite[Theorem 5.2]{xu2021bandit}, but we give the details for convenience.
If $m(B)$ did not diverge, there would exist a stochastic realization $\omega$ along which \eqref{myheart1} and \eqref{myheart2} hold but $m(B, \omega)<\infty$ as $B\to\infty$, i.e., there exists a $M(\omega)\in\N$ such that $m(B, \omega)\leq M(\omega)<\infty$ for all $B>0$.
Since \eqref{myheart2} holds for $\omega$, there exists $C(\omega)<\infty$ such that
\begin{align}
\sup_t\max_{S\subseteq  [n]}k_2(S; t,\omega)\leq C(\omega). \label{myheart3}
\end{align}
The stopping rule in Algorithm \ref{alg:aETC} implies that
\begin{align}
M(\omega)\geq m(B; \omega)\geq \frac{B}{c_\ex+\left(\frac{c^2_\ex \widehat{k}_2(S^*; t,\omega)}{\widehat{k}_1(S^*; t,\omega)}\right)^{1/3}}\stackrel{\eqref{huaern}, \eqref{myheart3}}{\geq} \frac{B}{c_\ex+\left(\frac{c^2_\ex C(\omega)}{\underline{\widehat{k}}_1(S; \omega)}\right)^{1/3}}\xrightarrow{B\uparrow\infty} \infty,\label{lkj}
\end{align}
where $S^*$ in \eqref{lkj} is the estimated optimal model (which is random) in the $q$-th loop iteration. 
This is a contradiction, which occurs for all $\omega$ such that \eqref{eq:myheart} holds, i.e., the contradiction occurs with probability 1. 
Thus, $m(B; \omega)$ diverges, proving that $m(B)\uparrow\infty$ with probability $1$.  

We now consider any trajectory $\omega$ along which \eqref{myheart1} and \eqref{myheart2} hold and $m(B; \omega)$ diverges. 
We will prove that both results in \eqref{main1} hold for such an $\omega$. 
Fix $\delta<1/2$ sufficiently small. 
Since $S_\opt$ is assumed unique, \eqref{myheart1} and \eqref{myheart2} together with a continuity argument implies that, there exists a sufficiently large $T(\omega)$, such that for all $t\geq T(\omega)$, 
\begin{align}
\max_{(1-\delta)m^* (S_\opt)\leq m\leq (1+\delta)m^* (S_\opt)}{\widehat{G}}_{S_\opt}(m; t)&< \min_{S\subseteq  [n], S\neq S_\opt}\min_{0<m<B/c_\ex}\widehat{G}_S(m; t).\label{112}\\
1-\delta&\leq \frac{\widehat{m}^* (S; t)}{m^* (S; t)}\leq 1+\delta&\forall S\subseteq  [n]\label{113}.
\end{align}
To see why \eqref{112} can be achieved for sufficiently small $\delta$, one may without loss of generality assume $B=1$ (which appears as a multiplicative scaling factor). 
Since $S_\opt$ is unique, the maximum of $G_{S_\opt}(m)$ in a small neighborhood of its minimizer, $m^*(S_\opt)$, is strictly smaller than $\min_{S\subseteq  [n], S\neq S_\opt}\min_{0<m<B/c_\ex}G_S(m)$.
On the other hand, ${\widehat{G}}_{S_\opt}(m; t)$ converges uniformly to $G_{S_\opt}(m)$ in any sufficiently small compact neighborhood around $m^*(S_\opt)$, and $\min_{S\subseteq  [n], S\neq S_\opt}\min_{0<m<B/c_\ex}\widehat{G}_S(m; t)\to\min_{S\subseteq  [n], S\neq S_\opt}\min_{0<m<B/c_\ex}G_S(m)$ as a result of strong consistency. 
This justifies the existence of a small $\delta$ for which \eqref{112} is achievable.

Since $m^*(S)$ scales linearly in $B$ and $m(B;\omega)$ diverges as $B\uparrow\infty$, there exists a sufficiently large $B(\delta;\omega)$ such that for $B>B(\delta;\omega)$, 
\begin{align}
\min_{S\subseteq  [n]}m^*(S)&> 4T(\omega)\label{2345}\\
t_{q(B)} = m(B;\omega) &> 4T(\omega).\label{3456}
\end{align}
  Consider $q'< q(B)$ that satisfies $t_{q'-1}< T(\omega) \leq t_{q'}$. Such a $q'$ always exists due to \eqref{3456}, and satisfies 
\begin{align*}
t_{q'}\stackrel{\eqref{myq}}{\leq} 2t_{q'-1}<2T(\omega)\stackrel{\eqref{2345}}{\leq}\frac{1}{2}\min_{S\subseteq  [n]}m^*(S)\stackrel{\eqref{113}, \delta<1/2}{\leq}\widehat{m}^* (S; t_{q'}).
\end{align*} 
This inequality tells us that in the $q'$-th loop iteration, for all $S\subseteq  [n]$, the corresponding estimated optimal exploration rate is larger than the current exploration rate. 
In this case, 
\begin{align*}
&\rho_S = \widehat{G}_S(\widehat{m}^* (S; t_{q'})\vee t_{q'}; t_{q'}) = \widehat{G}_S(\widehat{m}^* (S; t_{q'}); t_{q'})&S\subseteq  [n].
\end{align*}
This, along with \eqref{112} and \eqref{113}, tells us that $S_\opt$ is the estimated optimal model in the current step, and more exploration is needed.  

To see what $t_{q'+1}$ should be, we consider two separate cases.
If $2t_{q'}\leq \widehat{m}^* (S_\opt; t_{q'})$, then
\begin{align*}
T(\omega)<t_{q'+1} = 2t_{q'}\leq \widehat{m}^* (S_\opt; t_{q'})\leq (1+\delta)m^*(S_\opt),
\end{align*}
which implies 
\begin{align}
(1-\delta)m^*(S_\opt)\stackrel{\eqref{113}}{\leq} t_{q'+1}\vee\widehat{m}^* (S_\opt; t_{q'+1})\leq (1+\delta)m^*(S_\opt).\label{myku}
\end{align}
If $t_{q'}\leq \widehat{m}^* (S_\opt; t_{q'})<2t_{q'}$, then
\begin{align*}
t_{q'+1} = \left\lceil\frac{t_{q'} + \widehat{m}^* (S_\opt; t_{q'})}{2}\right\rceil\leq \widehat{m}^* (S_\opt; t_{q'})\leq (1+\delta)m^*(S_\opt),
\end{align*}
which also implies \eqref{myku}. 
But \eqref{myku} combined with \eqref{112} and \eqref{113} implies that $S_\opt$ is again the estimated optimal model in the $(q'+1)$-th loop iteration. 
Applying the above argument inductively proves $S(B) = S_\opt$, i.e., the second statement in \eqref{main1}.
Note \eqref{myku} holds true until the algorithm terminates, which combined with the termination criteria $t_{q(B)}\geq\widehat{m}^*(S_\opt; t_{q(B)})\geq (1-\delta)m^*(S_\opt)$ implies 
\begin{align*}
1-\delta\leq\frac{m(B)}{m^*(S_\opt)} = \frac{t_{q(B)}}{m^*(S_\opt)}\leq 1+\delta.
\end{align*}
The first part of \eqref{main1} now follows by noting that $\delta$ can be set arbitrarily small. 
\end{proof}

\begin{Rem}
As the budget goes to infinity, when looking at the exploration rate as well as the selected model for exploitation, one finds that almost every policy realization of Algorithm \ref{alg:aETC} is an asymptotically perfect exploration-exploitation policy. This establishes a trajectory-wise optimality result for the AETC-d under the upper bound criterion. Nevertheless, since Algorithm \ref{alg:aETC} determines the exploration rate $m(B)$ in an adaptive fashion, we cannot directly conclude the optimality for the average loss as defined in \eqref{W_11}.
\end{Rem}

\begin{Rem}
By the design of Algorithm \ref{alg:aETC}, for large $B$, the total number of loop iteration steps is $\mathcal O(\log B)$ rather than $\mathcal O(B)$ when increasing $t$ step by step, which considerably increases computational efficiency when dealing with a large budget. The same idea can also be used in the AETC algorithm in \cite{xu2021bandit}.  

\end{Rem}

\color{black}

As a consequence, we have the following consistency result for the CDF estimator produced by the AETC-d algorithm:
\begin{Cor}\label{cort}
Denote the CDF estimator produced by Algorithm \ref{alg:aETC} as $\widehat{F}_{Y,\aetc}(B)$. 
  Under Assumptions \ref{ass:eS-cond}-\ref{a3}, then with probability $1$, 
\begin{align*}
\lim_{B\uparrow\infty}W_1\left(\widehat{F}_{Y,\aetc}(B), F_Y\right) = 0.
\end{align*}
\end{Cor}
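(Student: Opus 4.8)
The plan is to derive the claim from the almost sure optimality of Algorithm \ref{alg:aETC} established in Theorem \ref{optimality}, combined with the conditional mean bounds already proved in Lemma \ref{lemma1} (and the consistency facts behind them). First I would fix a trajectory $\omega$ in the probability space of exploration samples on which the conclusions of Theorem \ref{optimality} hold, i.e. $S(B)=S_\opt$ for all large $B$ and $m(B)/m^*(S_\opt)\to 1$. Since $m^*(S_\opt)$ in \eqref{optm} scales linearly in $B$ with $c_\ex m^*(S_\opt)<B$, this forces both $m(B)\to\infty$ and the exploitation rate $N_{S_\opt}=(B-c_\ex m(B))/c_\ext(S_\opt)\to\infty$ along $\omega$. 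I would additionally restrict $\omega$ to the (still probability-one) event on which the least-squares estimator is consistent, $\bt_{S_\opt}\to\beta_{S_\opt}$, and the empirical residual CDF converges, $W_1(\widehat{F}_{\e_{S_\opt}},F_{\e_{S_\opt}})\to 0$; these are exactly the ingredients used inside the proof of Lemma \ref{lemma1}.

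Next I would split the error with the triangle inequality into a bias and a fluctuation term,
\begin{align*}
W_1\left(\widehat{F}_{Y,\aetc}(B), F_Y\right)\leq W_1\left(\widehat{F}_{Y,S_\opt}, F_{Y'}\right)+W_1\left(F_{Y'}, F_Y\right),
\end{align*}
where $Y'=X_{S_\opt}^\intercal\bt_{S_\opt}+\widehat{\e}_{S_\opt}$ is the exploitation sampling variable from \eqref{1263}. The bias term depends only on the exploration data; by the additivity of $W_1$ under independence \cite{panaretos2019statistical} it is bounded by $W_1(F_{X_{S_\opt}^\intercal\bt_{S_\opt}},F_{X_{S_\opt}^\intercal\beta_{S_\opt}})+W_1(\widehat{F}_{\e_{S_\opt}},F_{\e_{S_\opt}})$, and the first summand is at most $\|\bt_{S_\opt}-\beta_{S_\opt}\|_2\,\E[\|X_{S_\opt}\|_2]$. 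Both summands vanish along $\omega$ by the consistency restrictions above, so the bias term tends to $0$ almost surely.

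For the fluctuation term I would condition on the full exploration $\sigma$-algebra $\mathcal{F}_\ex$. Given $\mathcal{F}_\ex$ the law $F_{Y'}$ is fixed and $\widehat{F}_{Y,S_\opt}$ is the empirical CDF of $N_{S_\opt}$ i.i.d. draws from it, so Lemma \ref{BM} gives $\E[W_1(\widehat{F}_{Y,S_\opt},F_{Y'})\mid\mathcal{F}_\ex]\leq J_1(F_{Y'})/\sqrt{N_{S_\opt}}$; since $F_{Y'}$ has uniformly sub-exponential tails along $\omega$ (because $\bt_{S_\opt}$ is bounded and $\widehat{\e}_{S_\opt}$ is bounded by Assumption \ref{a3}), $J_1(F_{Y'})$ stays bounded while $N_{S_\opt}\to\infty$, so this conditional mean vanishes. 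To upgrade this to almost sure convergence of the realized value I would write $W_1=\int_\R|\widehat{F}_{Y,S_\opt}-F_{Y'}|$ and truncate at $T(B)\asymp\log N_{S_\opt}$: on $|y|\leq T$ the integrand is controlled by $2T\,d_K(\widehat{F}_{Y,S_\opt},F_{Y'})$, to which the Dvoretzky--Kiefer--Wolfowitz inequality \cite{massart1990tight} applies with $\P(d_K>\e\mid\mathcal{F}_\ex)\leq 2\exp(-2N_{S_\opt}\e^2)$; choosing $\e\asymp\sqrt{\log B/N_{S_\opt}}$ makes this bound summable over integer budgets (as $N_{S_\opt}=\Theta(B)$) while keeping $T\,d_K\to 0$, and on $|y|>T$ the empirical tail vanishes identically once $T$ exceeds $\max_j|A_j|\lesssim\log N_{S_\opt}$ (Lemma \ref{expmax}) and the $F_{Y'}$ tail is negligible by sub-exponentiality. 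This is exactly the truncation estimate \eqref{tb} used in the proof of Lemma \ref{mylemo1}, and Borel--Cantelli then yields $W_1(\widehat{F}_{Y,S_\opt},F_{Y'})\to 0$ almost surely.

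I expect the main obstacle to be precisely this last upgrade from conditional-mean to almost-sure convergence while the sampling law $F_{Y'}$ itself drifts with $B$: the non-compact (sub-exponential) support rules out a one-line bounded-differences estimate, so the argument must pair an exponential bulk concentration (DKW) with a $\log B$-scale truncation and then invoke Borel--Cantelli along a budget index on which $N_{S_\opt}$ grows linearly. Everything else reduces cleanly to results already in hand: Theorem \ref{optimality} for model/rate selection, Lemma \ref{BM} and Lemma \ref{lemma1} for the vanishing conditional mean, and the exploration-consistency facts for the bias term.
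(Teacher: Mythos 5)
Your proposal is correct and its skeleton matches the paper's: invoke Theorem \ref{optimality} to fix $S(B)=S_\opt$ and force $m(B),N(B)\to\infty$, show the drifting exploitation law $F_{Y'(B)}$ converges to $F_Y$ in $W_1$ (bias), and then show the empirical measure drawn from that drifting law converges (fluctuation). Where you genuinely diverge is the fluctuation step, which is the heart of the matter. The paper handles it qualitatively: since $W_1(Y'(B),Y)\to 0$, the family $\{F_{Y'(B_k)}\}$ is $\delta$-tight along any sequence $B_k\uparrow\infty$, a triangular-array result of Beran--Le Cam--Millar \cite[Theorem 1]{beran1987convergence} yields almost sure weak convergence of the empirical measures to $F_Y$, and weak convergence is upgraded to $W_1$ by checking convergence of first absolute moments via the strong law of large numbers. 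You instead argue quantitatively and self-containedly: the conditional mean bound from Lemma \ref{BM}, then the $L^1$-of-CDF representation \eqref{W1-rep} split at a $\log B$-scale truncation, Dvoretzky--Kiefer--Wolfowitz \cite{massart1990tight} on the bulk, Lemma \ref{expmax}-type control of the sample extremes in the tails, and Borel--Cantelli over budgets --- exactly the truncation pattern of Lemma \ref{mylemo1} transplanted to the exploitation stage. Your route buys explicit rates (roughly $(\log B)^{3/2}B^{-1/2}$ for the fluctuation term) and avoids both the external citation and the weak-to-$W_1$ moment upgrade; its price is the need for a uniform-in-$B$ sub-exponential bound on $Y'(B)$, which you flag but justify slightly imprecisely: $\widehat{\e}_{S}$ is not bounded by Assumption \ref{a3} alone, since the bootstrapped residuals equal $\e_{S,\ell}+X_{S,\ell}^\intercal(\beta_S-\bt_S)$ and the second term must be killed by Lemma \ref{expmax} together with consistency of $\bt_S$. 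Your bias-term treatment (Cauchy--Schwarz coupling bound $\|\bt_{S_\opt}-\beta_{S_\opt}\|_2\,\E\|X_{S_\opt}\|_2$) is a valid, more elementary substitute for the paper's use of Lemma \ref{kouniao} with $p=1$, $r=0$. Finally, note that the continuum-limit technicality you worry about is shared by the paper: your summability runs over integer budgets, while the paper fixes an arbitrary sequence $B_k$ with a sequence-dependent null set, and neither argument fully bridges to the limit over real $B$; you are not worse off on that point.
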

\begin{proof}
For fixed $B$, let $N(B)$ be the exploitation sampling rate in Algorithm \ref{alg:aETC}, and denote the parametric model used for exploitation as $Y'(B)$, i.e.,  
\begin{align}
Y'(B) = X_{S(B)}^\top \bt_{S(B)} + \widehat{\e}_{S(B)},\label{345}
\end{align}
where the two terms on the right-hand side of \eqref{345} are independent. 
By Theorem \ref{optimality}, with probability $1$, $S(B) = S_\opt$ for all sufficiently large $B$, and $m(B)/m^*(S_\opt)\to 1$, i.e., $m(B)\to\infty$ and $N(B)\to\infty$ as $B\to\infty$.
Under Assumption \ref{a2}, we apply \eqref{use} with $p=1$ and $r=0$ together with the strong consistency of $\bt_{S_\opt}$ \cite[Theorem 1]{Lai_1982} to conclude that $W_1(X_{S(B)}^\top \bt_{S(B)}, X_{S_\opt}^\top\beta_{S_\opt})\to 0$ as $B\to\infty$ almost surely. 
Adjusting the argument in \eqref{f22}, one can also show that $W_1(\widehat{\e}_{S(B)}, \e_{S_\opt})\to 0$.
By the additivity of the $W_1$ metric, we conclude $W_1(Y'(B), Y)\to 0$. 
(The randomness here only depends on exploration.)  

Now fix a trajectory along which $N(B)\to\infty$ and $W_1(Y'(B), Y)\to 0$ as $B\to\infty$.
Let $\{B_k\}$ be an arbitrary sequence such that $B_k\uparrow\infty$ as $k\uparrow\infty$. 
Since convergence in $W_1$ implies convergence in distribution, $\{F_{Y'(B_k)}\}$ is $\delta$-tight\footnote{A sequence of probability measures $\{P_k\}$ defined on a metric space is called $\delta$-tight if for every $\e>0$, there exist a compact measurable set $K$ and a sequence $\delta_k\downarrow 0$ such that $P_k(K^{\delta_k})>1-\e$ for every $k$, where $K^{\delta_k}: = \{x: \text{dist}(x, K)<\delta_k\}$.} \cite[Section 17.5]{williams1991probability}. 
This observation, combined with the fact that $\widehat{F}_{Y,\aetc}(B_k)$ is an empirical measure of $Y'(B_k)$ consisting of $N(B_k)$ samples, implies that $\widehat{F}_{Y,\aetc}(B_k)$ converges to $F_Y$ in distribution almost surely \cite[Theorem 1]{beran1987convergence}. 
To lift the convergence to $W_1$, it only remains to show $\int |x| d\widehat{F}_{Y,\aetc}(B_k)\to\int |x| dF_Y$ as $k\to\infty$, which can be verified using \eqref{345} and the strong law of large numbers. 
The proof is finished by noting that $\{B_k\}$ is arbitrary.  
\end{proof}

\section{Numerical experiments}\label{sec:num}

In this section, we demonstrate the performance of the AETC-d (Algorithm \ref{alg:aETC}) for multifidelity estimation of univariate distributions.
We will focus on consistency, optimality of exploration rates, and model misspecification. 
Four methods will be considered for estimating $F_Y(y)$: 
\begin{itemize}
 \item (ECDF-Y): Empirical CDF estimator for $F_Y$ based on the samples of $Y$ only.
 \item (AETC-d): Algorithm \ref{alg:aETC}.
 \item (AETC-d-no): A modification of Algorithm \ref{alg:aETC}, where the noise in \eqref{lr:est} is omitted, i.e., we set $\widehat{\e}_S$ to 0.
 \item (AETC-d-q): A modification of Algorithm \ref{alg:aETC} using quantile regression; see Appendix \ref{app5} for a detailed description of the algorithm.   
 \end{itemize}
 The modifications AETC-d-no and AETC-d-q are introduced to empirically investigate plausible alternatives that one may consider (which do not enjoy our theoretical guarantees). 

To evaluate results, we compute and report an empirical mean $W_1$ distance (error) between $F_Y$ and the estimated CDF given by the algorithms over $200$ samples. Since AETC-d produces random estimators due to the exploration phase, the experiment is repeated $100$ times with the $5\%$-$50\%$-$95\%$-quantiles recorded to measure this extra uncertainty. 

We emphasize that direct comparison with most alternative multilevel/multifidelity procedures is not possible here, as our theory and the associated empirical errors reported here are in terms of the $W_1$ error on the full distribution of the output. 

\subsection{Ishigami function}\label{ishigami}
In this example, we investigate the performance of the proposed AETC-d algorithm (Algorithm \ref{alg:aETC}) on a multifidelity algebraic system consisting of Ishigami functions \cite{Ishigami}, which have been widely used as a test model for uncertainty quantification.   
We consider an extended version of the setup in \cite{Qian_2018}. 
The high-fidelity model output corresponds to the following random variable:
\begin{align}
&Y =  \sin Z_1 + a\sin^2 Z_2 + bZ_3^4\sin Z_1 + c\sin^3 Z_4 + d\sin^4 Z_5,
\end{align}
where $a, b, c, d$ are deterministic constants, and $Z_{i}, i\in [5]$ are independent random variables,
\begin{align*}
Z_{1,2, 3, 4, 5}\stackrel{\iid}{\sim}\text{Unif}(-\pi, \pi).
\end{align*}
In the following experiments, we set $a = 5$ and $b = 0.1$.   
For the low-fidelity models, we will consider two different scenarios with different model assumptions. 

\subsubsection{Perfect model assumptions}\label{sssec:pm}
Let $c = 1, d = 0.1$. 
We first consider a synthetic dataset consisting of two low-fidelity models:
\begin{align}
X_1 &= \sin Z_1 + a\sin^2 Z_2 + bZ_3^4\sin Z_1 + c\sin^3 Z_4\label{canada}\\
X_2 &= \sin Z_1 + a\sin^2 Z_2 + bZ_3^4\sin Z_1\nonumber.
\end{align} 
In this case, the linear model assumption \eqref{1} is satisfied, i.e., $\E[Y|X_2] = X_2+c\E[\sin^3 Z_4]+d\E[\sin^4 Z_5]$, $\E[Y|X_1] = \E[Y|X_1, X_2]=X_1+d\E[\sin^4 Z_5]$.  
The correlation between $Y$ and $X_1, X_2$ are approximately $0.999$ and $0.986$, respectively. 
The cost of sampling $Y, X_1$ and $X_2$ are assumed hierarchical, assigned as $(c_0, c_1, c_2) = (1, 0.05, 0.001)$. 

We first compare the performance of AETC-d with ECDF-Y. 
The total budget $B$ in the experiment ranges from $10$ to $10^5$.
The ground truth is taken as an empirical CDF of $Y$ computed using $10^7$ independent samples.
Accuracy results for ECDF-Y and AETC-d are reported in Figure \ref{fig1}(a). 
In this example, AETC-d consistently outperforms ECDF-Y by a factor of around $4$. 
The average estimation error decays to $0$ as the budget goes to infinity, verifying the consistency result in Corollary \ref{cort}.

To inspect the optimality of exploration rates selected by AETC-d, we fix the total budget at $B=10^3$ and consider a deterministic version of Algorithm \ref{alg:aETC} with fixed exploration rate $m$. 
The new algorithm collects $m$ exploration samples to estimate the parametric coefficients and then selects $S_\opt$ for exploitation.
In this example, $S_\opt=\{1\}$, and the maximum exploration rate is $M = \lfloor B/1.051\rfloor = 951$.
We apply the deterministic algorithm and compute the average $W_1$ distance between the resulting estimator and $F_Y$ over $100$ independent experiments for $m=(10, 30, 50, 100, 150, 200, 300, 400, 500, 600)$. 
According to Theorem \ref{Thm:bdd}, for fixed $m$, the average error of the estimator given by the deterministic algorithm is asymptotically bounded by a function of the form 
\begin{align}
&f(m; \alpha_1, \alpha_2) = \frac{\alpha_1}{\sqrt{m}} + \frac{\alpha_2}{\sqrt{\frac{B}{c_\ex}-m}}& 0<m<\frac{B}{c_\ex},\label{fit}
\end{align}
for some constants $\alpha_1, \alpha_2>0$. 
Assuming the analysis in Theorem \ref{Thm:bdd} is tight, we would expect the mean error of the deterministic algorithm, as a function of $m$, to approximately fit a curve of the form \eqref{fit}. 
(Other parametric forms may be used if one obtains a different upper bound for the loss.)
We use a least-squares procedure to obtain such a curve by optimizing over $\alpha_1, \alpha_2$.
We also run AETC-d $100$ times and record the $5\%$-$50\%$-$95\%$ quantiles of the adaptively determined exploration rates.
The results are illustrated in Figure \ref{fig1}(b).  

By comparing the mean errors committed by the deterministic algorithm at various fixed exploration rates, we find that the $5\%$-$50\%$-$95\%$ quantiles of the exploration rate chosen by the AETC-d algorithm, $178$-$191$-$203$, are larger than the optimal trade-off point, which is near $m = 41$. 
Although the difference between the AETC-d and optimal trade-offs is large, it is modest relative to the total budget $B$. 
In particular, the relative over-exploration rate by AETC-d in this example is $(191-41)/951\approx 15.8\%$.
According to \eqref{eq:myheart} in Theorem \ref{optimality}, the over-exploration rate will tend to stabilize when the budget becomes large. 
Such a discrepancy is not unexpected since AETC-d is constructed and optimized with respect to $G_S$ rather than the true loss.
Nevertheless, the loss \textit{value} of the AETC-d algorithm is close to the minimum. This empirically suggests that the upper bound criterion is useful in terms of balancing the exploration and exploitation errors.

\begin{figure}[htbp]
  \centering 
  \subfigure[]{\includegraphics[width=0.25\linewidth]{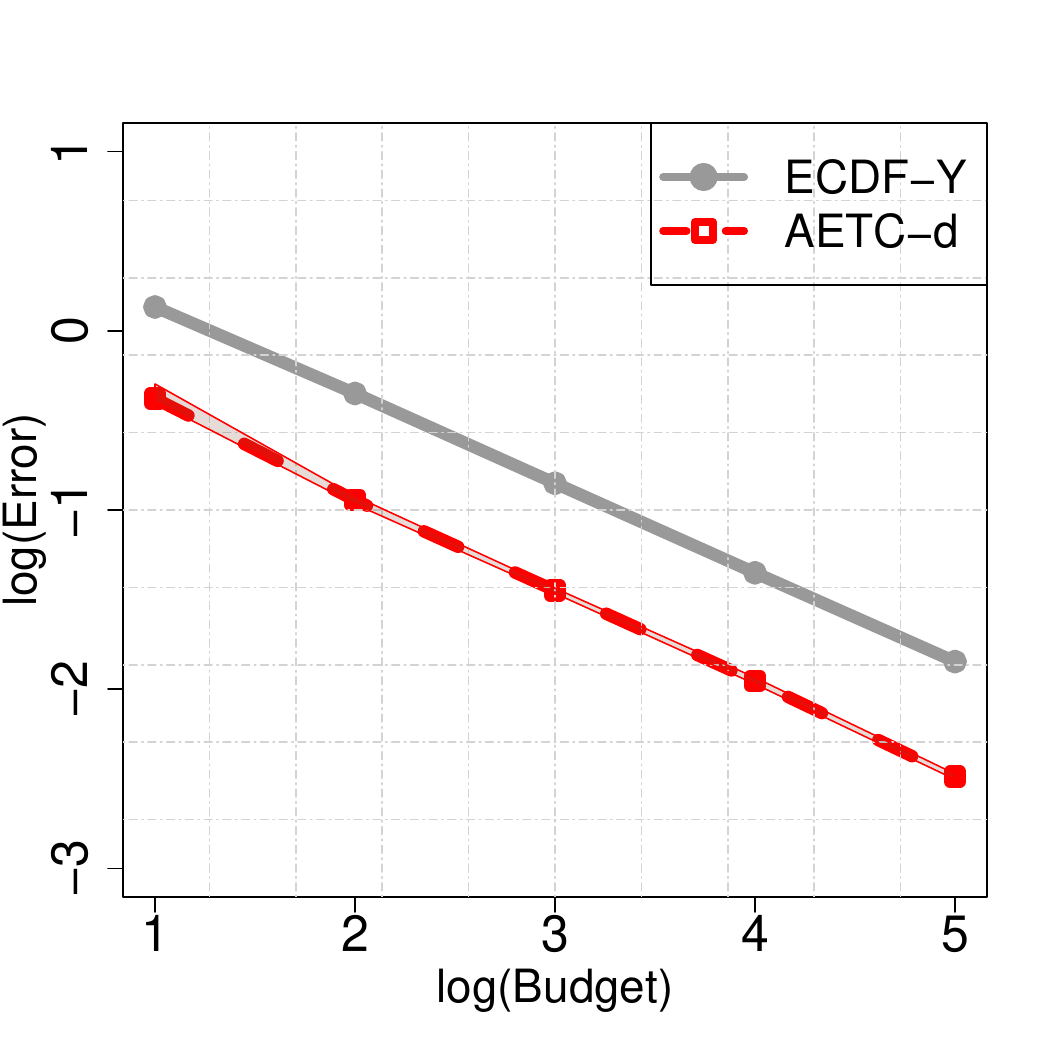}}\hfill
  \subfigure[]{\includegraphics[width=0.25\linewidth]{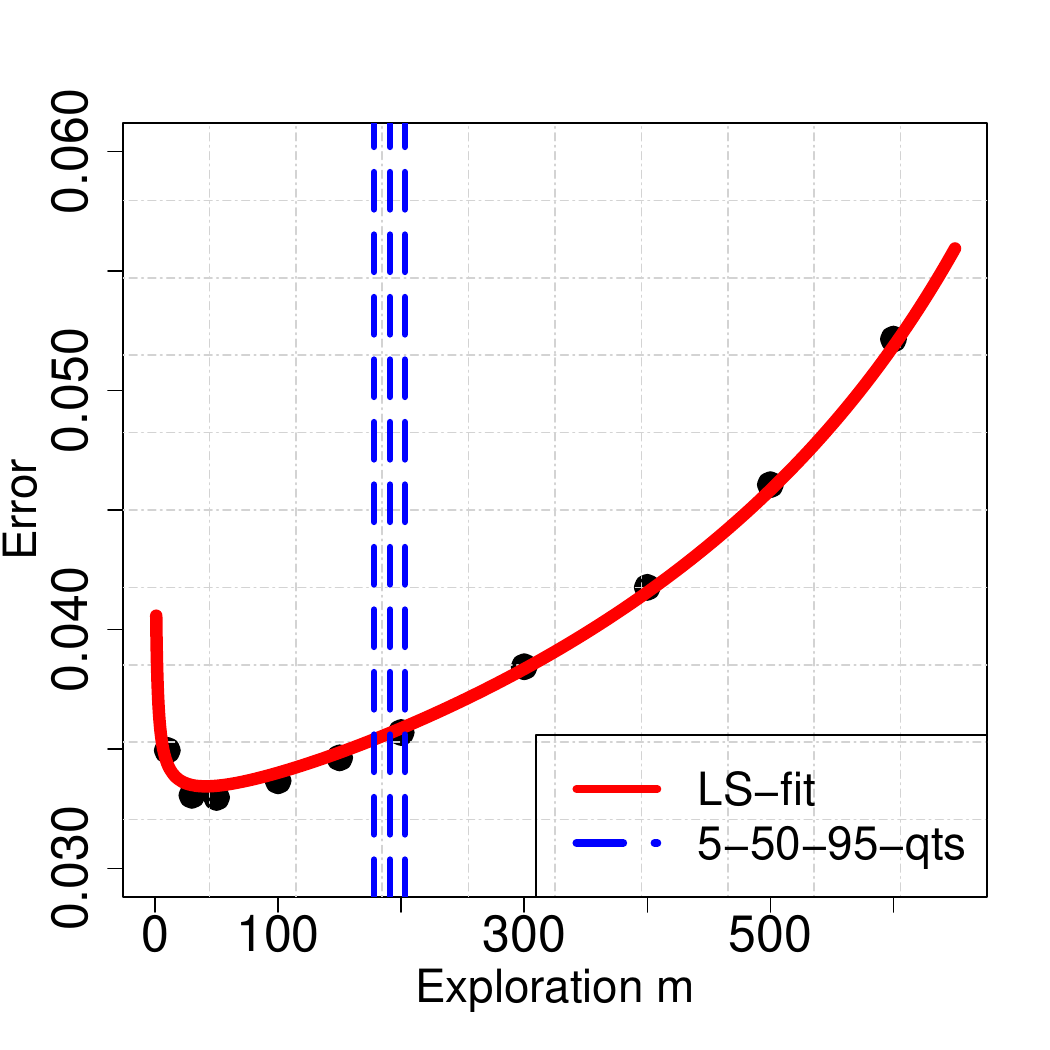}}\hfill
  \subfigure[]{\includegraphics[width=0.25\linewidth]{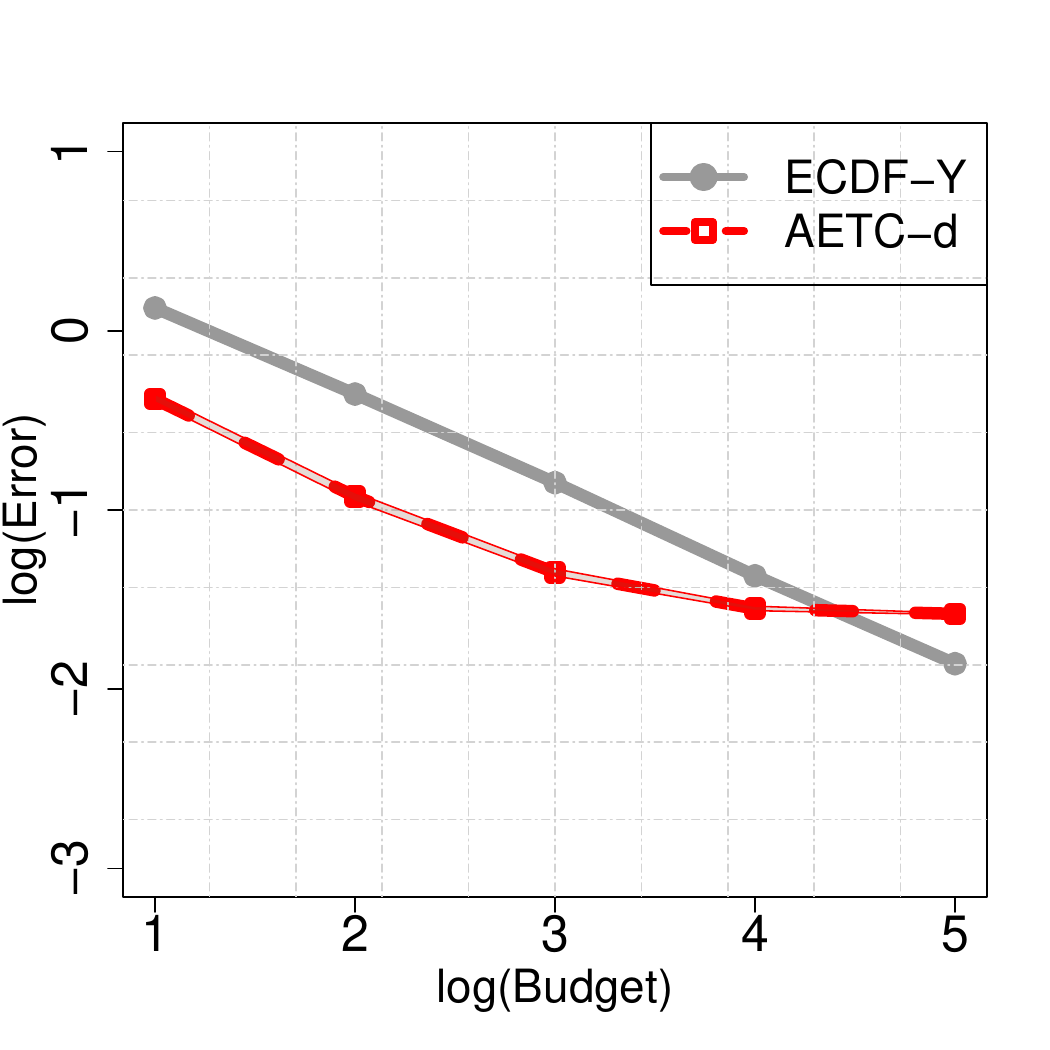}}\hfill
  \subfigure[]{\includegraphics[width=0.25\linewidth]{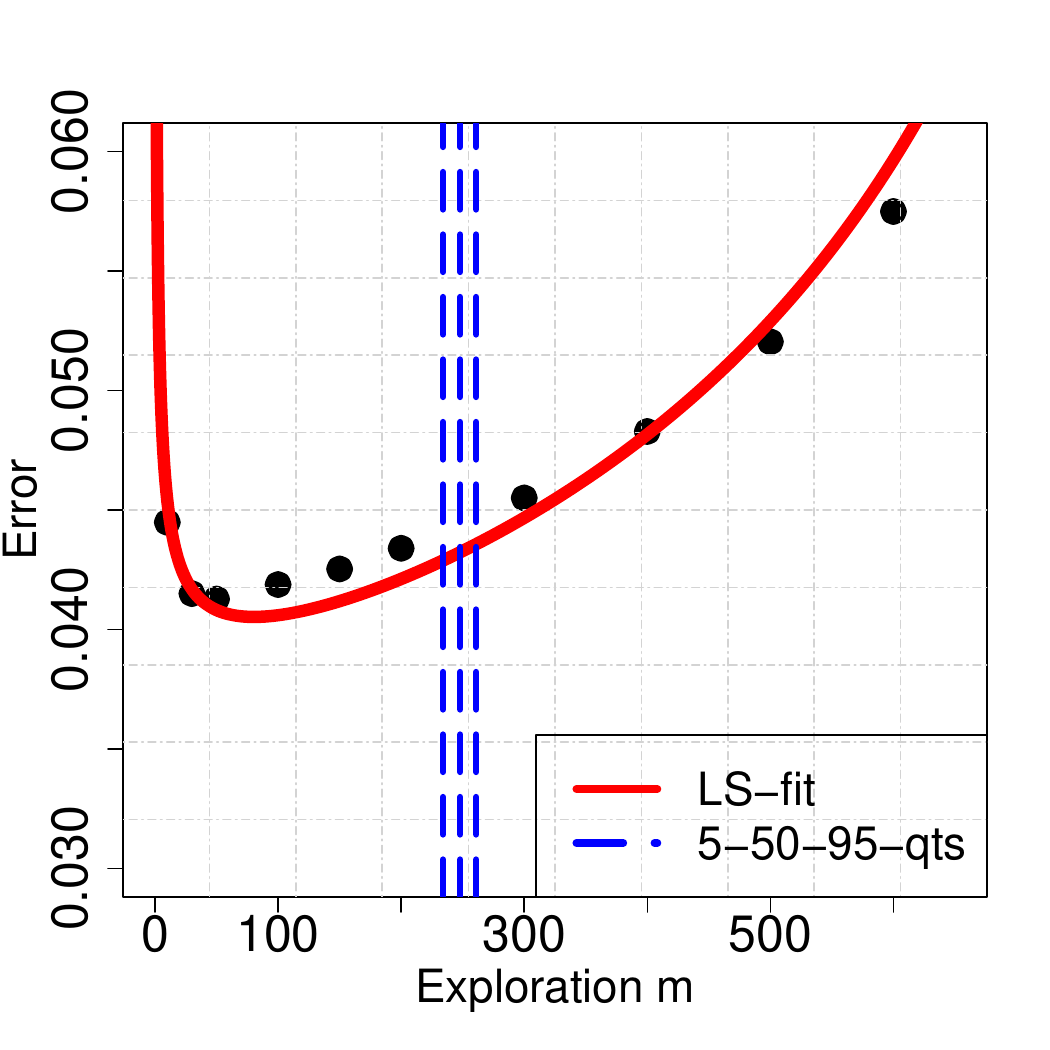}}\hfill
  \caption{($\log_{10}$) mean $W_1$ distance between $F_Y$ and the estimated CDFs given by ECDF-Y and AETC-d in model \eqref{canada} (a) and model \eqref{canada1} (c), with the $5\%$-$50\%$-$95\%$ quantiles plotted for AETC-d to measure its uncertainty in the exploration phase.
 Mean $W_1$ error of the deterministic algorithm at different exploration rates $m$, and the quantiles of the exploration rate from AETC-d in model \eqref{canada1} (b) and model \eqref{canada1} (d).} 
  \label{fig1}
\end{figure}

\subsubsection{Approximate linear assumptions}\label{als}
Let $c=d=0$. 
We now consider the same low-fidelity models which were used in \cite{Qian_2018} for sensitivity analysis: 
\begin{align}
X_1 &= \sin Z_1 + 0.95a\sin^2 Z_2 + bZ_3^4\sin Z_1\label{canada1}\\
X_2 &= \sin Z_1 + 0.6a\sin^2 Z_2 + 9bZ_3^2\sin Z_1\nonumber. 
\end{align} 
The correlations between $Y$ and $X_1, X_2$ are approximately $0.999$ and $0.950$, respectively. 
Under the current setup, the linear model assumption \eqref{1} is not satisfied. 
Nevertheless, the correlations between $Y$ and $X_1, X_2$ suggest that the relationship between the high-fidelity and low-fidelity models is approximately linear. 
The cost of sampling $Y, X_1$ and $X_2$ and the budget range is the same as in the previous case. 
We repeat the same experiments (both the accuracy and the optimality of exploration rates) and report the corresponding results in Figure \ref{fig1}(c,d). 

Despite model misspecification, AETC-d still demonstrates reasonable performance until the total budget exceeds $10^3$. 
When the budget is sufficiently large, both exploration and exploitation errors are so small that model misspecification errors start to dominate. 
A further investigation of the model misspecification effects in this example will be carried out in Section \ref{sc}. 
The optimal model selected by the AETC-d is $S_\opt = \{1,2\}$. 
By comparing the two plots in the right panel in Figure \ref{sc}, we see that AETC-d again overexplores and has a close-to-optimal loss value.

\subsubsection{Model misspecification}\label{sc}

In this section, we identify the misspecified assumptions that prevent the experiment in Section \ref{als} from converging. 
For every $S\subseteq  [n]$, the following decomposition holds:
\begin{align*}
Y = \E[Y|X_S]+ (Y-\E[Y|X_S]),
\end{align*}
where $\E[Y|X_S]$ is a measurable function of $X_S$, and $Y-\E[Y|X_S]$ is a centered random variable that is uncorrelated with $\E[Y|X_S]$.  
The linear model assumption \eqref{1} (or equivalently, Assumptions \ref{ass:eS-cond}-\ref{ass:eS-independent}) assumes the following: 
\begin{itemize}
\item $\E[Y|X_S]$ a linear function of $X_S$;
\item $Y-\E[Y|X_S]$ is independent of $\E[Y|X_S]$. 
\end{itemize}
Violation of either may cause the inconsistent behavior of AETC-d in this example.

To inspect the linearity assumption, we expand the feature space by incorporating nonlinear terms of the existing regressors to fit a larger linear model. 
To test the noise independence condition, 
we can either drop the noise in \eqref{lr:est} to reduce the erroneous noisy effect\footnote{When the variance ratio between $\e_S$ and $Y$ is small, $Y\approx X_S^\top \beta_S$, so that adding the noise emulator has little impact on the accuracy of the resulting estimator. When the ratio is moderate, adding $\e_S$ as an independent component will degrade the quality of the estimator if the independence assumption is violated.}, or use a different method such as quantile regression to model the potential heteroscedasticity of the noise. 
A brief description of using quantile regression combined with AETC-d in our setup is given in Appendix \ref{app5}. 

In the following experiment, we expand the original model by including higher-order terms of the existing features.   
We include additional regressors $\{X_1^2, X_1^3, X_2^2, X_2^3, X_1X_2, X_1^2X_2, X_1X_2^2\}$, and call this enlarged model L.
Note that in L, an exploitation model is no longer defined by the subset of regressors, but instead by the admissible $\sigma$-field generated by the regressors. 
In fact, since both $X_1$ and $X_1^2$ generate the same $\sigma$-field, $\E[Y|X_1] = \E[Y|X^2_1]$. As a result, the linear model assumption \eqref{1} cannot hold simultaneously for both regressors as two distinct models unless $X_1$ is a constant a.s. 
For convenience, for each $S\subseteq [n]$, we define the model associated with $S$ as the intersection of the linear space spanned by the existing features and the $\sigma$-field generated by $X_S$, with exploitation cost given by the total cost of the low-fidelity models in $S$. 
For instance, if $S=\{1\}$, then the corresponding model is $\text{span}\{1, X_1, X_1^2, X_1^3\}$, with exploitation cost $c_1$ per sample.   
In this case, any model generating the same $\sigma$-field can be viewed as a sub-model under our definition. 
Also, in the exploitation stage of AETC-d, we consider two alternative methods to estimate $F_Y$: AETC-no and AETC-q.
The performance of these modified estimators in both the original and the expanded models is compared in Figure \ref{fig3}(a-c).
For ease of illustration, we only plot the median of errors instead of the $5\%$-$50\%$-$95\%$ quantiles region in the $100$ experiments. 

Figure \ref{fig3} shows that the overall performance of all the methods under comparison improves in the enlarged model L when the budget exceeds $10^3$.
This implies that adding nonlinear terms of the existing regressors can help reduce the model misspecification effects.
The improvement for AETC-d in model L diminishes when the budget reaches a new threshold near $10^5$, after which more regressors are needed to further increase model expressivity. 
In this example, noise has a less dominant role in affecting model convergence. 
Nevertheless, using quantile regression for reconstruction is beneficial for mitigating the noise misspecification effects.  

We end this section by providing an instance of the pointwise absolute error between $F_Y$ and the estimated CDFs given by the ECDF-Y, the AETC-d, the AETC-d (L), the AETC-d-no (L), and the AETC-d-q (L) when $B = 10^4$.
For better visualization, we plot the distribution of these errors (i.e. CDF) rather than the pointwise curves. 
The result is given in Figure \ref{fig3}(d).

\begin{figure}[htbp]
\centering 
\subfigure[]{\includegraphics[width=0.25\linewidth]{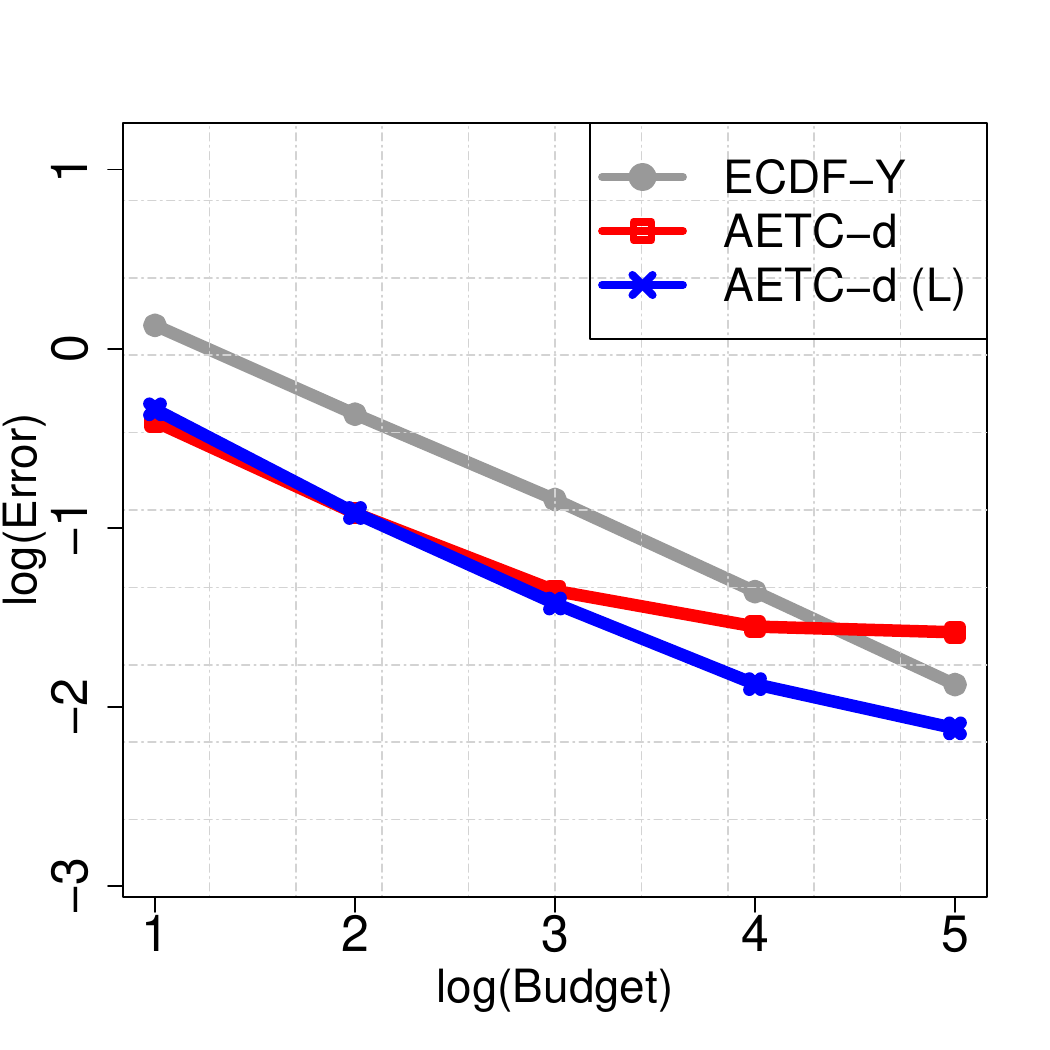}}\hfill
\subfigure[]{\includegraphics[width=0.25\linewidth]{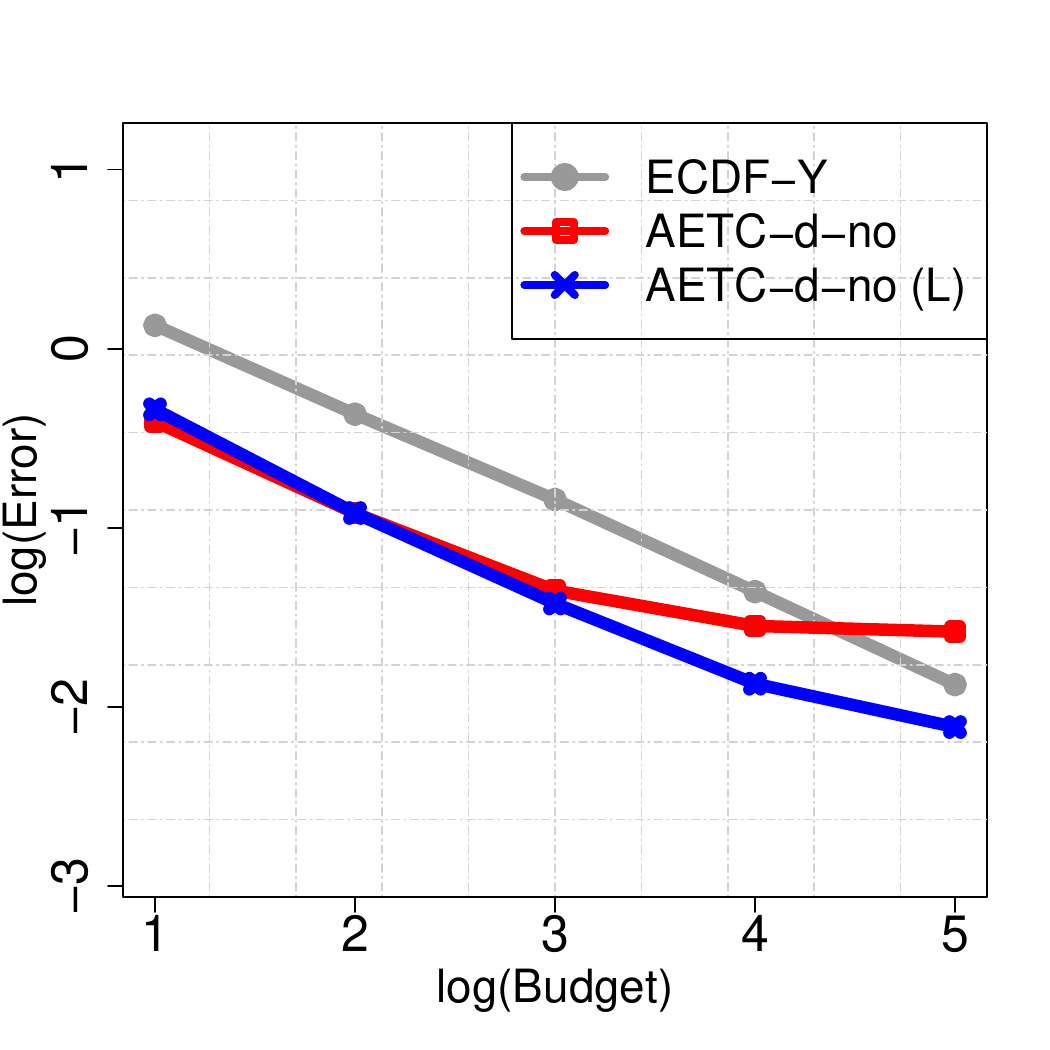}}\hfill
\subfigure[]{\includegraphics[width=0.25\linewidth]{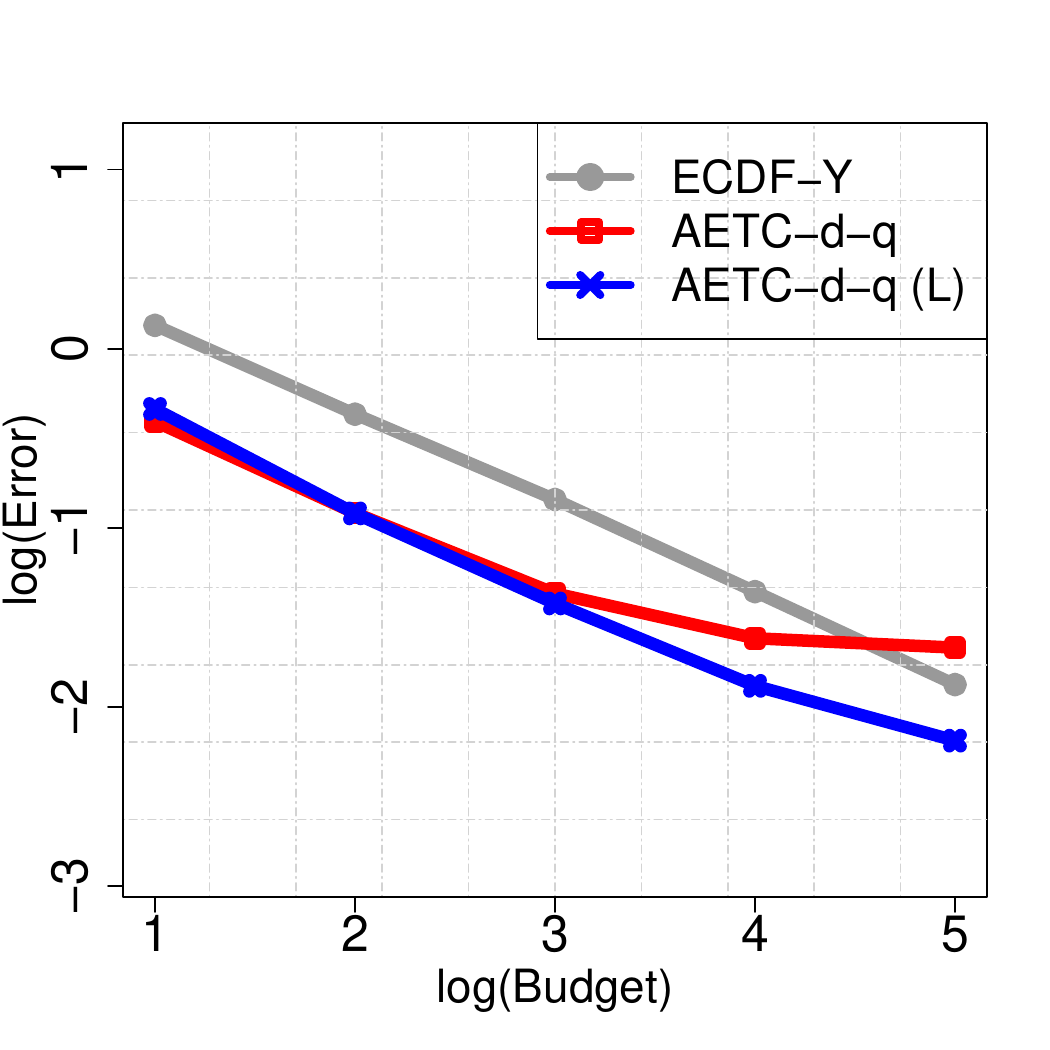}}\hfill
\subfigure[]{\includegraphics[width=0.25\linewidth]{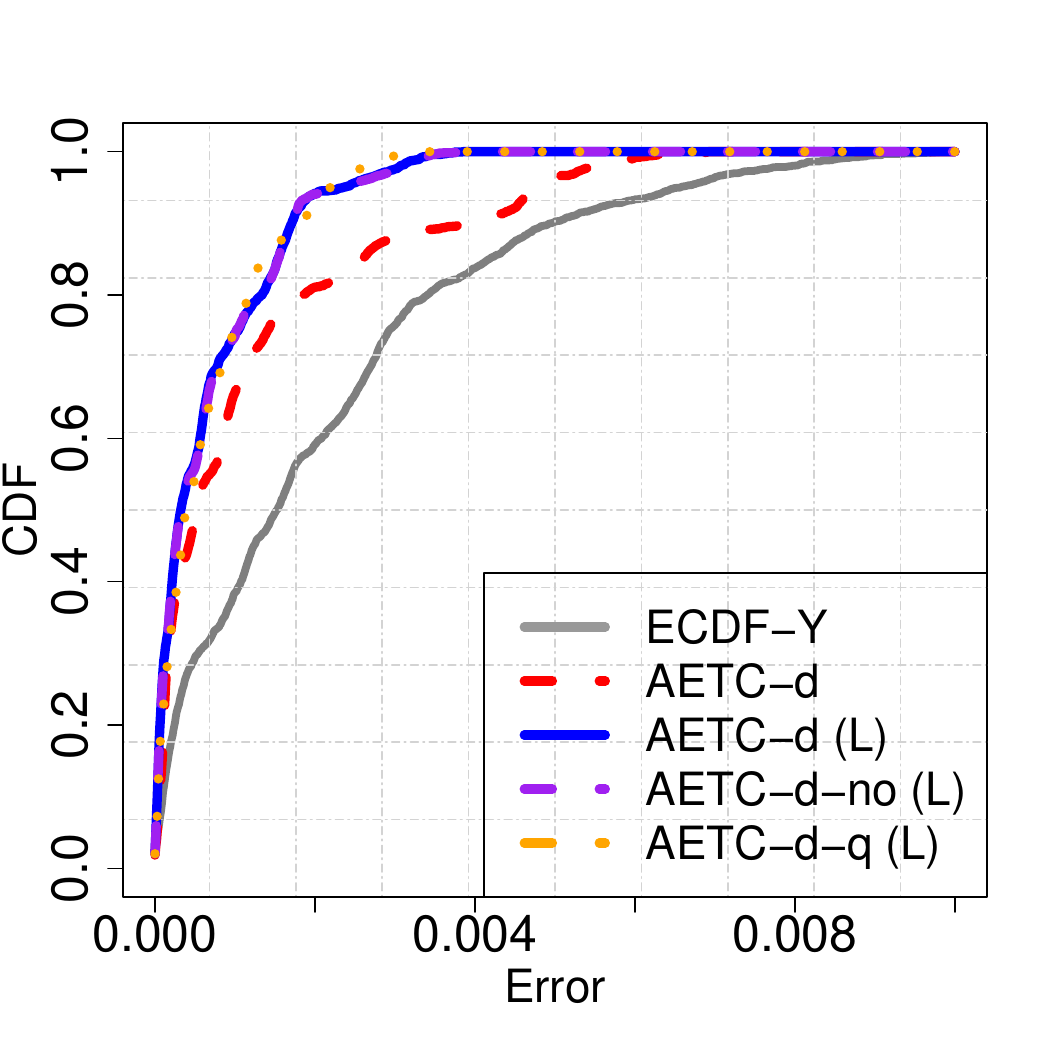}}\hfill
\caption{Median estimation errors of the estimated CDFs given by ECDF-Y and AETC-d with different exploitation strategies in both the original and the expanded model L (a)-(c). An instance of distribution of pointwise absolute errors of the CDFs given by ECDF-Y, AETC-d, AETC-d (L), AETC-d-no (L), and AETC-d-q (L) when $B = 10^4$ (d). } \label{fig3}
\end{figure}

\subsubsection{Cost assumptions}

Relative information between cross-model correlations and model costs may also impact the performance of the AETC-d algorithm.
To understand the relationship between them, we fix the models and vary the (relative) cost assigned to each model. 
In particular, we test the following four cost conditions on both model \eqref{canada} (perfect model assumptions) and model \eqref{canada1} (approximate model assumptions):
\begin{itemize}
\item $(c_0, c_1, c_2) = (1, 0.5, 0.001)$;
\item $(c_0, c_1, c_2) = (1, 0.05, 0.001)$; \hskip 10pt (original choice in sections \ref{sssec:pm} and \ref{als})
\item $(c_0, c_1, c_2) = (1, 0.005, 0.001)$;
\item $(c_0, c_1, c_2) = (1, 0.001, 0.001)$,
\end{itemize}
For model \eqref{canada1}, we also test the enlarged model L that includes the higher order terms; see Section \ref{sc}. 
The results are reported in Figure \ref{fig33}.
For ease of illustration, we only plot the error curves given by ECDF-Y and AETC-d.

\begin{figure}[htbp]
\centering
\subfigure[$(c_0, c_1, c_2) = (1, 0.5, 0.001)$]{\includegraphics[width=0.25\linewidth]{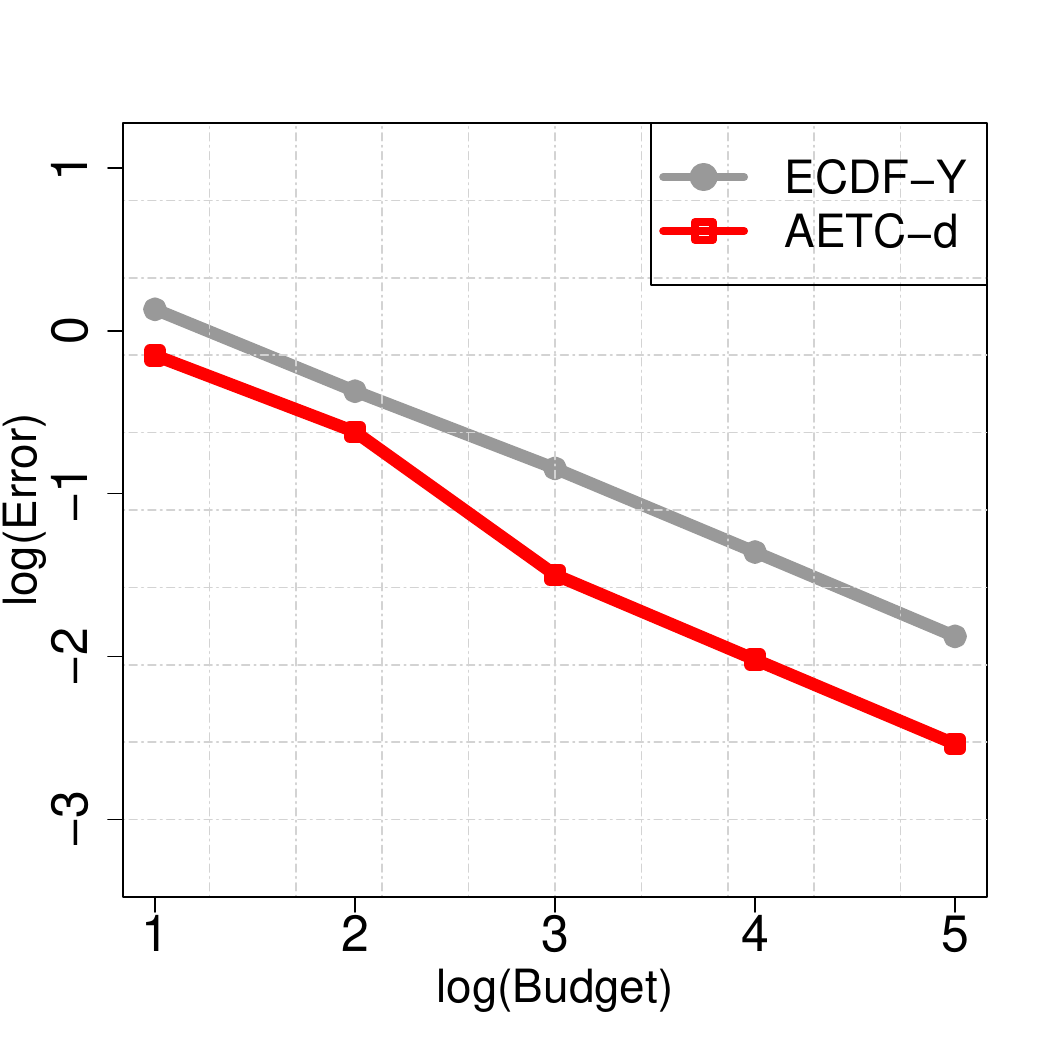}}\hfill
\subfigure[$(1, 0.05, 0.001)$]{\includegraphics[width=0.25\linewidth]{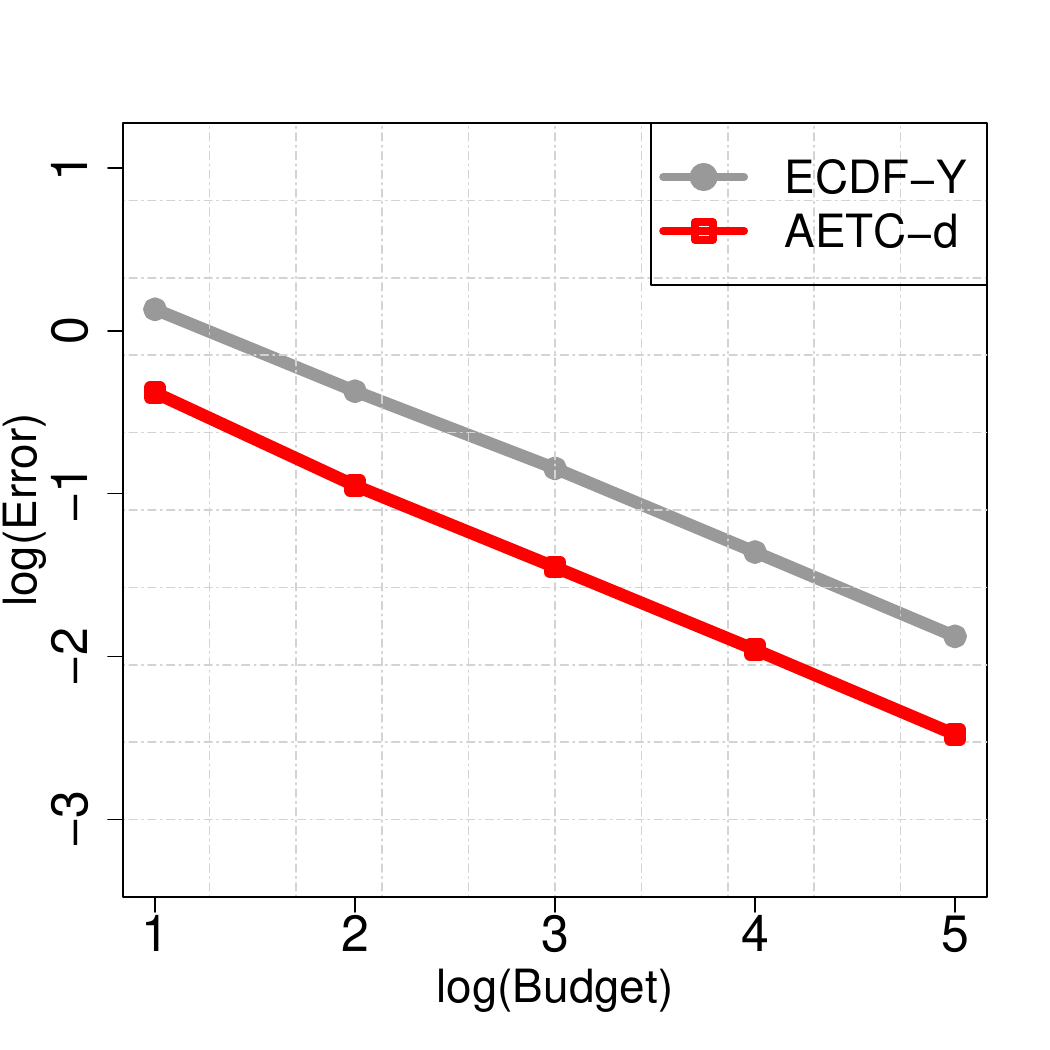}}\hfill
\subfigure[$(1, 0.005, 0.001)$]{\includegraphics[width=0.25\linewidth]{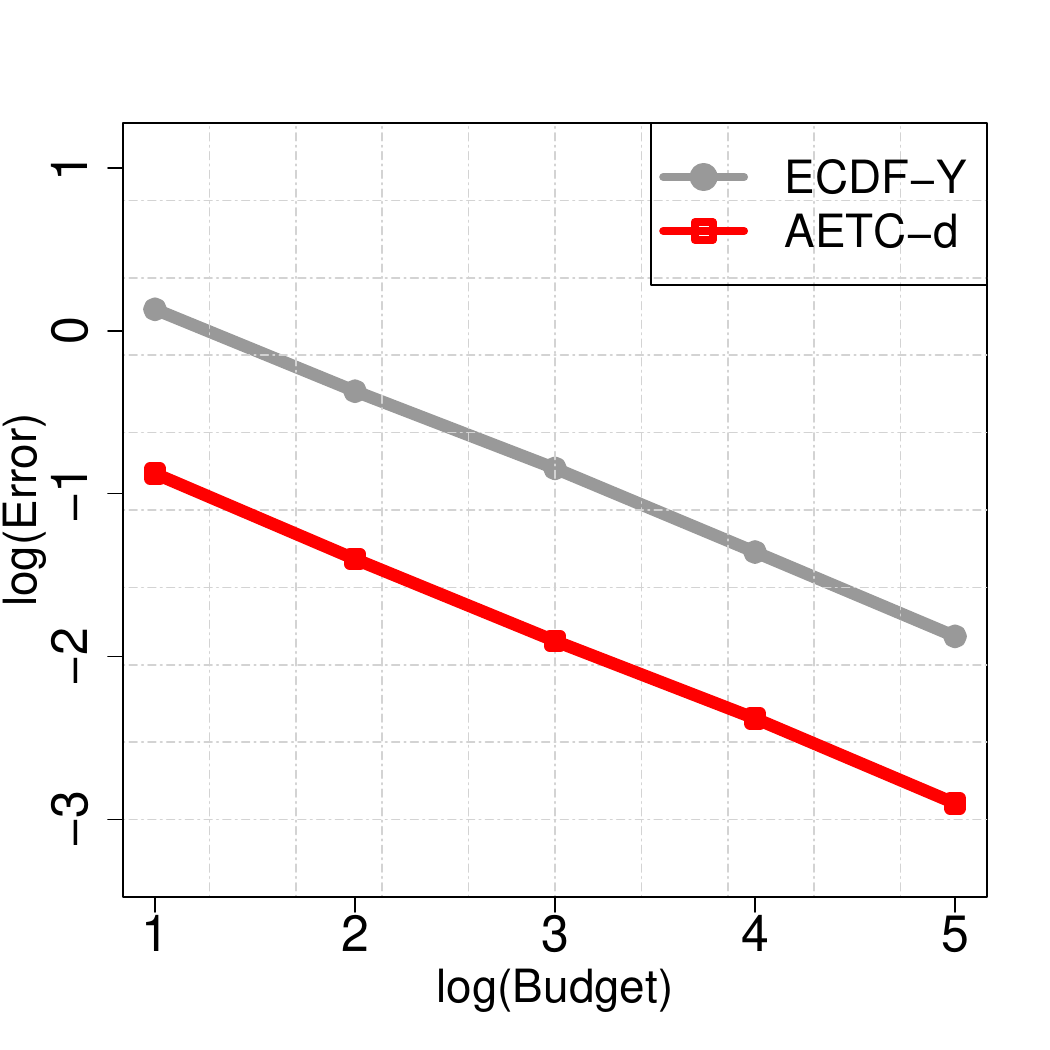}}\hfill 
\subfigure[$(1, 0.001, 0.001)$]{\includegraphics[width=0.25\linewidth]{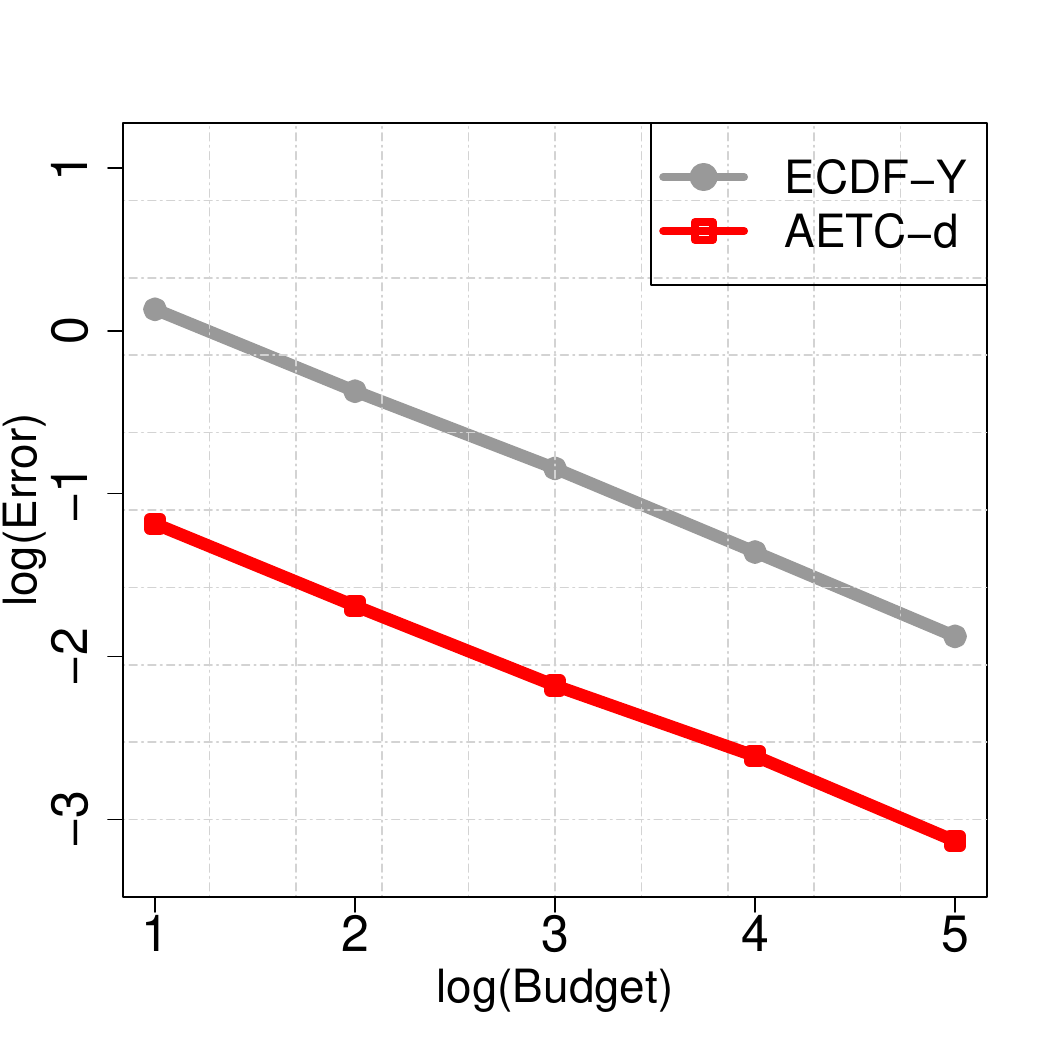}}\hfill
\subfigure[$(1, 0.5, 0.001)$]{\includegraphics[width=0.25\linewidth]{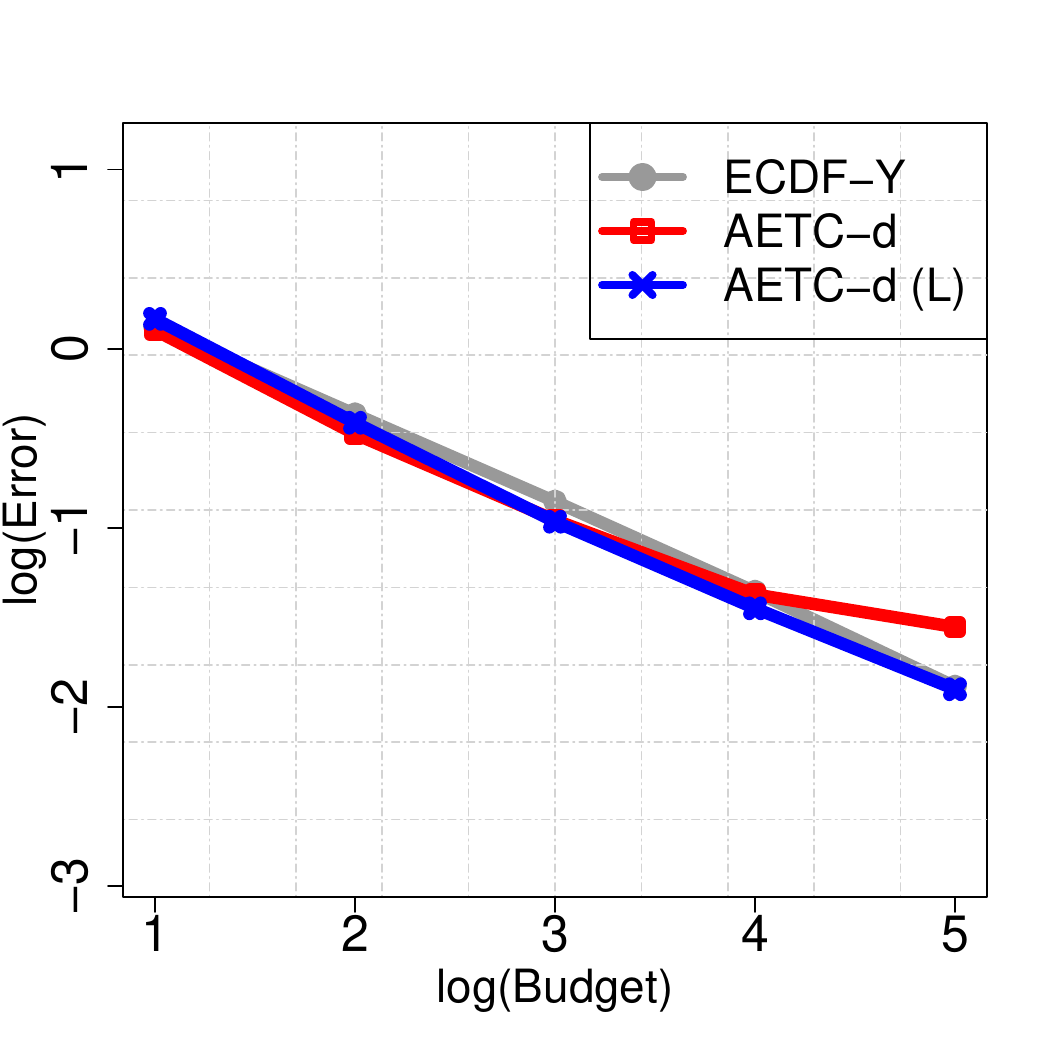}}\hfill
\subfigure[$(1, 0.05, 0.001)$]{\includegraphics[width=0.25\linewidth]{2022-01.pdf}}\hfill
\subfigure[$(1, 0.005, 0.001)$]{\includegraphics[width=0.25\linewidth]{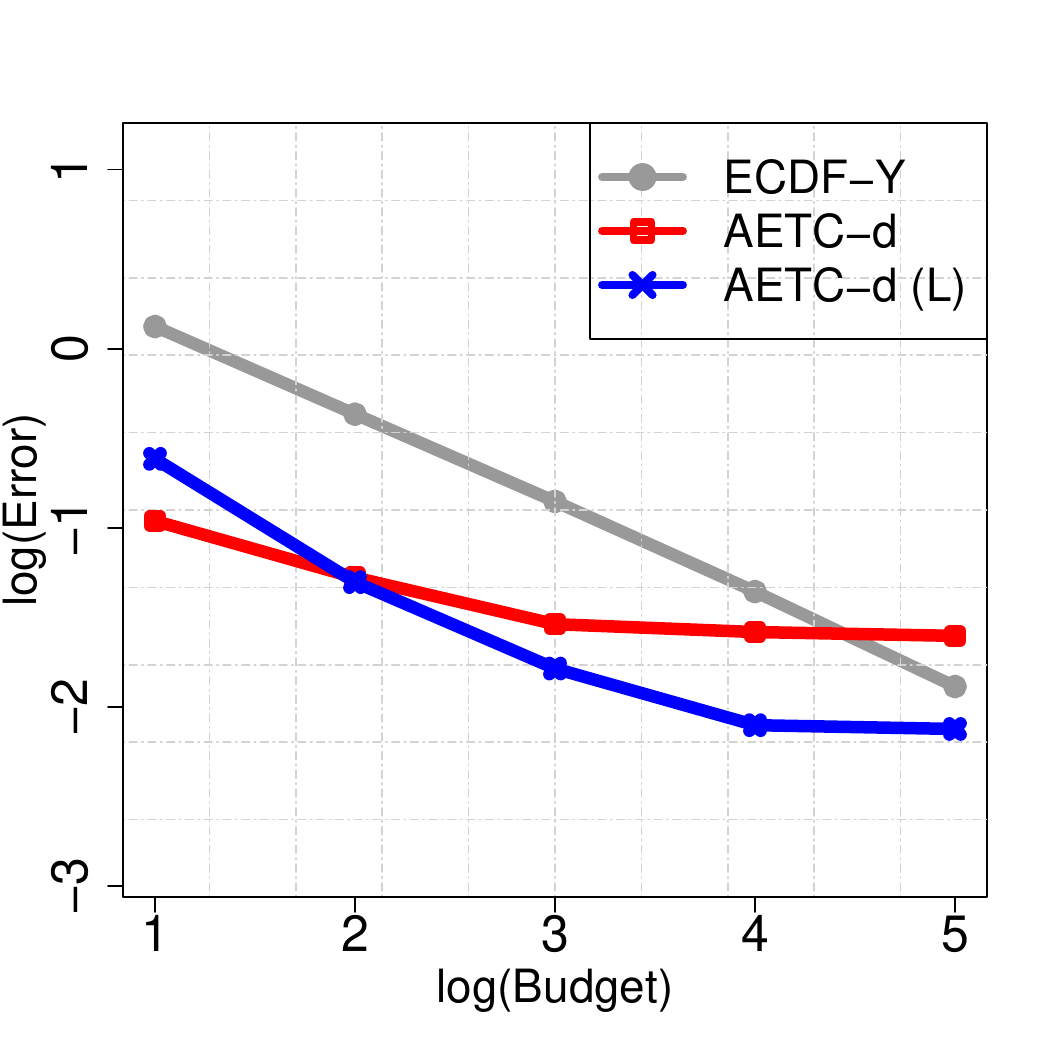}}\hfill 
\subfigure[$(1, 0.001, 0.001)$]{\includegraphics[width=0.25\linewidth]{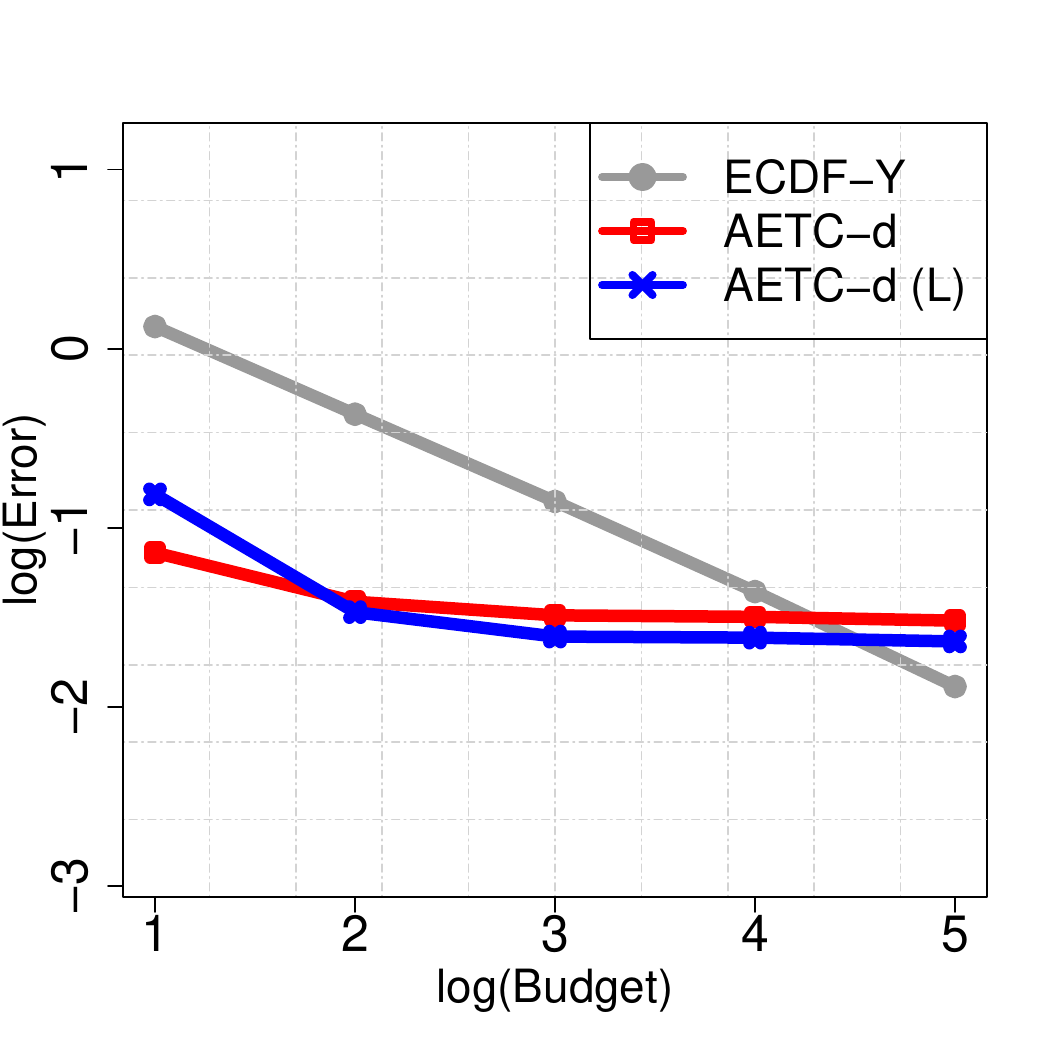}}\hfill 
\caption{Median estimation errors of the estimated CDFs given by ECDF-Y and AETC-d in both model \eqref{canada} (a)-(d) and model \eqref{canada1} (e)-(h) under different cost assumptions.} \label{fig33}
\end{figure}

In Figure \ref{fig33}, the limiting models selected by the AETC-d algorithm in scenarios (a)-(d) are $\{2\}$, $\{1\}$, $\{1\}$, and $\{1\}$, respectively. 
In particular, when model $\{1\}$ becomes almost as expensive as the high-fidelity model, the AETC-d algorithm turns to use the less expensive low-fidelity model to better leverage cost and accuracy. 
When both low-fidelity models become equally cheap, the AETC-d algorithm selects $\{1\}$ only. 
In none of the above scenarios did AETC-d choose $\{1, 2\}$ for exploitation when $B$ is large, which is consistent with the observation $\E[Y|X_1, X_2] = \E[X_1]$, i.e., adding $X_2$ to $X_1$ does not provide additional information in terms of understanding $Y$. 
This suggests that the AETC-d algorithm can filter redundant information among models in model selection. 
Finally, note that the performance of AETC-d in (a) is even slightly better than in (b) despite a larger cost for the chosen low-fidelity models. 
One would expect that model $\{2\}$ is also slightly better than model $\{1\}$ when $(c_0, c_1, c_2) = (1, 0.05, 0.001)$.
(We used the word ``slightly'' because model $\{2\}$ should have a similar efficiency in (a)-(d) since only the cost of $\{1\}$ is varied and the high-fidelity cost dominates.)
This shows that the $G_S$ criteria may select a suboptimal model when the efficiency gap between different models is small. 
AETC-d has a better chance of finding the optimal model when the gap is large (i.e. in (a) and (d)). 

For scenarios (e)-(g), the asymptotically selected models in both the original and enlarged model L are $\{1, 2\}$. 
As a result, they have the same error threshold below which the model misspecification effects start to dominate. 
The major difference lies in the relative efficiency of AETC-d over ECDF-Y. 
For instance, in (g), the cost ratio between the high- and low-fidelity models is large, and the computational gain of AETC-d is more prominent over ECDF-Y compared to (e) and (f) before the error threshold, which, as expected, arrives under a smaller budget. 
For scenario (h), the AETC-d algorithm chooses model $\{1\}$ only. 
For model $\{1\}$, model misspecification occurred at a similar threshold as in $\{1, 2\}$. 
In this case, adding only quadratic and cubic terms of $X_1$ is not sufficient to mitigate the effects (as opposed to adding seven higher order terms for $\{1,2\}$).  
Unfortunately, the AETC-d algorithm makes decisions assuming all the models are correct and finds the most efficient one among them. 

\color{black}

\subsection{Parametrized PDEs}\label{sec:ppde}
In the last experiment, we consider a multifidelity model given by a parametric elliptic equation. 
The setup is taken from \cite{xu2021bandit}.
We consider an elliptic PDE over a square spatial domain $D = [0,1]^2$ that governs displacement in linear elasticity.
The geometry and boundary conditions are shown in Figure~\ref{fig:struct}.

\begin{figure}[htbp]
\begin{center}
\includegraphics[width=0.38\textwidth]{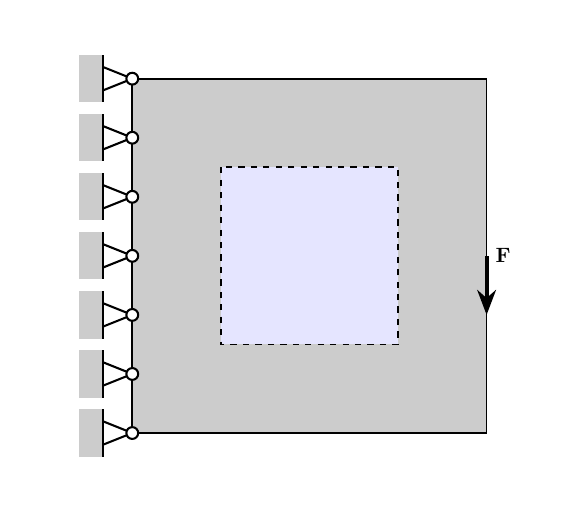}
\caption{\small Geometry and boundary conditions for the linear elastic structure with the square domain.}\label{fig:struct}
  \end{center}
\end{figure}

The parametric version of this problem equation seeks the displacement field $\bs{u} = (u, v)^\top$ that is the solution to the PDE system,
\begin{align*}
  -\nabla \cdot \left(\kappa(\bs{p}, \bs{x})\; \bs{\sigma}(\bs{x},\bs{p})\right) = \bs{F}(\bm x), \hskip 10pt \forall (\bm p,\bm x) \in \mathcal{P} \times D \hskip 20pt\\
       \bs{\sigma} = \begin{bmatrix}
        \sigma_1 & \sigma_{12} \\ 
        \sigma_{12} & \sigma_2
        \end{bmatrix}, \hskip 20pt
         \begin{bmatrix}
          \sigma_1 \\ \sigma_2 \\ \sigma_{12} 
          \end{bmatrix} = \frac{1}{1 - \nu^2} \begin{bmatrix}
           \pfpx{u}{x_1} + \pfpx{\nu}{x_2} \\ \pfpx{v}{x_2} + \nu \pfpx{u}{x_1} \\ \frac{1-\nu}{2} (\pfpx{u}{x_1} + \pfpx{v}{x_2}) 
     \end{bmatrix}
\end{align*}
where $\bm p \in \R^{4}$ is a random vector with independent components uniformly distributed on $[-1,1]$.
We have fixed-displacement boundary conditions on the left wall, with the forcing $\bs{F}$ being nonzero only on the right edge of the structure and equal to the constant 1.
We set the Poisson ratio to $\nu = 0.3$, and $\kappa(\bm p, \bm x)$ is a scalar modeled as a truncated Karhunen-Lo\`{e}ve expansion, given by 
\begin{align*}
  \kappa(\bm p, \bm x) = 1 + 0.5 \sum_{i=1}^{4} \sqrt{\lambda_i} \phi_i(\bm x) p_i,
\end{align*}
where $(\lambda_i, \phi_i)$ are ordered eigenpairs of an exponential covariance kernel on $D$, i.e., 
\begin{align*}
  \mathrm{corr}(\kappa(\bm p, \bm x), \kappa(\bm p, \bm y)) = \exp(-\| \bm x - \bm y\|_1/a),
\end{align*}
where $\|\cdot\|_1$ is the $\ell^1$ norm on vectors, and we choose $a = 0.7$. 
The displacement $\bm u$ is used to compute a scalar QoI, the structural \emph{compliance} or energy norm of the solution, which is the measure of elastic energy absorbed in the structure as a result of loading:
\begin{align}
  E \coloneqq \int_D (\bs{u} \cdot \bs{F}) d \bm x.
\end{align}
We solve the above system for each fixed $\bm p$ via the finite element method with standard bilinear square isotropic finite elements on a rectangular mesh. 

\color{black}

In this example, we form a multifidelity hierarchy through mesh coarsening through mesh parameter $h$.
The model solved with mesh size $h =2^{-7}$ is the high-fidelity model.
We create three low-fidelity models based on more economical discretizations: $h = 2^{-3}, 2^{-2}, 2^{-1}$. 
The outputs of these models are the energy QoI computed from the respective approximate solutions. 

The cost for each model is the computational time, which we take to be inversely proportional to the mesh size squared, i.e., $h^2$. (This corresponds to using a linear solver of optimal linear complexity.)
We normalize cost so that the model with the lowest fidelity has unit cost, i.e., $c_0 = 4096, c_1 = 16, c_2 = 4, c_3 = 1$. 
The correlations between the outputs of $Y$ and $X_1, X_2, X_3$ are $0.940, 0.841, -0.146$, respectively.  
The total budget $B$ ranges from $10^5$ to $10^7$. 

To mitigate potential model misspecification effects, we include second-order interactions between $X_i, i\in [3]$ as additional regressors, i.e., $X_iX_j, i, j\in [3]$.
(We do observe model misspecification effects when $B$ is near $10^7$ without the interaction terms.)
In this case, we have seven different models, namely, 
\begin{align*}
\{i\} &\sim \text{span}\{1, X_i, X_i^2\}& i\in [3]\\
\{i, j\} &\sim \text{span}\{1,X_i, X_j, X_i^2, X_j^2, X_iX_j\}& i, j\in [3], i\neq j\\
\{1,2,3\}&\sim\text{span}\{1, X_1, X_2, X_3, X_1^2, X_2^2, X_3^2, X_1X_2, X_1X_3, X_2X_3\},
\end{align*} 
with exploitation cost defined as the sum of the cost of the low-fidelity models used to build the regressors. 
The oracle $F_Y$ is taken as an empirical CDF constructed from two million independent samples of the high-fidelity model. 
Accuracy results for ECDF-Y and AETC-d are reported in the first plot in Figure \ref{fig4}.
For visualization, we also provide an instance of the estimated densities 
(\texttt{density} function in \texttt{R} \cite{R} with the default bandwidth (data-dependent), and the Gaussian kernel for smoothing) and some statistics (mean, variance, skewness, and kurtosis) given by ECDF-Y and AETC-d using the same training dataset when $B = 10^6$.

\begin{figure}[htbp]
\centering 
\subfigure[]{\includegraphics[width=0.38\linewidth]{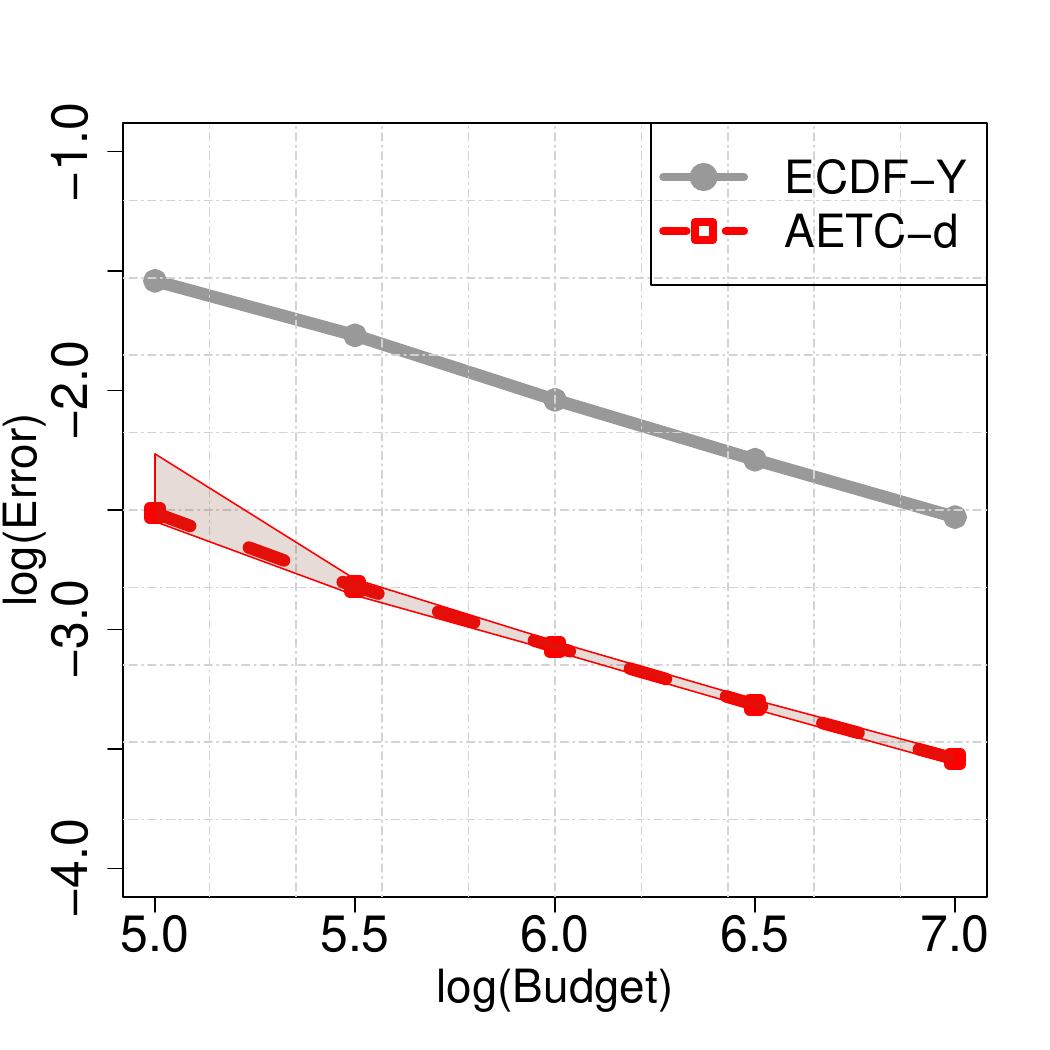}}\hfill
\subfigure[]{\includegraphics[width=0.38\linewidth]{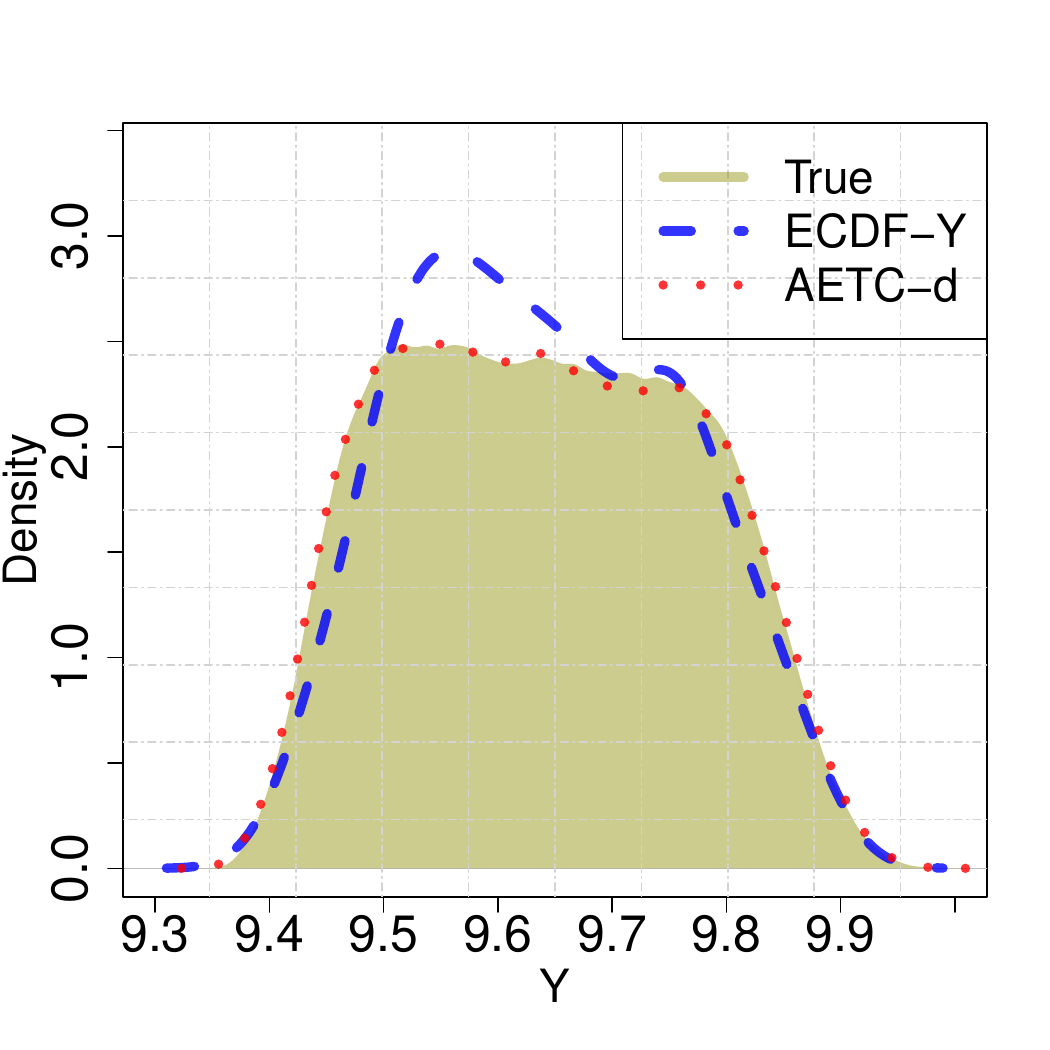}}\hfill
\subfigure[]{\includegraphics[width=0.25\linewidth]{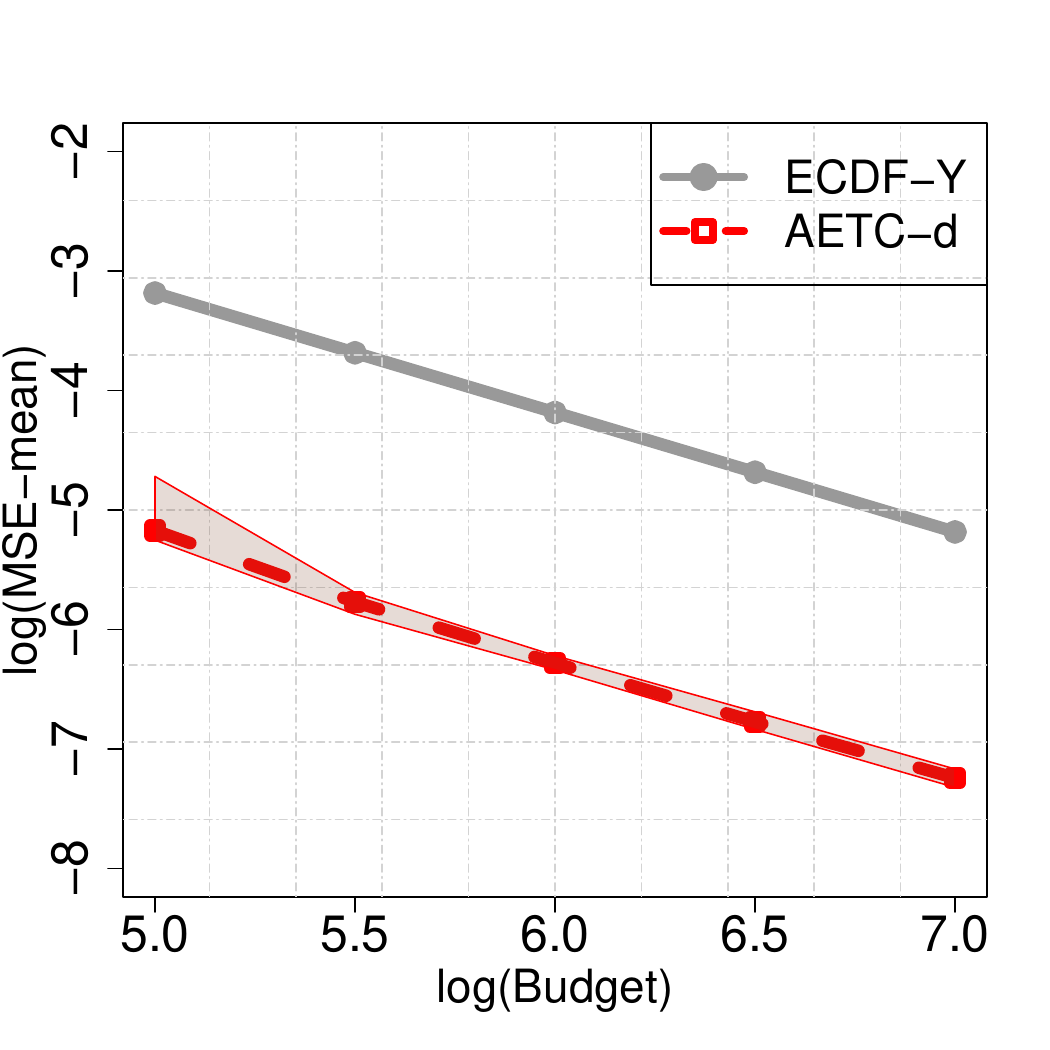}}\hfill
\subfigure[]{\includegraphics[width=0.25\linewidth]{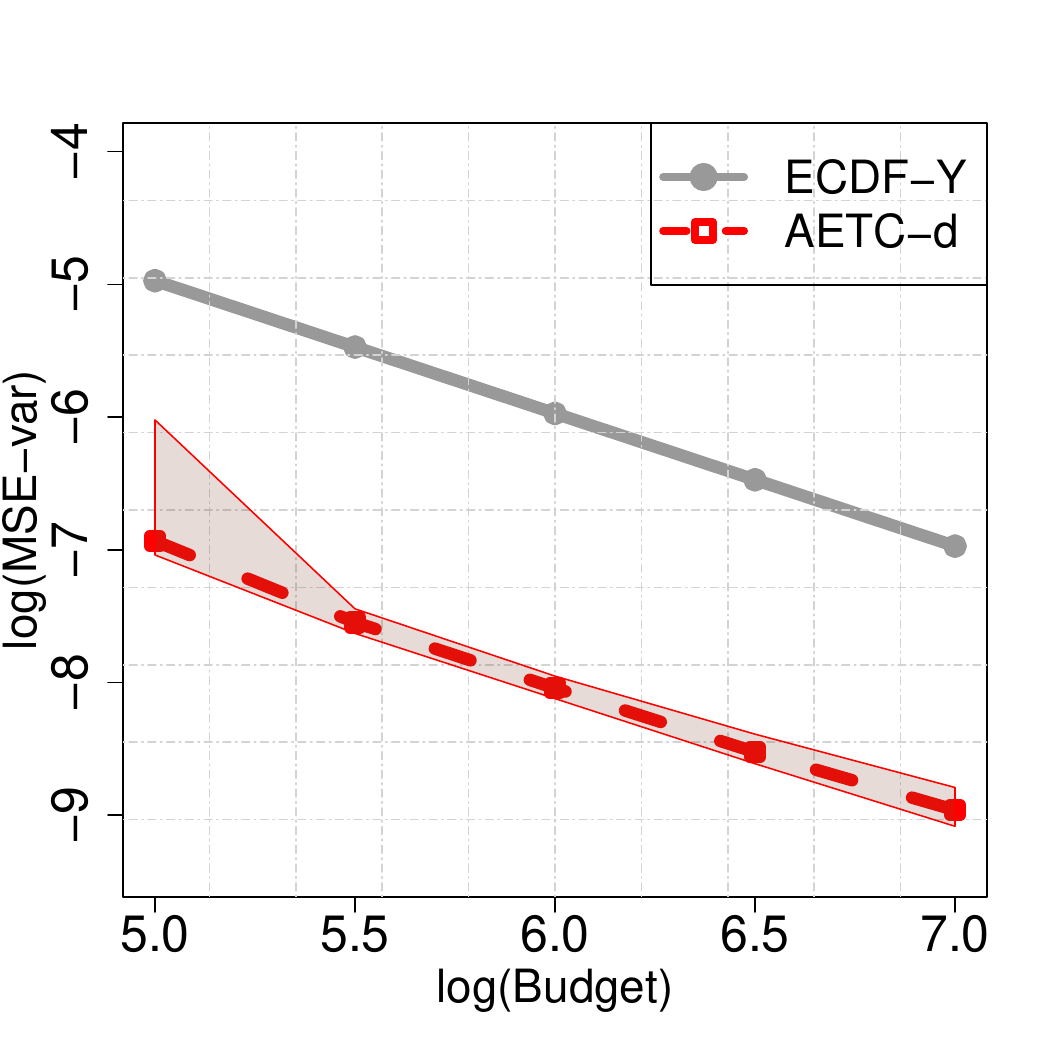}}\hfill
\subfigure[]{\includegraphics[width=0.25\linewidth]{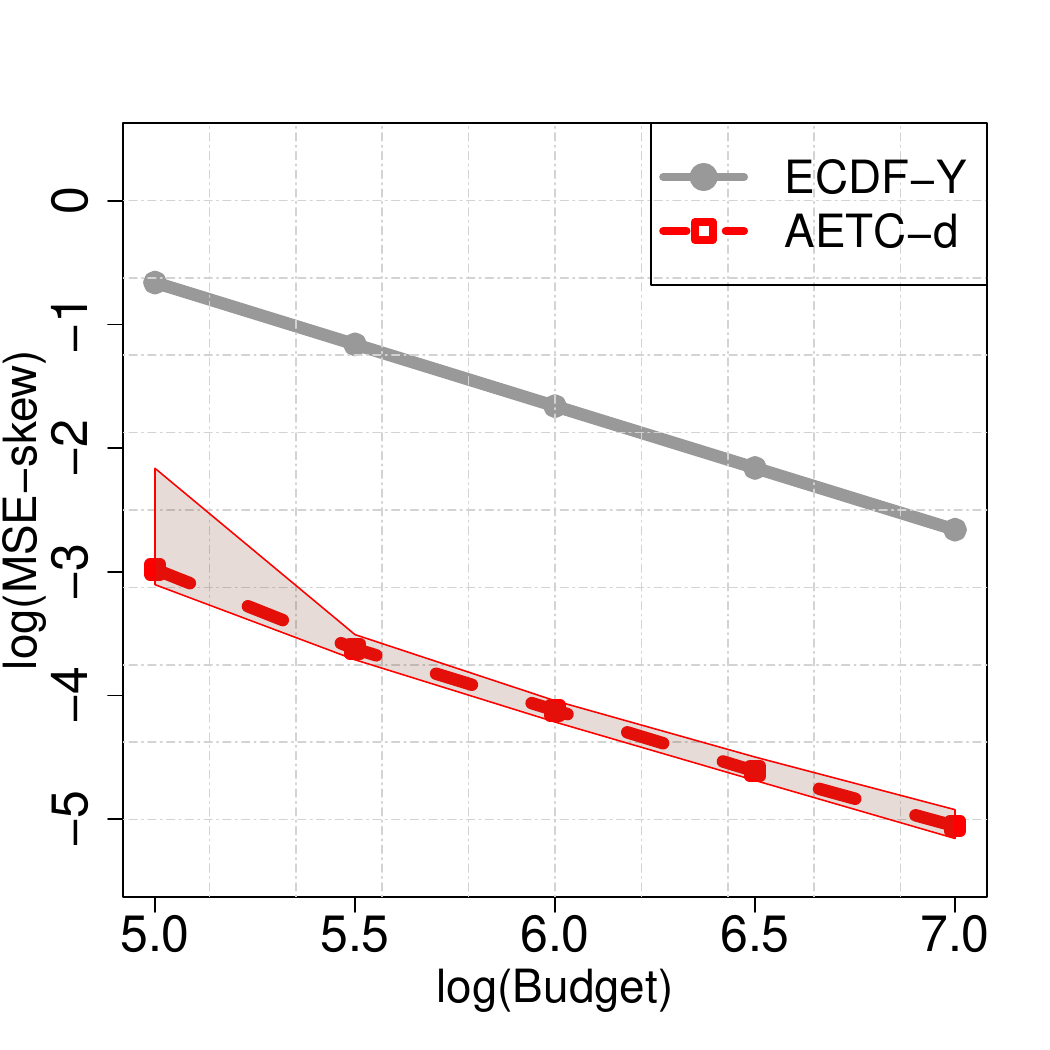}}\hfill
 \subfigure[]{\includegraphics[width=0.25\linewidth]{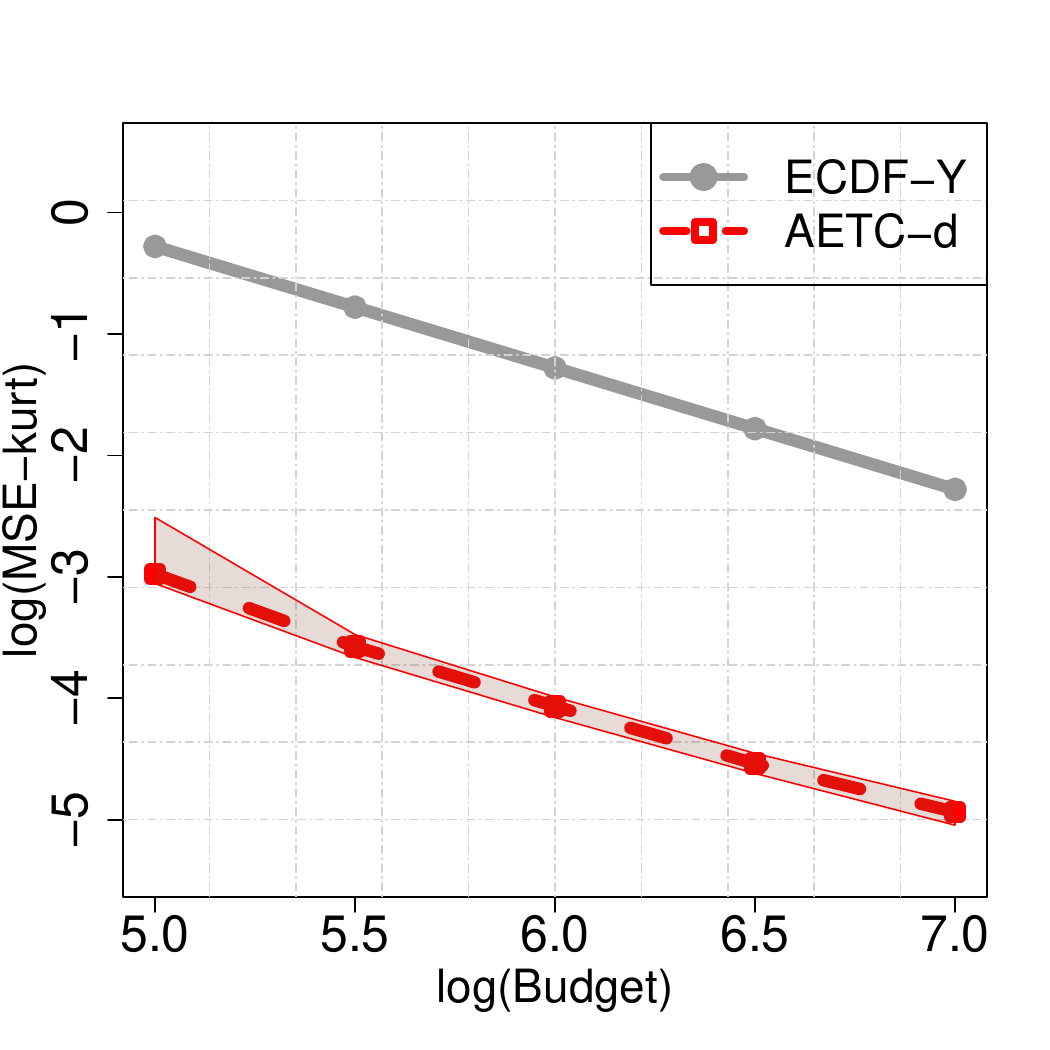}}\hfill
\caption{($\log_{10}$) mean $W_1$ distance between $F_Y$ and the estimated CDFs given by ECDF-Y and AETC-d. The $5\%$-$50\%$-$95\%$ quantiles are plotted for the AETC-d algorithm to measure its uncertainty in the exploration phase (a). An instance of the densities given by ECDF-Y and AETC-d when $B = 10^6$ (b). Comparison of the ($\log_{10}$) mean-squared errors (MSEs) of various estimated statistics by ECDF-Y and AETC-d (bottom); statistics from bottom left to right are the mean (c), variance (d), skewness (e), and kurtosis (f), respectively. } \label{fig4}
\end{figure}

Figure \ref{fig4} shows that within the budget range of this experiment, AETC-d is asymptotically consistent and outperforms ECDF-Y by a substantial margin. 
The statistics computed by AETC-d are accurate up to the fourth order and significantly more accurate than the ones given by simple MC. 
The estimated density given by AETC-d at $B = 10^6$ almost matches the density obtained from two million independent samples of $Y$, which approximately costs $10^{10}$ budget units (and required around a week to generate on our hardware).

We notice that model misspecification seems to have little impact in this experiment. 
A possible explanation is that the selected regressors (which coincide with the regressors of the optimal model $S_\opt = \{1,2,3\}$) are sufficiently expressible for $Y$, and the magnitude of the corresponding model variance is much smaller than the variance of $Y$, making the potential noise misspecification effect negligible under the budget range of this example. 
We substantiate this hypothesis by fitting a linear model using the complete training data, and we approximately compute the variance ratio between the model noise and $Y$, which is $3\times 10^{-5}\ll 1$.

\section{Conclusions}\label{concl}

In this paper, we introduce an efficient strategy for learning the distribution of a scalar-valued QoI in the multifidelity setup.
Under a linear model assumption, we propose a semi-parametric approach for approximating the distribution by leveraging samples of models of different resolutions/costs. 
The main novelty in our analysis is to provide an asymptotically informative and computationally estimable upper bound for the average $1$-Wasserstein distance between the estimator and the true distribution and use it to devise an adaptive algorithm, AETC-d, for efficient budget allocation.  
We show that, for a large budget, the AETC-d is consistent, and explores and exploits optimally under a proposed upper bound criterion. Our setup and algorithm require neither a model hierarchy nor an \textit{a priori} estimate of cross-model correlations.
We also discuss several approaches to mitigating the model misspecification impact when a linear regression assumption is violated. 

Distribution learning is a much harder problem compared to parameter/statistic estimation in general.
Our method takes an initial step towards addressing this problem under a linear model assumption and provides a potentially effective way to fully quantify the uncertainty of the QoI associated with a high-fidelity model. 
Our algorithm is robust when the model noise is relatively small. 
However, practical guidance to ameliorate model misspecification effects in practice is still lacking when the budget is sufficiently large. 
In future work, we will investigate alternative multifidelity CDF emulators that are less sensitive to the linear model assumptions and possess better robustness properties when the model is misspecified.

\section*{Acknowledgement}  
We would like to thank the referees for their time and helpful comments which significantly improved the presentation of the manuscript.
Y. Xu and A. Narayan are partially supported by National Science Foundation DMS-1848508. 
A. Narayan is partially supported by the Air Force Office of Scientific Research award FA9550-20-1-0338. 
Y. Xu would like to thank Dr. Xiaoou Pan for clarifying a uniform consistency result in quantile regression. We also thank Dr. Ruijian Han for a careful reading of an early draft, and for providing several comments that improved the presentation of the manuscript.

\section{Appendices}\label{sec:app}

\begin{appendices}

\section{Proof of Lemma \ref{lemma1}}\label{4.1l}

Let $|S| = s$. 
We first prove \eqref{bdd1}. 
Recall the definition of $Y$ and $Y'$:
\begin{align*}
&Y = X_S^\top \beta_S + \e_S & Y' = X_S^\top \widehat{\beta}_S + \widehat{\e}_S
\end{align*}
  where $\e_S, \widehat{\e}_S$ are independent of $X_S$ by Assumption \ref{ass:eS-independent}.

Define the true empirical distribution of $\e_S$ using exploration samples as 
\begin{align*}
\widetilde{\e}_S\sim \frac{1}{m}\sum_{\ell\in [m]}\delta_{Y_\ell-X_{S,\ell}^\top\beta_S}.
\end{align*}
By the additive property of $W_1$ metric under independence \cite{panaretos2019statistical} and the triangle inequality, we can upper bound the average $W_1$ distance between $F_Y$ and $F_{Y'}$ by conditioning on the exploration data: 
\begin{align}
&\E\left[W_1(F_Y, F_{Y'})|Z_S, Y_{\ex}\right]\nonumber\\
\leq&\  \E[W_1(F_{X_S^\top\beta_S}, F_{X_S^\top \widehat{\beta}_S})|Z_S, Y_{\ex}] +\E[W_1(F_{\e_S}, F_{\widehat{\e}_S})|Z_S, Y_{\ex}]\nonumber\\
\leq&\  \E[W_1(F_{X_S^\top\beta_S}, F_{X_S^\top \widehat{\beta}_S})|Z_S, Y_{\ex}] +\E[W_1(F_{\e_S}, F_{\widetilde{\e}_S})|Z_S, Y_{\ex}] +\E[W_1(F_{\widetilde{\e}_S}, F_{\widehat{\e}_S})|Z_S, Y_{\ex}]\label{0016}.
\end{align}

For the first term in \eqref{0016}, note $\bt_S-\beta_S$ satisfies
\begin{align}
\bt_S-\beta_S\sim\mathcal (0, \sigma_S^2(Z^\top _SZ_S)^{-1}).\label{hp}
\end{align}
Averaging out the randomness of exploration noise, we have that, almost surely, 
\begin{align}
\E[W_1(F_{X_S^\top\beta_S}, F_{X_S^\top \widehat{\beta}_S})|Z_S]&\leq \left(\E[W^2_2(F_{X_S^\top\beta_S}, F_{X_S^\top \widehat{\beta}_S})|Z_S]\right)^{1/2}\nonumber\\
&\leq\left(\E[|X_S^\top (\bt_S-\beta_S)|^2 |Z_S]\right)^{1/2}\nonumber\\
&=\left(\tr(\E[X_S X_S^\top] \E[(\bt_S-\beta_S)(\bt_S-\beta_S)^\top| Z_S])\right)^{1/2}\nonumber\\
& = \left(\frac{\sigma_S^2}{m}\tr(\Lambda_S(m^{-1}Z_S^\top Z_S)^{-1})\right)^{1/2}\nonumber\\
&\simeq \sqrt{\frac{s+1}{m}}\sigma_S,\label{f1}
\end{align}
where the first step uses Jensen's inequality, and the last step follows from the law of large numbers and Assumption \ref{a1}. 

For the second term in \eqref{0016}, note that $\widetilde{\e}_S$ is the empirical distribution of $\e_S$ based on $m$ exploration samples, which does not depend on $Z_S$ under Assumption \ref{ass:eS-independent}.  
Applying the nonasymptotic estimates on the convergence rate of empirical measures in Lemma \ref{BM} obtains
\begin{align}
\E[W_1\left(F_{\e_S}, F_{\widetilde{\e}_S}\right)|Z_S] = \E[W_1\left(F_{\e_S}, F_{\widetilde{\e}_S}\right)]\leq\frac{J_1(F_{\e_S})}{\sqrt{m}},\label{f2}
\end{align}
where $J_1$ is defined in \eqref{J}. 

For the third term in \eqref{0016}, consider the natural coupling between $\widehat{\e}_S$ and $\widetilde{\e}_S$: $\widetilde{\e}^{\leftarrow}_S(\widetilde{\tau}_\ell) = \widehat{\e}^{\leftarrow}_S(\widehat{\tau}_\ell)$, where $^{\leftarrow}$ denotes the preimage of a map and 
\begin{align}
&\widetilde{\tau}_\ell = Y_\ell-X_{S,\ell}^\top\beta_S&\widehat{\tau}_\ell = Y_\ell-X_{S,\ell}^\top\bt_S.\label{lllll}
\end{align}
In this case, 
\begin{align}
\E[W_1(F_{\widetilde{\e}_S}, F_{\widehat{\e}_S})|Z_S]\leq (\E[W^2_2(F_{\widetilde{\e}_S}, F_{\widehat{\e}_S})|Z_S])^{1/2}&\leq (\E[|\widetilde{\e}_S-\widehat{\e}_S|^2|Z_S])^{1/2}\nonumber\\
&=\left(\frac{1}{m}\sum_{\ell\in [m]}\E[(X^\top_{S,\ell}(\bt_S-\beta_S))^2|Z_S]\right)^{1/2}\nonumber\\
& = \sqrt{\frac{s+1}{m}}\sigma_S\label{f22}.
\end{align}
Putting \eqref{f1}, \eqref{f2}, \eqref{f22} together finishes the proof of \eqref{bdd1}.

We next prove \eqref{bdd2}. 
Conditioned on $Z_S$ and $Y_\ex$, $Y'$ is a random variable with bounded $r$-th moments for all $r>2$.
Appealing to Lemma \ref{BM} and averaging over the exploration noise, we have 
\begin{align}
\E\left[W_1\left(\widehat{F}_{Y,S}, F_{Y'}\right)|Z_S, Y_\ex\right]\leq\frac{\E [J_1(F_{Y'})|Z_S, Y_\ex]}{\sqrt{N_S}}\Longrightarrow \E\left[W_1\left(\widehat{F}_{Y,S}, F_{Y'}\right)|Z_S\right]\leq\frac{\E [J_1(F_{Y'})|Z_S]}{\sqrt{N_S}},\label{key1}
\end{align}
where $J_1$ is defined in \eqref{J}. 
The desired result would follow if we can show that $\E[J_1(F_{Y'})|Z_S]$ converges to $J_1(F_{Y})$ a.s. as $m\to\infty$. 
To this end, we introduce the following intermediate random variables:
\begin{align*}
&Y'' = X_S^\top\beta_S + \widetilde{\e}_S&Y''' = X_S^\top\beta_S + \widehat{\e}_S.
\end{align*}
We will prove the desired result by verifying the following convergence statements respectively: 
\begin{enumerate}
\item [(a).] $|\E[J_1(F_{Y})|Z_S]-\E[J_1(F_{Y''})|Z_S]|\to 0$ a.s.; 
\item [(b).] $|\E[J_1(F_{Y''})|Z_S]-\E[J_1(F_{Y'''})|Z_S]|\to 0$ a.s.;
\item [(c).] $|\E[J_1(F_{Y'''})|Z_S]-\E[J_1(F_{Y'})|Z_S]|\to 0$ a.s.
\end{enumerate} 

Without loss of generality, we assume $\text{supp}(\e_S)\subseteq [-1, 1]$ and $\|\beta_S\|_2 = 1$; the general case can be considered similarly by taking appropriate scaling involving a constant $C$ in Assumption \ref{a3}.

We introduce the following quantity for our analysis:
\begin{align}
K_m^* = \max_{\ell\in [m]}\|X_{S,\ell}\|_2.\label{K*}
\end{align} 
It is clear that $K_m^*$ depends only on $Z_S$. 
Under Assumption \ref{a2}, $X_{S,\ell}$'s are i.i.d. sub-exponential random variables with uniformly bounded sub-exponential norm. 
By Lemma \ref{expmax}, 
\begin{align}
&K_m^*\lesssim \log m&a.s.,\label{menghua}
\end{align}
where the implicit constant is realization-dependent.

To prove (a), we condition on $Z_S$ and $Y_\ex$.
Using \eqref{dongge}-\eqref{convo2} in Lemma \ref{cute}, 
\begin{align}
&|\E[J_1(F_{Y})|Z_S, Y_\ex]-\E[J_1(F_{Y''})|Z_S, Y_\ex]|\nonumber\\
\leq&\ \E\left[\int_\R \sqrt{| F_Y(y)-F_{Y''}(y)|} dy| Z_S, Y_\ex\right]\nonumber\\
=&\ \E\left[\int_\R\sqrt{\left|\int_\R F_{X_S^\top\beta_S}(y-z) dF_{\e_S}(z) - \int_\R F_{X_S^\top\beta_S}(y-z)dF_{\widetilde{\e}_S}(z)\right|} dy| Z_S, Y_\ex\right].\label{feier}
\end{align}

Under Assumption \ref{a2}, for every $y$, since $\|\beta_S\|_2 = 1$, $F_{X_S^\top\beta_S}(y-z)$ as a function of $z$, is $C_\li$-Lipschitz. 
By the Kantorovich-Rubinstein duality \eqref{dual}, 
\begin{align}
\left|\int_\R F_{X_S^\top\beta_S}(y-z) dF_{\e_S}(z) - \int_\R F_{X_S^\top\beta_S}(y-z)dF_{\widetilde{\e}_S}(z)\right|\leq C_\li W_1(\e_S, \widetilde{\e}_S).\label{610}
\end{align}
Meanwhile, note $\supp(\e_S) = \supp(\widetilde{\e}_S)\subseteq [-1,1]$. This combined with the fact that $X_S^\top \beta_S$ is sub-exponential implies that  
\begin{align}
&\left|\int_\R F_{X_S^\top\beta_S}(y-z) dF_{\e_S}(z) - \int_\R F_{X_S^\top\beta_S}(y-z)dF_{\widetilde{\e}_S}(z)\right|\nonumber\\
=&\ \left|\int_\R 1-F_{X_S^\top\beta_S}(y-z) dF_{\e_S}(z) - \int_\R 1-F_{X_S^\top\beta_S}(y-z)dF_{\widetilde{\e}_S}(z)\right|\nonumber\\
\leq&\ \frac{1}{2}\left(M_1[F_{X_S^\top\beta_S}](y) + M_1[1-F_{X_S^\top\beta_S}](y)\right)\nonumber\\
\leq&\  \exp\left(-\frac{\max\{|y-1|, |y+1|\}}{C}\right)\nonumber\\
\leq&\ \exp\left(-\frac{|y|}{2C}\right)&|y|\geq 2,\label{611}
\end{align}
where $C$ is an absolute constant depending only on the sub-exponential norm of $\|X_S\|_2$, and $M_1$ is the $1$-local maximum operator in Definition \ref{maxfun}. 
Substituting \eqref{610} and \eqref{611} into \eqref{feier} and applying a truncated estimate, 
\begin{align*}
&|\E[J_1(F_{Y})|Z_S, Y_\ex]-\E[J_1(F_{Y''})|Z_S, Y_\ex]|\nonumber\\
\leq&\  \E\left[\int_{|y|<\max\{2, 4C\log m\}}\sqrt{W_1(\e_S, \widetilde{\e}_S)} dy + \int_{|y|\geq \max\{2, 4C\log m\}}\exp\left(-\frac{|y|}{4C}\right)dy|Z_S, Y_\ex\right]\nonumber\\
\leq&\ \E\left[(4+8C\log m)\sqrt{W_1(\e_S, \widetilde{\e}_S)} + \frac{2}{m}|Z_S, Y_\ex\right].
\end{align*}
Since $\sqrt{W_1(\e_S, \widetilde{\e}_S)}$ is independent of $Z_S$, taking expectation over the exploration noise together with Jensen's inequality and Lemma \ref{BM} yields
\begin{align}
|\E[J_1(F_{Y})|Z_S]-\E[J_1(F_{Y''})|Z_S]|\leq (4+8C\log m)\E[W_1(\e_S, \widetilde{\e}_S)]^{1/2} + \frac{2}{m}\lesssim\frac{\log m}{\sqrt{m}}\to 0. 
\end{align}

To prove (b), note that conditioning on $Z_S$ and $Y_\ex$, the difference between $\widetilde{\tau}_\ell$ and $\widehat{\tau}_\ell$ is bounded as follows:
\begin{align}
&|\widetilde{\tau}_\ell-\widehat{\tau}_\ell|=|X_{S,\ell}^\top(\bt_S-\beta_S)|\leq \|X_{S,\ell}\|_2\|\bt_S-\beta_S\|_2\stackrel{\eqref{K*}}{\leq}K_m^*\delta&\delta = \|\bt_S-\beta_S\|_2,\label{lkng}
\end{align} 
where $\widehat{\tau}_\ell, \widetilde{\tau}_\ell$ are defined in \eqref{lllll}. 
Moreover, since $\text{supp}(\e_S)\subseteq [-1, 1]$, $|\widetilde{\tau}_\ell|\leq 1$.
This combined with \eqref{lkng} implies
\begin{align}
&\supp(\widetilde{\e}_S)\cup \supp(\widehat{\e}_S)\subseteq [-r, r]& r = 1 + K_m^*\delta.\label{major}
\end{align}
The rest is similar to the proof of statement (a),  
\begin{align*}
&|\E[J_1(F_{Y''})|Z_S, Y_\ex]-\E[J_1(F_{Y'''})|Z_S, Y_\ex]|\nonumber\\
\leq&\ \E\left[\int_\R\sqrt{|F_{Y''}(y)-F_{Y'''}(y)|}dy|Z_S, Y_\ex\right]\nonumber\\
\leq&\ \E\left[\int_\R\sqrt{\frac{1}{m}\sum_{\ell\in [m]}|F_{X_S^\top\beta_S}(y-\widehat{\tau}_\ell)-F_{X_S^\top\beta_S}(y-\widetilde{\tau}_\ell)|}dy|Z_S, Y_\ex\right].
\end{align*}
It is easy to verify using the Lipschitz assumption and the tail bound of $X_S^\top \beta_S$ that
\begin{align*}
\frac{1}{m}\sum_{\ell\in [m]}|F_{X_S^\top\beta_S}(y-\widehat{\tau}_\ell)-F_{X_S^\top\beta_S}(y-\widetilde{\tau}_\ell)|&\leq C_\li K_m^*\delta\\
\frac{1}{m}\sum_{\ell\in [m]}|F_{X_S^\top\beta_S}(y-\widehat{\tau}_\ell)-F_{X_S^\top\beta_S}(y-\widetilde{\tau}_\ell)|&\leq \exp\left(-\frac{|y|}{2C}\right)&|y|\geq 2r. 
\end{align*}
Thus,
\begin{align*}
&|\E[J_1(F_{Y''})|Z_S, Y_\ex]-\E[J_1(F_{Y'''})|Z_S, Y_\ex]|\\
\leq&\ \E\left[\int_{|y|<\max\{2r, 4C\log m\}}\sqrt{C_\li K_m^*\delta} dy + \int_{|y|\geq \max\{2r, 4C\log m\}}\exp\left(-\frac{|y|}{4C}\right)dy|Z_S, Y_\ex\right]\\
\leq&\ \E\left[(4r + 8C\log m)\sqrt{C_\li K_m^*\delta} +\frac{2}{m}|Z_S, Y_\ex\right].
\end{align*}
Averaging out exploration noise and applying Jensen's inequality,
\begin{align*}
|\E[J_1(F_{Y''})|Z_S]-\E[J_1(F_{Y'''})|Z_S]|&\lesssim \E\left[\left(K_m^*\delta\right)^{3/2} +\log m \sqrt{K_m^*\delta} + \frac{2}{m}|Z_S\right]\\
&\lesssim (K_m^*)^{3/2}\E[\delta^2|Z_S]^{3/4} + \log m\sqrt{K_m^*}\E[\delta^2|Z_S]^{1/4} + \frac{1}{m}\\
&\stackrel{\eqref{hp}, \eqref{menghua}}{\lesssim}\frac{(\log m)^{3/2}}{m^{1/4}}\to 0& a.s. 
\end{align*}

To prove (c), recall from \eqref{major} that conditioning on $Z_S$ and $Y_\ex$, $\supp(\widehat{\e}_S)\subseteq [-r ,r]$. 
Applying Lemma \ref{cute}, 
\begin{align}
|\E[J_1(F_{Y'''})|Z_S, Y_\ex]-\E[J_1(F_{Y'})|Z_S, Y_\ex]|\leq \E\left[\left\|M_{r}[|F_{X_S^\top\bt_S}-F_{X_S^\top\beta_S}|]\right\|^{1/2}_{L^{1/2}_\R}| Z_S, Y_\ex\right].\label{jh3}
\end{align}
If $\delta<1/2$, then $1/2<\|\bt_S\|_2< 3/2$. 
In this case, the $C_1, C_2, C_3$ in Lemma \ref{kouniao} are absolute constants. 
According to Lemma \ref{kouniao} with $p = 1/2$, 
\begin{align}
\left\|M_{r}[|F_{X_S^\top\bt_S}-F_{X_S^\top\beta_S}|]\right\|^{1/2}_{L^{1/2}_\R}\lesssim (r+ 1)\delta^{5/12}\log\left(1/\delta\right)\leq (r+1)\delta^{1/4}, \label{jh1}
\end{align}
where we used $\log (1/\delta)<\delta^{-1/6}$ when $\delta\leq 1/2$.

If $\delta\geq 1/2$, the same result in Lemma \ref{kouniao} implies
\begin{align}
\left\|M_{r}[|F_{X_S^\top\bt_S}-F_{X_S^\top\beta_S}|]\right\|^{1/2}_{L^{1/2}_\R}\lesssim (r + 1+\delta)\delta. \label{jh2}
\end{align}
Substituting \eqref{jh1} and \eqref{jh2} into \eqref{jh3} yields that 
\begin{align*}
|\E[J_1(F_{Y'''})|Z_S, Y_\ex]-\E[J_1(F_{Y'})|Z_S, Y_\ex]|\lesssim\E[(r+1)\delta^{1/4} + (r + 1+\delta)\delta | Z_S, Y_\ex].
\end{align*}
Taking expectation over the exploration noise and applying Jensen's inequality,
\begin{align*}
&|\E[J_1(F_{Y'''})|Z_S]-\E[J_1(F_{Y'})|Z_S]|\\
\lesssim&\ \E[\delta^{1/4}|Z_S] + \E[\delta|Z_S]+K_m^*\E[\delta^{5/4}|Z_S]+(K_m)^2\E[\delta^2|Z_S]\\
\lesssim&\ \E[\delta^2|Z_S]^{1/8} + \E[\delta^2|Z_S]^{1/2}+K_m^*\E[\delta^{2}|Z_S]^{5/8}+(K_m)^2\E[\delta^2|Z_S]\\
\stackrel{\eqref{hp}, \eqref{K*}}{\lesssim}&\ \frac{1}{m^{1/8}}\to 0& a.s.
\end{align*}
\eqref{bdd2} is proved by combining statements (a), (b), (c).

\section{A quantile regression framework}\label{app5}
Quantile regression offers an alternative approach to simulating $Y$ through a random coefficient interpretation \cite{Koenker}. 
For any $S\subseteq  [n]$ and $\tau\in (0,1)$, we assume the conditional $\tau$-th quantile of $Y$ on $X_S$ satisfies
\begin{align}
F^{-1}_{Y|X_S}(\tau) = X_S^\top \beta_S(\tau),\label{4}
\end{align}
where $\beta_S(\tau)$ the $\tau$-th coefficient vector. 
\eqref{4} is a standard quantile regression formulation, and can be used to model heteroscedastic noise effects. \begin{align*}
&\widehat{\beta}_S(\tau) = \argmin_{\beta\in\R^{s+1}}\frac{1}{m}\sum_{\ell\in [m]}\rho_\tau(Y_\ell - X^\top _{\ex,\ell}\beta)&\rho_\tau(x) = x(\tau - \bm{1}_{x<0}).
\end{align*}
Thus, \eqref{4} approximately equals
\begin{align}
\widehat{F}^{-1}_{Y|X_S}(\tau) = X_S^\top \widehat{\beta}_S(\tau).\label{44}
\end{align}
As opposed to \eqref{lr:est}, \eqref{44} provides a way to simulate $Y$ based on $X_S$ via inverse transform sampling:
\begin{align}
&Y \approx X_S^\top \widehat{\beta}_S(U).\label{qtl}&U\sim\text{Unif}(0,1)\independent X_S.
\end{align}
In our case, $X_{\ex, \ell}, \ell\in [m]$ are i.i.d. samples so \eqref{4} fits into a random design quantile regression framework as analyzed in \cite{Pan_2020}, where the authors established a strong consistency result for $\widehat{\beta}_S(\tau)$ under suitable conditions. 
The consistency result can be further proven to hold uniformly for all $\tau\in [\delta,1-\delta]$ for any fixed $\delta>0$, which justifies the asymptotic behavior of the procedure in \eqref{44} as $m, N_S\to\infty$.

In the quantile regression framework, obtaining the optimal choices for $m$ and $S$ is much harder than in the linear regression setup. 
The AETC-d-q algorithm in Section \ref{sec:num} implements \eqref{qtl} with $m$ set as the adaptive exploration rate given by the AETC-d, $S$ as the corresponding model output for exploitation, and $U$ approximated via $\frac{1}{K}\sum_{j\in [K]}\delta_{\frac{j}{K+1}}$ with $K=100$. 

\end{appendices}

\printbibliography
  
\end{document}